\newif\ifcomments
\definecolor{ForestGreen}{rgb}{0.1333,0.5451,0.1333}
\definecolor{DarkRed}{rgb}{0.65,0,0}
\definecolor{Red}{rgb}{1,0,0}
\declaretheorem[numberwithin=section]{theorem}
\declaretheorem[numberlike=theorem]{lemma}
\declaretheorem[numberlike=theorem]{proposition}
\declaretheorem[numberlike=theorem]{corollary}
\declaretheorem[numberlike=theorem]{conjecture}
\declaretheorem[numberlike=theorem]{claim}
\declaretheorem[numberlike=theorem,style=definition]{definition}
\declaretheorem[numberlike=theorem,style=definition]{remark}
\global\long\def\polylog{\mathrm{polylog}}
\global\long\def\polyloglog{\mathrm{polyloglog}}
\global\long\def\ET{\mathrm{ET}}
\global\long\def\ssum{{\rm sum}}
\global\long\def\on{{\rm on}}
\global\long\def\off{{\rm off}}
\global\long\def\Table{\mathsf{Table}}
\global\long\def\aff{{\rm aff}}
\global\long\def\new{{\rm new}}
\global\long\def\Boruvka{Bor{\r u}vka}
\global\long\def\act{\rm act}
\global\long\def\CountAll{\mathsf{CountAll}}
\global\long\def\CountA{\mathsf{CountA}}
\global\long\def\CountB{\mathsf{CountB}}
\global\long\def\init{\rm init}
\newcommand{\wtilde}{\widetilde}
\newcommand{\what}{\widehat}
\newcommand{\Otil}{\widetilde{O}}
\newcommand{\Ohat}{\widehat{O}}
\newcommand{\Omegahat}{\hat{\Omega}}
\global\long\def\poly{\mathrm{poly}}
\global\long\def\A{\mathcal{A}}
\global\long\def\Table{\mathsf{Table}}
\global\long\def\val{\mathrm{val}}
\global\long\def\it{\mathrm{int}}
\global\long\def\drop{\mathrm{drop}}
\global\long\def\SFDecomp{\textsc{SFDecomp}}
\global\long\def\bool{\rm bool}
\newcommand{\ignore}[1]{}
\title{Better Decremental and Fully Dynamic Sensitivity Oracles for Subgraph Connectivity}
\author{Yaowei Long\thanks{yaoweil@umich.edu}\\ University of Michigan\and Yunfan Wang\thanks{yunfan-w20@mails.tsinghua.edu.cn}\\ Tsinghua University}
\begin{document}

\maketitle

\begin{abstract}
We study the \emph{sensitivity oracles problem for subgraph connectivity} in the \emph{decremental} and \emph{fully dynamic} settings. In the fully dynamic setting, we preprocess an $n$-vertices $m$-edges undirected graph $G$ with $n_{\off}$ deactivated vertices initially and the others are activated. Then we receive a single update $D\subseteq V(G)$ of size $|D| = d \leq d_{\star}$, representing vertices whose states will be switched. Finally, we get a sequence of queries, each of which asks the connectivity of two given vertices $u$ and $v$ in the activated subgraph. The decremental setting is a special case when there is no deactivated vertex initially, and it is also known as the \emph{vertex-failure connectivity oracles} problem.

We present a better deterministic vertex-failure connectivity oracle with $\what{O}(d_{\star}m)$ preprocessing time, $\wtilde{O}(m)$ space, $\wtilde{O}(d^{2})$ update time and $O(d)$ query time, which improves the update time of the previous almost-optimal oracle \cite{long2022near} from $\what{O}(d^{2})$ to $\wtilde{O}(d^{2})$. 

We also present a better deterministic fully dynamic sensitivity oracle for subgraph connectivity with $\what{O}(\min\{m(n_{\off} + d_{\star}),n^{\omega}\})$ preprocessing time, $\wtilde{O}(\min\{m(n_{\off} + d_{\star}),n^{2}\})$ space, $\wtilde{O}(d^{2})$ update time and $O(d)$ query time, which significantly improves the update time of the state of the art \cite{HKP23} from $\wtilde{O}(d^{4})$ to $\wtilde{O}(d^{2})$. Furthermore, our solution is even almost-optimal assuming popular fine-grained complexity conjectures.

\end{abstract}

\clearpage

\section{Introduction}

We study the \emph{sensitivity oracles problem for subgraph connectivity} in the \emph{decremental} and \emph{fully dynamic} settings, which is one of the fundamental dynamic graph problems in undirected graphs. In the fully dynamic setting, this problem has three phases. In the preprocessing phase, given an integer $d_{\star}$, we preprocess an $n$-vertices $m$-edges undirected graph $G=(V,E)$ in which some vertices are \emph{activated}, called \emph{on-vertices} and denoted by $V_{\on}$, while the others are \emph{deactivated}, called \emph{off-vertices} and denoted by $V_{\off}$. We let $n_{\on} = |V_{\on}|$ and $n_{\off} = |V_{\off}|$ denote the number of initial on-vertices and off-vertices respectively. In the update phase, we will receive a set $D\subseteq V$ with $|D| = d\leq d_{\star}$, representing vertices whose states will be switched, and we update the oracle. In the subsequent query phase, let 
$V_{\new} = (V_{\on}\setminus D)\cup(V_{\off}\cap D)$ denote the activated vertices after the update. Each query will give a pair of vertices $u,v\in V_{\new}$ and ask the connectivity of $u$ and $v$ in the new activated subgraph $G[V_{\new}]$. The decremental setting is a special case where there is no off-vertices initially, i.e. $V_{\off}$ is empty.

The decremental version of this problem is also called the \emph{vertex-failure connectivity oracles} problem, which has been studied extensively, e.g. \cite{duan2010connectivity,duan2020connectivity,BrandS19,PilipczukSSTV22,long2022near,Kos23}, and its complexity was well-understood up to subpolynomial factors. Specifically, a line of works by Duan and Pettie \cite{duan2010connectivity,duan2020connectivity} started the study on vertex-failure connectivity oracles for general $d_{\star}$, and they finally gave a deterministic oracle with $\wtilde{O}(mn)$ preprocessing time, $\wtilde{O}(md_{\star})$ space, $\wtilde{O}(d^{2})$ update time and $O(d)$ query time. Following \cite{duan2010connectivity,duan2020connectivity}, Long and Saranurak \cite{long2022near} presented an improved solution with $\what{O}(m) + \wtilde{O}(md_{\star})$ preprocessing time, $\wtilde{O}(m)$ space, $\what{O}(d^{2})$ update time and $O(d)$ query time\footnote{Throughout the paper, we use $\Otil(\cdot)$ to hide a $\polylog(n)$
factor and use $\Ohat(\cdot)$ to hide a $n^{o(1)}$ factor.}, which is optimal up to subpolynomial factors because matching (conditional) lower bounds for all the four complexity measurements were shown in \cite{HenzingerKNS15,duan2020connectivity,long2022near}. We refer to \Cref{tab:compare} for more solutions to this problem (for example, Kosinas \cite{Kos23} proposed a simple and practical algorithm using a conceptually different approach). However, there are still relatively large subpolynomial overheads on the current almost-optimal upper bounds (i.e., on the preprocessing time and update time of the LS-oracle), so a natural question is whether we can improve them:
\begin{center}
\emph{Can we design an almost-optimal deterministic vertex-failure connectivity oracle with only\\ polylogarithmic overheads on the update time, or even on all complexity bounds?}
\end{center}

The fully dynamic sensitivity oracles problem for subgraph connectivity was studied by \cite{henzinger2016incremental,HKP23}. Henzinger and Neumann \cite{henzinger2016incremental} showed a black-box reduction from the decremental setting. Plugging in the almost-optimal decremental algorithm of \cite{long2022near}, this reduction leads to a fully dynamic sensitivity oracle with $\what{O}(n_{\off}^{2}m) + \wtilde{O}(d_{\star}n_{\off}^{2}m)$ preprocessing time, $\wtilde{O}(n_{\off}^{2}m)$ space, $\what{O}(d^{4})$ update time and $O(d^{2})$. Hu, Kosinas and Polak \cite{HKP23} studied this problem from an equivalent but different perspective called \emph{connectivity oracles for predictable vertex failures}, which gave a solution with $\wtilde{O}((n_{\off} + d_{\star})m)$ preprocessing time, $\wtilde{O}((n_{\off} + d_{\star})m)$ space, $\wtilde{O}(d^{4})$ update time and $O(d)$ query time\footnote{In \cite{HKP23}, they modeled the problem using slightly different parameters, and here we describe their bounds using our parameters. Basically, they defined a parameter $d'$ (named $d$ in their paper) in the preprocessing phase and $\eta$ in the update phase. Their $\eta$ is equivalent to our $d$. Besides, fixing $d_{\star}$ and $n_{\off}$, our input instances can be reduced to theirs with $d'$ at most $n_{\off} + d_{\star}$. In the other direction, fixing $d'$, their input instances can be reduced to ours with $d_{\star} + n_{\off}$ at most $3d'$. Furthermore, their space complexity was not specified, but to our best knowledge, it should be roughly proportional to their preprocessing time.}. Despite the efforts, there are still gaps between the upper bounds of the fully dynamic setting and the decremental setting (except the query time). Naturally, one may have the following question:
\begin{center}
\emph{Can we match the upper bounds in the fully dynamic and decremental settings\\ or show separations between them for all the four measurements?}
\end{center}
Notably, \cite{HKP23} showed a conditional lower bound $\what{\Omega}((n_{\off} + d_{\star})m)$ on the preprocessing time\footnote{This lower bound is obtained from input instances with $n_{\off} = \Theta(n)$ and $m$ is roughly linear to $n$. See the discussion in \Cref{remark:} in \Cref{sect:LB}.}, which separated two settings at the preprocessing time aspect. However, the right complexity bounds are still not clear for the space and the update time. In particular, given that two different approaches \cite{henzinger2016incremental,HKP23} both showed upper bounds around $d^{4}$ for update time, it is interesting to identify if this is indeed a barrier or it just happened accidentally. Furthermore, we note that it seems hard to improve the update time following either of these two approaches, because the black-box reduction by \cite{henzinger2016incremental} has been plugged in the almost-optimal decremental oracle, and the fully dynamic oracle by \cite{HKP23} generalizes the decremental oracle by \cite{Kos23}, where the latter already has $\wtilde{O}(d^{4})$ update time.

\subsection{Our Results}

We give a partially affirmative answer to the first question and answer the second question affirmatively by the following results.

\paragraph{The Decremental Setting.} We show a better vertex-failure connectivity oracle by improving the $\what{O}(d^{2})$ update time of the current almost-optimal solution \cite{long2022near} to $\wtilde{O}(d^{2})$. See \Cref{coro:VFConnOracle} for a detailed version of \Cref{thm:DecrementalMain}.

\begin{theorem}
There exists a deterministic vertex-failure connectivity oracle with $\what{O}(m) + \wtilde{O}(d_{\star}m)$ preprocessing time, $\wtilde{O}(m)$ space, $\wtilde{O}(d^{2})$ update time and $O(d)$ query time.
\label{thm:DecrementalMain}
\end{theorem}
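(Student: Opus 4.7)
The starting point is the Long--Saranurak (LS) oracle, which already meets every bound in \Cref{thm:DecrementalMain} except for an $n^{o(1)}$ overhead in the update time. The plan is to reuse the overall LS architecture and surgically replace the components responsible for the subpolynomial factor with polylogarithmic-overhead counterparts, without disturbing the $\what{O}(m)+\wtilde{O}(d_{\star}m)$ preprocessing, the $\wtilde{O}(m)$ space, or the $O(d)$ query time.

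At a high level, the LS approach maintains, on top of a precomputed hierarchical structure, an implicit spanning forest $F$ whose ``affected'' part under a batch $D$ of $d$ failing vertices can be extracted quickly, and then stitches the remaining pieces of $F$ together through a small auxiliary reduced graph on the affected super-nodes. The combinatorial cost $\Theta(d^{2})$ is inherent, since one may need to test $O(d^{2})$ pairs of super-nodes for adjacency; everything else is per-operation overhead. My first step is to dissect the LS update into its primitive subroutines: (i) locating the $O(d)$ pieces of $F$ that are affected by $D$; (ii) evaluating path/subtree queries on $F$ such as predecessor, LCA and range-sum along ancestor paths in the hierarchy; and (iii) constructing and running connectivity on the auxiliary reduced graph of size $O(d)$. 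Each of these subroutines currently pays an $n^{o(1)}$ factor, and the goal is to reduce each of them independently to $\polylog(n)$.

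For (i) and (ii), the plan is to replace the subpolynomial-depth hierarchy used by LS with one whose depth is $O(\polylog n)$, so that any walk from a leaf to the root costs $\polylog(n)$ rather than $n^{o(1)}$, and to support the path queries via standard top-tree / link-cut-tree primitives which are inherently polylogarithmic. For (iii), I plan to batch the $\wtilde{O}(d^{2})$ adjacency-list queries needed by the reduced graph so that their amortized cost is $\polylog(n)$ each rather than $n^{o(1)}$ each, and then run an off-the-shelf connectivity routine on the resulting $O(d)$-node, $O(d^{2})$-edge graph in $\wtilde{O}(d^{2})$ time. Together these changes bring the update cost down to $\wtilde{O}(d^{2})$, while the query algorithm of LS (which is already $O(d)$) can be reused verbatim.

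The main obstacle I anticipate is reconciling two tensions. First, the LS hierarchy is carefully tuned so that the total size of cross-level edges telescopes to $\wtilde{O}(m)$, and switching to a shallower decomposition may weaken the combinatorial guarantees (sparseness of cross edges, quality of the low-conductance cuts) that LS used to bound the number of affected pieces and the storage; I expect to need a careful reanalysis, and am willing to push any unavoidable $n^{o(1)}$ slowdown into the preprocessing phase, where the $\what{O}(m)$ term provides slack. Second, keeping the whole construction deterministic forces a deterministic replacement at every step, which may require additional bookkeeping to avoid paying $n^{o(1)}$ through the back door (e.g., via deterministic expander constructions used only in preprocessing). If these issues are handled, the three other measurements remain unchanged and the update time tightens from $\what{O}(d^{2})$ to $\wtilde{O}(d^{2})$, as claimed.
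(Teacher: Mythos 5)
The general shape of your plan — keep the LS architecture, push any residual $n^{o(1)}$ into preprocessing where the $\what{O}(m)$ term has slack, and squeeze the per-update overhead to $\polylog$ — is the right instinct, but the diagnosis of where the $n^{o(1)}$ actually lives is off, and that is exactly the step you cannot skip. The LS low-degree hierarchy already has $p=O(\log n)$ levels, so replacing a ``subpolynomial-depth hierarchy'' by a polylog-depth one is attacking a cost that is already polylogarithmic. The real bottleneck is the \emph{degree parameter} $\Delta$ of the Steiner trees in the hierarchy: in LS it is $\Delta=n^{o(1)}$, because those trees are extracted from a vertex-expander decomposition, and the degree is inherited from the $n^{o(1)}$ vertex-congestion of the expander embedding. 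Under $d$ failures each Steiner tree fragments into $O(\Delta d)$ intervals, and the Bor\r{u}vka update pays roughly $(\Delta d)^2$, so $\Delta = n^{o(1)}$ is precisely what leaves an $n^{o(1)}$ on top of $d^2$. Link-cut trees, LCA, shallower recursion and so on do not touch this; and keeping a deterministic expander construction around, as you suggest, keeps the $n^{o(1)}$ degree as well.

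The missing idea in the paper is that \emph{expansion is overkill}: to certify a low-degree Steiner tree one only needs to embed a \emph{connected} witness graph $W$ into $G$ with low vertex congestion (a connected low-congestion embedding already yields a low-degree subgraph containing all terminals, and any spanning tree of it is the Steiner tree). The paper therefore runs a cut-matching game whose cut player merely tracks the connected components of the current witness and partitions them into two balanced sides; after $O(\log n)$ rounds a giant component forms, by a potential argument on $\sum_Q |Q|\log|Q|$. Since each round embeds a matching with congestion $O(1/\phi)$ and there are $O(\log n)$ rounds with $\phi=\Theta(1/\log n)$, the final congestion — and hence $\Delta$ — is $O(\log^{2} n)$. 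Substituting this hierarchy into the unchanged LS interval/Bor\r{u}vka/batched-adjacency machinery (Lemma~6.14, Theorem~7.2, Section~7.3 of LS) gives the $\wtilde{O}(d^{2})$ update time, while the only $n^{o(1)}$ term left is inside the matching player's approximate vertex-capacitated max-flow, and that one does land safely in the $\what{O}(m)$ preprocessing. Your proposal never reaches the ``connectivity instead of expansion'' step, so it does not explain how $\Delta$ drops from $n^{o(1)}$ to $\polylog$, and without that drop the claimed $\wtilde{O}(d^{2})$ update time does not follow.
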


Same as all the previous vertex-failure connectivity oracles, we can substitute all the $m$ factors in \Cref{thm:DecrementalMain} with $\bar{m} = \min\{m,n(d_{\star}+1)\}$ using a standard sparsification by Nagamochi and Ibaraki \cite{NI92} at a cost of an additional $O(m)$ preprocessing time.

\begin{table}[h]
\footnotesize{

\begin{tabular}{|>{\raggedright}p{0.18\textwidth}|c|c|c|c|c|}
\hline 
 & \makecell{Det./\\Rand.} & Space & Preprocessing & Update & Query\tabularnewline
\hline 
\hline 
Block trees, \\
SQRT trees, and \cite{kanevsky1991line} \\
only when $d_{\star}\le3$ & Det. & $O(n)$ & $\Otil(m)$ & $O(1)$ & $O(1)$\tabularnewline
\hline 
\makecell[l]{Duan \& Pettie\\ \cite{duan2010connectivity} for $c\ge1$} & Det. & \makecell{linear in\\ preprocessing time} & $\Otil(m d_{\star}^{1-\frac{2}{c}}n^{\frac{1}{c}-\frac{1}{c\log(2d_{\star})}})$ & $\Otil(d^{2c+4})$ & $O(d)$\tabularnewline
\hline 
\multirow{2}{0.18\textwidth}{Duan \& Pettie \cite{duan2020connectivity}} & Det. & $O(m d_{\star}\log n)$ & $O(m n\log n)$ & $O(d^{3}\log^{3}n)$ & $O(d)$\tabularnewline
\cline{2-6} 
 & Rand. & $O(m\log^{6}n)$ & $O(m n\log n)$ & $\bar{O}(d^{2}\log^{3}n)$ w.h.p. & $O(d)$\tabularnewline
\hline 
Brand \& Saranurak \cite{BrandS19}
& Rand. & $O(n^{2})$ & $O(n^{\omega})$ & $O(d^{\omega})$ & $O(d^{2})$\tabularnewline
\hline 
\multirow{2}{0.18\textwidth}{Pilipczuk et al. \cite{PilipczukSSTV22}} & Det. & $m2^{2^{O(d_{\star})}}$ & $m n^{2}2^{2^{O(d_{\star})}}$ & $2^{2^{O(d_{\star})}}$ & $2^{2^{O(d_{\star})}}$\tabularnewline
\cline{2-6} 
 & Det. & $n^{2}\poly(d_{\star})$ & $\poly(n)2^{O(d_{\star}\log d_{\star})}$ & $\poly(d_{\star})$ & $\poly(d_{\star})$\tabularnewline
\hline 
\multirow{2}{0.18\textwidth}{Long \& Saranurak \cite{long2022near}} 
& Det. & $O(m\log^{3}n)$ & $O(m n\log n)$ & $\bar{O}(d^{2}\log^{3}n\log^{4}d)$ & $O(d)$\tabularnewline
\cline{2-6}
& Det. & $O(m\log^{*}n)$ & $\Ohat(m) + \Otil(d_{\star}m)$ & $\Ohat(d^{2})$ & $O(d)$\tabularnewline
\hline 
Kosinas \cite{Kos23} & Det. & $O(d_{\star}m\log n)$ & $O(d_{\star}m\log n)$ & $O(d^{4}\log n)$ & $O(d)$ \tabularnewline
\hline
\textbf{This paper} & Det. & $O(m\log^{3} n)$ & $\what{O}(m) + O(d_{\star}m\log^{3} n)$ & $O(d^{2}(\log^{7}n + \log^{5}n\log^{4}d))$ & $O(d)$ \tabularnewline
\hline
\end{tabular}

}

\caption{Complexity of known vertex-failure connectivity oracles. All the $m$ factors can be replaced by $\bar{m}=\min\{m,n(d_{\star}+1)\}$ at a cost of an additional $O(m)$ preprocessing time. The randomized algorithms are all Monte Carlo.  The notation $\bar{O}(\cdot)$ hides a $\polyloglog(n)$ factor.\label{tab:compare}
}
\end{table}

We emphasize that our result is a strict improvement on \cite{long2022near}. In addition to the improvement on update time, our algorithm also improves the hidden subpolynomial overheads on the preprocessing time. We achieve this by giving a new construction algorithm of the \emph{low degree hierarchy}, a graph decomposition technique widely used in this area \cite{duan2020connectivity,long2022near,PPP23}. Roughly speaking, the previous almost-linear-time construction \cite{long2022near} relies on modern graph techniques including vertex expander decomposition and approximate vertex capacitated maxflow, which are highly complicated and will bring relatively large subpolynomial overheads. Our new construction bypasses the vertex expander decomposition to obtain improvement on both efficiency and quality, which is also considerably simpler. Finally, we point out that the subpolynomial factors in our preprocessing time still comes from the construction of the low degree hierarchy, which can be traced back to the subpolynomial overheads of the current approximate vertex capacitated maxflow algoirthm \cite{BGS21}.

\paragraph{The Fully Dynamic Setting.} We also show a better fully dynamic sensitivity oracle for subgraph connectivity, with update time and query time matching the decremental bounds up to polylogarithmic factors. See \Cref{thm:FullyDynamicOracle} for a detailed version of \Cref{thm:FullyDynamicMain}. The first upper bound $\what{O}(m) + \wtilde{O}(m(n_{\off} + d_{\star}))$ is obtained from a combinatorial algorithm\footnote{\emph{Combinatorial} algorithms \cite{AW14} are algorithms
that do not use fast matrix multiplication.}.

\begin{theorem}
There exists a deterministic fully dynamic sensitivity oracle for subgraph connectivity with $\what{O}(m) + \wtilde{O}(\min\{m(n_{\off} + d_{\star}),n^{\omega}\})$ preprocessing time, $\wtilde{O}(\min\{m(n_{\off} + d_{\star}),n^{2}\})$ space, $\wtilde{O}(d^{2})$ update time and $O(d)$ query time, where $\omega$ is the exponent of matrix multiplication.
\label{thm:FullyDynamicMain}
\end{theorem}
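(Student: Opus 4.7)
The plan is to reduce the fully dynamic update to the ``effective failure set'' $F := V_{\off} \triangle D$ in $G$, so that answering connectivity in $G[V_{\new}]$ becomes identical to answering connectivity in $G - F$. Since $|F| \leq n_{\off} + d$, a direct invocation of the vertex-failure oracle of \Cref{thm:DecrementalMain} with budget $d'_{\star} := n_{\off} + d_{\star}$ already matches the preprocessing time $\what{O}(m) + \wtilde{O}(m(n_{\off}+d_{\star}))$ and query time $O(d)$; the nontrivial task is reducing the update time from the naive $\wtilde{O}((n_{\off}+d)^{2})$ down to $\wtilde{O}(d^{2})$. The crucial observation is that $V_{\off}$ is known at preprocessing time, while the $d$-sized ``disagreement'' $F \triangle V_{\off} = D$ only arrives with the update, so one should bake the failure of $V_{\off}$ into the oracle during preprocessing and pay update cost only for the $d$ vertices whose state is flipped by $D$. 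This is exactly the ``connectivity oracle for predictable vertex failures'' paradigm of \cite{HKP23}, and the goal is to realize it on top of the improved oracle of \Cref{thm:DecrementalMain}.

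For the combinatorial upper bound $\what{O}(m) + \wtilde{O}(m(n_{\off}+d_{\star}))$ preprocessing and $\wtilde{O}(m(n_{\off}+d_{\star}))$ space, the plan is to instantiate the oracle of \Cref{thm:DecrementalMain} on $G$ with failure budget $d'_{\star} := n_{\off} + d_{\star}$, and then simulate its batched update on $V_{\off}$ while maintaining an ``undo log'': for each $v \in V_{\off}$ and each level of the low-degree hierarchy underlying the oracle, store the auxiliary data needed to selectively reverse the effect of $v$'s preprocessed failure. The total log size summed across $V_{\off}$ fits in $\wtilde{O}(m \cdot n_{\off})$, matching the overall space bound. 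On an update $D$, the algorithm (i) undoes the log entries for $D \cap V_{\off}$ and (ii) applies fresh failures for $D \cap V_{\on}$; both operations touch $d$ vertices, and by the same per-level analysis behind \Cref{thm:DecrementalMain}'s $\wtilde{O}(d^{2})$ update bound, the total update time is $\wtilde{O}(d^{2})$, with $O(d)$-time queries handled exactly as in the decremental case.

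For the matrix-multiplication upper bound $\wtilde{O}(n^{\omega})$ preprocessing and $\wtilde{O}(n^{2})$ space, the plan is to replace the combinatorial hierarchy by an algebraic representation of $G[V_{\on}]$. In preprocessing we compute and store the inverse of a suitably chosen $n \times n$ matrix encoding adjacency in $G$ together with the indicator of $V_{\on}$, via fast matrix multiplication in $\wtilde{O}(n^{\omega})$ time and $\wtilde{O}(n^{2})$ space. Updates absorb the $d$-vertex state-flip $D$ through the Sherman--Morrison--Woodbury identity, reducing the effect of $D$ to the maintenance of a $d \times d$ submatrix of the precomputed inverse, which can be updated in $\wtilde{O}(d^{2})$ time. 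Connectivity queries are then answered in $O(d)$ time by composing this algebraic data with a combinatorial traversal inherited from \Cref{thm:DecrementalMain}, applied to the $d$-vertex ``contracted'' structure.

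The main technical obstacle lies in the combinatorial construction: the low-degree hierarchy of \Cref{thm:DecrementalMain} is designed for a single monotone batch of failures, so making its various tables ``selectively reversible'' for arbitrary subsets of the predicted set $V_{\off}$---while respecting both the $\wtilde{O}(d^{2})$ update complexity and the $\wtilde{O}(m(n_{\off}+d_{\star}))$ space budget---requires opening up the oracle's construction and tracking, level by level, exactly which entries can change upon a single vertex failure, so that only those entries are logged and their reversal composes cleanly with the subsequent batch of fresh $(D \cap V_{\on})$-failures.
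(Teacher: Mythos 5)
Your opening reduction, treating the update as the symmetric difference $F = V_{\off}\triangle D$, is a reasonable way to state what the oracle must answer, but the rest of the proposal does not establish the theorem, and both the combinatorial and the algebraic routes deviate from the paper in ways that matter.

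For the combinatorial bound, the paper does not build a vertex-failure oracle on the full graph $G$ with budget $n_{\off}+d_{\star}$ and then ``pre-fail'' $V_{\off}$; it builds the low-degree hierarchy directly on $G[V_{\on}]$, so that off-vertices never enter the hierarchy at all. It then enlarges the artificial edge sets so that $B_{\gamma}$ contains \emph{all} off-neighbors of $\gamma$ (together with $d_{\star}+1$ on-neighbors), appends $V_{\off}$ to the end of the global order $\pi$, and, on an update, treats each newly activated vertex in $D_{\off}$ as its own singleton interval. That is the mechanism by which the number of intervals remains $O(pd\Delta)$ and the Bor\r{u}vka-with-batched-adjacency machinery from the decremental oracle carries over verbatim with $\wtilde O(d^{2})$ cost. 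Your ``undo log'' sketch is a different mechanism, and it is precisely where your proof has a hole: the decremental update is not a sequence of per-vertex state changes that can be logged and reversed independently. The update computes intervals, adjacency counts, and a Bor\r{u}vka merge from the \emph{whole} batch $D$; there is no per-vertex intermediate state to log, and it is not clear what ``selectively reversing $v$'s preprocessed failure'' even means for these structures. You acknowledge this as ``the main technical obstacle'' but do not resolve it, and note that the paper's introduction explicitly identifies the closely related predictable-failure approach of \cite{HKP23} (which is the closest analogue of your undo-log idea) as a route that seems stuck at $\wtilde O(d^{4})$.

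For the $\wtilde O(n^{\omega})$ preprocessing bound, your plan is also not what the paper does, and on its own terms it does not reach the claimed query/update bounds. The paper uses fast matrix multiplication only in one place: to populate, in $O(p\, n^{\omega})$ time, a table of pairwise artificial-edge counts (\Cref{lemma:InitializeTable}), after which everything remains combinatorial and deterministic. Your alternative --- store an $n\times n$ matrix inverse and update it via Sherman--Morrison--Woodbury --- is the approach of \cite{BrandS19}, which yields $O(d^{\omega})$ update time and $O(d^{2})$ query time and is randomized; it does not give the deterministic $\wtilde O(d^{2})$ update and $O(d)$ query that the theorem asserts, and you give no argument for how a ``combinatorial traversal inherited from \Cref{thm:DecrementalMain}'' would be grafted onto an algebraic inverse to recover $O(d)$ queries. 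So the FMM case is not proved either.

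In short: the correct and crucial missing idea is to build the hierarchy on $G[V_{\on}]$, enlarge $B_{\gamma}$ to absorb all off-neighbors, and let $D_{\off}$ enter the interval structure as $d$ singletons, so that the decremental Bor\r{u}vka machinery applies unchanged. Without this (or an actual construction of a reversible per-vertex log), the update-time claim is unsupported.
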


\begin{table}[h]
\centering
\footnotesize{

\begin{tabular}{|>{\raggedright}p{0.15\textwidth}|c|c|c|c|c|}
\hline 
 & \makecell{Det./\\Rand.} & Space & Preprocessing & Update & Query\tabularnewline
\hline 
\hline 
\makecell[l]{Henzinger \&\\ Neumann \cite{henzinger2016incremental}} & Det. & $\wtilde{O}(n_{\off}^{2}m)$ & $\what{O}(n_{\off}^{2}m) + \wtilde{O}(d_{\star}n_{\off}^{2}m)$ & $\what{O}(d^{4})$ & $O(d^{2})$ \tabularnewline
\hline
\makecell[l]{Hu, Kosinas \&\\ Polak \cite{HKP23}} & Det. & $\wtilde{O}((n_{\off} + d_{\star})m)$ & $\wtilde{O}((n_{\off} + d_{\star})m)$ & $\wtilde{O}(d^{4})$ & $O(d)$ \tabularnewline
\hline
\textbf{This paper} & Det. & $O(\min\{(n_{\off} + d_{\star})m\log^{2}n,n^{2}\})$ & \makecell{$\what{O}(m) +$\\ $O(\min\{(n_{\off} + d_{\star})m, n^{\omega}\}\log^{2}n)$} & $O(d^{2}\log^{7}n)$ & $O(d)$ \tabularnewline
\hline
\end{tabular}

}

\caption{Complexity of known fully dynamic sensitivity oracles for subgraph connectivity.\label{tab:fullydynamic}
}
\end{table}

We also show conditional lower bounds on the preprocessing time and the space, which separate the fully dynamic and decremental settings. Furthermore, combining our new lower bounds and the existing ones, our solution in \Cref{thm:FullyDynamicMain} is optimal up to subpolynomial factors.

\begin{theorem}
Let ${\cal A}$ be a fully dynamic sensitivity oracle for subgraph connectivity with $S$ space, $t_{p}$ preprocessing time, $t_{u}$ update time and $t_{q}$ query time upper bounds. Assuming popular conjectures, we have the following:
\begin{enumerate}
\item If $t_{u} + t_{p} = f(d)\cdot n^{o(1)}$, then $S = \what{\Omega}(n^{2})$. (See \Cref{lemma:SpaceLB} and \Cref{coro:SpaceLBAllDensity})
\item If $t_{u} + t_{p} = f(d)\cdot n^{o(1)}$, then $t_{u} = \what{\Omega}((n_{\off} + d)m)$ (See \cite{HKP23})
\item If $t_{u} + t_{p} = f(d)\cdot n^{o(1)}$, then $t_{u} = \what{\Omega}(n^{\omega_{\bool}})$ (See \Cref{lemma:PreprocessLBBMM})
\item If $t_{p} = \poly(n)$, then $t_{u} + t_{q} = \what{\Omega}(d^{2})$. (See \cite{long2022near})
\item If $t_{p} = \poly(n)$ and $t_{u} = \poly(dn^{o(1)})$, then $t_{q} = \what{\Omega}(d)$. (See \cite{HenzingerKNS15})
\end{enumerate}
The $f(d)$ above can be an arbitrary growing function, and $\omega_{\bool}$ is the exponent of Boolean matrix multiplication.
\end{theorem}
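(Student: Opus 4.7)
Items 2, 4, and 5 are cited from prior work, so the proof reduces to establishing the two new lower bounds in items 1 and 3. Both follow a common template: encode a hard instance into a sensitivity-oracle problem, use a small-size update $D$ as an ``address'' selecting the coordinate we want to read off, and argue that a single connectivity query between a fixed pair of terminals reveals the desired bit. Because the hypothesis $t_u+t_p = f(d)\cdot n^{o(1)}$ is invoked with $d$ bounded by a constant (or at worst $n^{o(1)}$) along the reductions, the factor $f(d)$ remains subpolynomial in $n$ and effectively disappears in the final bound.

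For item 1, the plan is to encode an arbitrary $n\times n$ bipartite adjacency matrix $H$ into a graph $G$, together with $O(n)$ auxiliary on/off vertices whose flips constitute the update $D$. The gadget is arranged so that after flipping the two off-vertices indexing row $i$ and column $j$, two fixed terminals become connected in $G[V_{\new}]$ if and only if $H_{i,j}=1$. Any oracle ${\cal A}$ with $t_u+t_p = f(d)\cdot n^{o(1)}$ then serves as a decoder: its $S$-bit post-preprocessing state, together with the subpolynomial-overhead update/query procedure, is enough to reconstruct every bit of $H$. Since there are $2^{n^2}$ possible choices of $H$, an information-theoretic argument forces $S = \what{\Omega}(n^2)$, which is \Cref{lemma:SpaceLB}. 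The corollary extending this to all densities (\Cref{coro:SpaceLBAllDensity}) will follow by padding $G$ with additional isolated or lightly connected vertices to shift the regime of $m$ while keeping the $n^2$-bit information content intact.

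For item 3, the plan is a reduction from Boolean matrix multiplication. Given $n\times n$ Boolean matrices $A$ and $B$, build a tripartite graph on rows $R_1,\dots,R_n$, middles $M_1,\dots,M_n$, and columns $C_1,\dots,C_n$, with edges $R_iM_k$ iff $A_{i,k}=1$ and $M_kC_j$ iff $B_{k,j}=1$, plus small switching gadgets so that a constant-size update $D$ isolates the single pair $(R_i,C_j)$ to be queried. A connectivity query between $R_i$ and $C_j$ then reveals the bit $(AB)_{i,j}$. Preprocessing once and issuing $n^2$ update/query pairs computes the entire product in time $t_p + n^2(t_u+t_q)$; under $t_u+t_p = f(d)\cdot n^{o(1)}$ and $t_q = O(d)$ this is $n^{2+o(1)}$, so the Boolean matrix multiplication conjecture forces the conclusion of \Cref{lemma:PreprocessLBBMM}, i.e.\ no such oracle with both small $t_u$ and small $t_p$ can exist unless $t_u = \what{\Omega}(n^{\omega_{\bool}})$.

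The main obstacle in both constructions will be the design of the addressing gadget: the update set $D$ must be of constant (or at worst $n^{o(1)}$) size while still being flexible enough to uniquely select any coordinate $(i,j)$ and to force the terminal pair to be connected precisely when the target bit equals $1$. A natural way to achieve this is via ``indicator paths'' attached to each row and column with a single off-vertex per row and per column acting as a gate, so that the update flips the two gates corresponding to the chosen pair and the terminals route through exactly the encoded edge. Verifying the density range of the resulting instance, and extending via padding to cover every $m$, is the routine cleanup that separates \Cref{lemma:SpaceLB} from \Cref{coro:SpaceLBAllDensity}.
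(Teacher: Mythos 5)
Your sketch of item~3 matches the paper's approach: reduce Boolean matrix multiplication to a three-layer graph where the middle layer is initially on, the outer two layers are initially off, and each entry $(AB)_{ij}$ is decided by one update $D=\{R_i,C_j\}$ of size $d=2$ followed by one connectivity query. You mention ``switching gadgets to isolate the pair,'' but no such gadgets are needed; setting all of $\{R_i\}\cup\{C_j\}$ initially off does all the isolation for free. That is a cosmetic difference. Also note that the paper pads with an isolated dummy clique specifically to land in the regime $m=\Theta(n^2)$, and the combinatorial variant (\Cref{lemma:PreprocessCombLBBMM}) uses rectangular BMM to hit intermediate densities, which is worth keeping in mind.

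Your plan for item~1 has a genuine gap. You propose an \emph{unconditional information-theoretic} argument: encode an arbitrary $n\times n$ matrix $H$ into $G$ and argue that the $S$-bit data structure must determine $H$, hence $S=\Omega(n^2)$. This can only work when $m=\Theta(n^2)$, because a graph with $m$ edges carries only $O(m\log n)$ bits of information; for $m=\Theta(n)$ the information-theoretic lower bound collapses to $\widetilde{\Omega}(n)$, not $\widetilde{\Omega}(n^2)$. But \Cref{lemma:SpaceLB} is stated precisely for sparse graphs with $m=\Theta(n)$, and this is essential for \Cref{coro:SpaceLBAllDensity}: the corollary pads a sparse instance \emph{upward} to higher density. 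Your suggested padding goes the wrong way: starting from a dense $n$-vertex instance with $n^2$ bits of content and adding $\Theta(n^2)$ dummy vertices to make it sparse gives a graph on $N=\Theta(n^2)$ vertices with a lower bound of only $\Omega(n^2)=\Omega(N)$, which is linear, not quadratic. The paper sidesteps this entirely by reducing from the \emph{Strong Set Disjointness} problem and invoking \Cref{conj:SetDisjointness}: one builds a sparse bipartite graph on sets/elements of total size $N_F$, makes the set-vertices initially off, and answers each disjointness query $(S,S')$ by flipping $D=\{v_S,v_{S'}\}$ and querying connectivity. The conjecture gives a \emph{time--space tradeoff} lower bound $\widetilde{\Omega}(N_F^2/T^2)$ that remains quadratic even though the graph itself is sparse and information-theoretically tiny. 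This conditional time--space tradeoff is exactly what your argument is missing; it is not a routine cleanup step but the heart of the lemma.
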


We make some additional remarks here. When discussing lower bounds, we assume $d = d_{\star}$ for each update. The lower bound on the space (item 1) holds even when the input graphs are sparse, so it naturally holds for input graphs with general density, as shown in \Cref{coro:SpaceLBAllDensity}. The lower bounds on the preprocessing time (items 2 and 3) are not contradictory, because item 2 is obtained from sparse graphs while item 3 is obtained from dense graphs, as discussed in \Cref{remark:}. If we focus on combinatorial algorithms, we can even obtain a $\what{O}((n_{\off} + d)m)$ lower bound for general density, as discussed in \Cref{remark:CombUP}. The lower bounds on the update time and query time (items 4 and 5) are from those in the decremental setting.

\subsection{Organization}

In \Cref{sect:Overview}, we give an overview of our techniques. In \Cref{sect:Prelim}, we give the preliminaries. In \Cref{sect:LowDegreeHierarchy}, we introduce our new construction of the low degree hierarchy and obtain a better vertex-failure connectivity oracle as a corollary. In \Cref{sect:Preprocessing,sect:UpdateQuery}, we describe the preprocessing, update and query algorithms of our fully dynamic sensitivity oracle for subgraph connectivity. In \Cref{sect:LB}, we discuss lower bounds in the fully dynamic setting. We includes all omitted proofs in \Cref{sect:OmittedProofs}.

\section{Technical Overview}
\label{sect:Overview}

\paragraph{Better Vertex-Failure Connectivity Oracles.} Our main contribution is a new construction of the \emph{low degree hierarchy}. Then we obtain a better vertex-failure connectivity oracle as a corollary by combining the new construction of the low degree hierarchy and the remaining part in \cite{long2022near}.

It is known that the construction of the low degree hierarchy can be reduced to $O(\log n)$ calls to the \emph{low-degree Steiner forest decomposition} \cite{duan2020connectivity,long2022near}. Basically, for an input graph $G$ with terminal set $U\subseteq V(G)$, we say a forest $F\subseteq E(G)$ is a \emph{Steiner forest} of $U$ in $G$ if $F$ spans the whole $U$ (may also span some additional non-$U$ vertices) and for each $u,v\in U$, $u,v$ are connected in $F$ if and only if they are connected in $G$. 
We propose a new almost-linear time low-degree Steiner forest decomposition algorithm as shown in \Cref{lemma:SFDecompOverview}, which improves the degree parameter $\Delta$ from $n^{o(1)}$ to $O(\log^{2}n)$ compared to the previous one by \cite{long2022near}. This will leads to an improvement to the quality of the low degree hierarchy, and finally reflects on the update time. 

\begin{lemma}[\Cref{lemma:SFDecomp}, Informal]
Let $G$ be an undirected graph with terminals $U\subseteq V(G)$. There is an almost-linear-time algorithm that computes a \emph{separator} $|X|\subseteq V(G)$ of size $|X|\leq |U|/2$, and a low-degree Steiner forest of $U\setminus X$ in $G\setminus X$ with maximum degree $\Delta = O(\log^{2}n)$.

\label{lemma:SFDecompOverview}
\end{lemma}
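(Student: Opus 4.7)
The plan is to reduce the construction to approximate vertex-capacitated maxflow, thereby avoiding the vertex expander decomposition used in~\cite{long2022near}. The guiding dichotomy is: for $\Delta = \Theta(\log^{2} n)$, either a flow of large value shows that most of $U$ can be connected in $G$ via paths that each non-terminal carries at most $\Delta/2$ of (yielding a degree-$\Delta$ Steiner tree), or the dual vertex min-cut provides a small balanced separator which can be placed into $X$ before recursing.

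Concretely, I would process each connected component of $G$ containing terminals independently. Within a component with terminal set $U_{0}$, pick a root $r \in U_{0}$ and formulate a vertex-capacitated flow instance in which every non-terminal has capacity $\Delta/2$, every non-root terminal is a unit-capacity sink connected to a super-sink, and edges have infinite capacity. Invoking the approximate vertex-capacitated maxflow algorithm of~\cite{BGS21} yields, in $\what{O}(m)$ time, a flow of value $k$ and a corresponding min-cut. Flow decomposition converts the flow into $k$ paths from $r$ to $k$ distinct terminals; any spanning subtree of their union is a Steiner tree on those $k+1$ terminals in which each non-terminal has degree at most $\Delta$, since every vertex is used by at most $\Delta/2$ paths and each path contributes at most two to its degree. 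If $k \geq (|U_{0}|-1)/2$, I place the at most $|U_{0}|/2$ unsaturated terminals into $X$ and the Steiner tree is the contribution for this component; otherwise the min-cut has at most $k < (|U_{0}|-1)/2$ vertices and I add them to $X$ before recursing on each terminal-containing subcomponent of the residual graph, unioning the returned Steiner forests.

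The main obstacle is controlling the total size of $X$ to at most $|U|/2$ globally across the recursion, since a naive analysis accumulates to $O(|U|\log n)$. I would address this via a potential argument on $\Phi(U_{0}) = |U_{0}|/2$: in the min-cut branch both $|X_{\mathrm{cut}}| < |U_{0}|/2$ and the cut partitions $U_{0}$ into pieces each of size strictly less than $|U_{0}|/2$, so charging each separator vertex to terminals on the strictly smaller side of its cut and verifying that each terminal is charged $O(1)$ times (by amortizing against the saturated-branch leaves) maintains the invariant $|X| \leq |U|/2$. The approximation factor of~\cite{BGS21} is absorbed into the constant hidden in $\Delta = \Theta(\log^{2} n)$ by taking the constant sufficiently large, and the almost-linear total running time follows because subcomponents strictly shrink down the recursion.
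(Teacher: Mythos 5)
The approach you propose is genuinely different from the paper's, but it has a fundamental flaw that makes it incorrect: the root $r$ can acquire unbounded degree in the Steiner tree. All $k$ flow paths in your single-source formulation emanate from $r$, and since every neighbor of $r$ has capacity at most $\Delta/2$, at least $2k/\Delta$ distinct neighbors of $r$ carry positive flow; in a spanning tree of the union of flow paths the degree of $r$ can then be $\Omega(k/\Delta)$ or worse. The simplest counterexamples are $G = K_n$ with $U=V(G)$ (where the flow saturates all $n-1$ other terminals along direct edges and the resulting tree is a star centered at $r$) and $G = K_{1,n}$ with $r$ the center: in both cases your saturated branch fires, $X=\emptyset$, and the output tree has a vertex of degree $\Theta(n) \gg \log^2 n$. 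The degree constraint in \Cref{def:LowDegreeHierarchy} is on \emph{all} vertices of the Steiner tree including terminals, and the downstream update algorithm genuinely needs this, so one cannot simply exclude the root. This is not fixable by choosing $r$ cleverly (in $K_n$ every choice fails), nor by capping $r$'s capacity at $\Delta/2$ (then the flow cannot reach more than $\Delta/2$ terminals). Single-source routing inherently concentrates load at the source; that is exactly why the paper instead runs a simplified cut-matching game (\Cref{lemma:CutOrSteinerTree}) whose matching rounds route between \emph{balanced bipartitions} of the current witness-graph components. Over $O(\log n)$ rounds each vertex accumulates congestion $O(\log n)$ per round, so the embedded subgraph has max degree $O(\log^2 n)$ everywhere, with no distinguished high-degree vertex.

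A secondary but also real problem is that your charging argument for $|X|\le |U|/2$ does not close. In the cut branch, the approximate min-cut you get has total \emph{capacity} $O(k)$, but since non-root terminals are unit-capacity sinks the cut may consist of up to $\Theta(k) \approx \Theta(|U_0|)$ terminal vertices. That is a constant fraction of the current terminal set, not a $1/\polylog(n)$ fraction. The paper's charging argument in \Cref{lemma:SFDecomp} crucially uses that the cut produced by \Cref{lemma:MatchingPlayer} is $\phi$-sparse with $\phi=\Theta(1/\log|U|)$, so that a terminal charged $O(\log|U|)$ times at $\phi$ units each still accumulates at most $O(1)$ total. Your cut has no such sparsity guarantee, and the claim that ``each terminal is charged $O(1)$ times by amortizing against the saturated-branch leaves'' is asserted without a mechanism; on a path graph with an interior root the cut vertices removed across the recursion can already approach (and, with the $(1+\epsilon)$ approximation slack, exceed) $|U|/2$.
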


In the following discussion, we assume $U = V(G)$ for simplicity (hence spanning trees/forests and Steiner trees/forests are now interchangable). To obtain \Cref{lemma:SFDecompOverview}, our starting point is that it is \emph{not} necessary to perform a vertex expander decomposition (which will bring large $n^{o(1)}$ overheads to $\Delta$) to get a low-degree Steiner forest decomposition. Basically, in \cite{long2022near}, they obtain a fast low-degree Steiner forest decomposition by first proving that any vertex expander admits a low-degree spanning tree, so then it suffices to perform the stronger vertex expander decomposition. The way they prove the former is to argue that for any vertex expander $H$, one can embed another expander $W$ into $H$ with low vertex-congestion, which implies that $H$ has a low-degree subgraph including all vertices in $V(H)$.

The key observation is that, to make the above argument work, $W$ is no need to be an expander and $W$ can be an arbitrary \emph{connected graph}. This inspires us to design the following subroutine \Cref{lemma:CutOrSteinerTreeOverview}. Then \Cref{lemma:SFDecompOverview} can be shown by invoking \Cref{lemma:CutOrSteinerTree} using a standard divide-and-conquer framework.

\begin{lemma}[\Cref{lemma:CutOrSteinerTree}, Informal]
Let $G$ be an undirected graph. There is an almost-linear-time algorithm that computes either
\begin{itemize}
\item a balanced sparse vertex cut $(L,S,R)$ with $|R|\geq |L|\geq |V(G)|/12$ and $|S|\leq 1/(100\log n)\cdot |L|$.
\item a large subset $V'\subseteq V(G)$ with $|V'|\geq 3|U|/4$ s.t. we can embed a connected graph $W'$ with $V(W') = V'$ into $G[V']$ with vertex congestion $O(\log^{2} n)$, which implies a spanning in $G[V']$ with maximum degree $O(\log^{2}n)$.
\end{itemize}
\label{lemma:CutOrSteinerTreeOverview}
\end{lemma}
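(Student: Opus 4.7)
The plan is to adapt a cut-matching game to the vertex-capacitated setting, driven by the almost-linear-time approximate vertex-capacitated maxflow of \cite{BGS21}. I would run $O(\log n)$ rounds on the current vertex set. In each round, I pair up vertices into a perfect matching $M_i$ (drawn deterministically from an explicit $O(\log n)$-regular spectral expander on the current vertex set), then invoke approximate vertex-capacitated maxflow with all internal-vertex capacities set to $c = O(\log n)$ to test whether $M_i$ can be routed with vertex congestion at most $c$. If the routing succeeds, I record the routing subgraph $W_i$, which embeds $M_i$ into $G$ with vertex congestion $O(\log n)$.

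If a round's routing instead fails, the approximate-flow dual certificate yields a vertex cut $(L, S, R)$ of sparsity $|S|/\min(|L|,|R|) = O(1/\log n)$. When this cut is balanced, meaning $\min(|L|,|R|) \geq |V(G)|/12$, I return it as the first case of the lemma. Otherwise the small side together with $S$ contains only a $o(1)$ fraction of $|V(G)|$, so I discard these vertices and restart the round on the surviving vertex set. A geometric amortization charges each discarded vertex to the $O(1/\log n)$ sparsity factor, so the total number of vertices discarded across all $O(\log n)$ rounds is at most $|V(G)|/4$, leaving a final set $V' \subseteq V(G)$ with $|V'| \geq 3|V(G)|/4$.

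After $O(\log n)$ successful rounds, I take $W' = \bigcup_i M_i$ restricted to $V'$. The crucial observation highlighted in the overview is that $W'$ need not be an expander, only connected: the union of $O(\log n)$ matchings drawn from a bounded-degree expander is itself a connected (in fact expanding) graph, so $W'$ spans $V'$. The combined embedding of $W'$ into $G[V']$ uses vertex congestion $O(\log n) \cdot O(\log n) = O(\log^2 n)$, since each of the $O(\log n)$ matchings contributes vertex congestion $O(\log n)$. Any spanning tree of the embedded image of $W'$ is then a spanning subgraph of $G[V']$ of maximum degree $O(\log^2 n)$, as required.

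The main obstacle I anticipate is keeping the whole procedure within almost-linear time while preserving the $O(\log^2 n)$ congestion bound. This reduces to $O(\log n)$ calls to \cite{BGS21} on shrinking subgraphs whose total size is $\Otil(m)$, so each call contributes almost-linear time and the approximation slack of the flow subroutine can be absorbed into the capacity $c = O(\log n)$. A secondary subtlety is the trimming amortization that caps total vertex loss at $|V(G)|/4$; this follows from summing $O(\log n)$ rounds against the $O(1/\log n)$ sparsity threshold, but the bookkeeping must be careful to recompute the matching source (the explicit expander) on the trimmed vertex set at each round without blowing up the time budget.
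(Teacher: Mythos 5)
Your cut-player strategy is genuinely different from the paper's, and it has a gap that I don't think can be closed without essentially reintroducing the paper's idea. You propose to draw the round-$i$ matching $M_i$ from a \emph{fixed} $O(\log n)$-regular expander on the vertex set, and to argue at the end that $\bigcup_i M_i$, being (essentially) that expander, is connected. The paper instead lets the cut player choose the bipartition $(A_i,B_i)$ \emph{adaptively}, by grouping the connected components of the current witness graph $W^{(i-1)}$ into two balanced halves, and argues via a potential $\Phi(W)=\sum_Q |Q|\log|Q|$ over components $Q$ of $W$ that $O(\log |U|)$ rounds suffice to produce a giant component. This adaptivity is precisely what lets the argument tolerate the matching player returning only a \emph{partial} matching each round.

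The concrete problems with the fixed-expander route are: (i) the matching player built on approximate vertex-capacitated maxflow (\Cref{lemma:MatchingPlayer}) does not ``either route $M_i$ fully or fail''; it routes a constant fraction ($\geq \min\{|A|,|B|\}/3$) of the requested demand or returns a sparse vertex cut, and you cannot force it to route all of $M_i$ within the congestion budget. So after $O(\log n)$ rounds your witness graph is the union of \emph{subsets} of the expander's matchings, not the expander itself, and connectivity of such a union is not guaranteed by any expander property. (ii) Even setting that aside, when an unbalanced cut appears mid-process and you trim the small side, the routing paths of \emph{previously accepted} matchings may pass through trimmed vertices, so the recorded embeddings are no longer embeddings into $G[V']$; your ``restrict to $V'$'' step silently discards those edges, further eroding $W'$. (iii) The statement that the expander restricted to $3/4$ of its vertices is connected is plausible with an expander-mixing argument, but you don't give one, and in any case it is mooted by (i) and (ii). The paper's adaptive cut player sidesteps all three: every routed matching edge joins two distinct components of $W$, the potential increases by $\Omega(|U|)$ per round regardless of which fraction of the matching was routed, and if a cut is returned it is already balanced and sparse by the guarantees of \Cref{lemma:MatchingPlayer}, so there is no trimming at all.
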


We design the algorithm in \Cref{lemma:CutOrSteinerTreeOverview} using a simplified \emph{cut-matching game}. The original cut-matching game \cite{KRV09,KKOV07} can be used to embed an expander into a graph with low congestion (or produce a balanced sparse cut). To embed a connected graph, consider the following procedure. Assume a standard matching player (i.e. \Cref{lemma:MatchingPlayer}) which, given a graph $G$ and a balanced partition $(A,B)$ of $V(G)$, either embeds a large matching between $A$ and $B$ into $G$ with low vertex-congestion or outputs a balanced sparse vertex cut in almost-linear time. Start with a graph $W$ with $V(W) = V(G)$ but no edge and perform several rounds. At each round, we (as the cut player) partition the connected components of $W$ into two parts with balanced sizes, and feed the partition to the matching player. If the matching player gives a matching, we add it to $W$ and go to the next round. The game stops once a giant connected component (of size at least $3|V(G)|/4$) appears in $W$, which will roughly serve as $W'$. Roughly speaking, the game will stop in $O(\log n)$ rounds because at each round, there exists a large fraction of vertices, s.t. for each of them (say vertex $v$), the component containing $v$ has its size doubled.

\paragraph{Fully Dynamic Sensitivity Oracles for Subgraph Connectivity.}

Our fully dynamic oracle is actually a generalization of a simplified version of the decremental oracle in \cite{long2022near}. 

Initially, we construct a low degree hierarchy on the activated subgraph $G_{\on}:=G[V_{\on}]$. As mentioned in \cite{long2022near}, the hierarchy will roughly reduce $G_{\on}$ to the following \emph{semi-bipartite} form. First, $V_{\on}$ can be partitioned into $L_{\on}$ and $R_{\on}$, called \emph{left on-vertices} and \emph{right on-vertices} respectively, s.t. there is no edge connecting two vertices in $R_{\on}$. Second, $L_{\on}$ is spanned by a known path $\tau$. Therefore, we assume the original graph $G$ has a semi-bipartite $G_{\on}$ from now. 

When there is no off-vertices initially (i.e. the decremental setting), the properties of a semi-bipartite $G_{\on}$ naturally support the following update and query strategy. In the update phase, removing vertices in $D$ will break the path $\tau$ into at most $d+1$ \emph{intervals}, and we will somehow (we will not explain this in the overview) recompute the connectivity of these intervals in the graph $G_{\on}\setminus D$. Then, for each query of $u,v\in V_{\on}\setminus D$, it suffices to find two intervals $I_{u},I_{v}$ connecting with $u,v$ in $G_{\on}\setminus D$ respectively. When $u,v$ are left on-vertices, $I_{u},I_{v}$ can be found trivially. When $u,v$ are right on-vertices, we just need to scan at most $d+1$ neighbors of each of $u,v$, which takes $O(d)$ time. Note that removing $D$ will generate at most $d+1$ intervals is a crucial point to achieve fast update time.

Back to the fully dynamic setting, for an update $D$, in addition to removing vertices $D_{\on}:=D\cap V_{\on}$ from $G_{\on}$, we will also add vertices $D_{\off}:=D\cap V_{\off}$ and their incident edges. The key observation is that $G[V_{\on}\cup D_{\off}]$ is still roughly a semi-bipartite graph. The first property will still hold if we put the newly activated vertices $D_{\off}$ into the left side. The second property may not hold because we do not have a path spanning the new left vertices $L_{\on}\cup D_{\off}$. However, this will not hurt because we can still partition 
$L_{\on}\cup D_{\off}$ into $O(d)$ connected parts after removing $D_{\on}$ from $G[V_{\on}\cup D_{\off}]$, i.e. at most $d+1$ intervals covering $L_{\on}\setminus D_{\on}$, and at most $d$ vertices in $D_{\off}$. 

Giving this key observation, it is quite natural to adapt the decremental algorithm to the fully dynamic setting. Using the ideas of \emph{adding artificial edges} (intuitively, substituting each right vertex and its incident edges with an artificial clique on its left neighbors) and applying \emph{2D range counting structure}, we can design an update algorithm with $\wtilde{O}(d^{3})$ update time \cite{duan2020connectivity}. To improve the update time to $\wtilde{O}(d^{2})$, we can use a \emph{\Boruvka's styled} update algorithm and implement it by considering \emph{batched adjacency queries} on intervals \cite{long2022near}.

\section{Preliminaries}
\label{sect:Prelim}

Throughout the paper, we use the standard graph theoretic notation. For any graph, we use $V (\cdot)$
and $E(\cdot)$ to denote its vertex set and edge set respectively. 
If If there is no other specification, we use $G$ to denote the original graph on which we will build the oracle, and we let $n = |V (G)|$ and $m = |E(G)|$. Initially, the vertices $V(G)$ in the original graph are partitioned into \emph{on-vertices} $V_{\on}$ and \emph{off-vertices} $V_{\off}$, and we let $n_{\off} = |V_{\off}|$. For a graph $H$ and any $S \subseteq V(H)$, we let $H[S]$ denote the subgraph induced by vertices $S$. Also, for any $S \subseteq V(H)$, we use $H \setminus S$ to denote the graph after removing vertices in $S$ and edges incident to them. Similarly, for any $F \subseteq E(H)$, $G \setminus F$ denote the graph after removing edges in $F$. 

We also use the notion of \emph{multigraphs}. For a multigraph $H$, its edge set $E(H)$ is a \emph{multiset}. We use $+$ and $\sum$ to denote the union operation and use $-$ to denote the subtraction operation on multiset. We let $\omega$ denote the exponent of matrix multiplication and $\omega_{\bool}$ denote the exponent of Boolean matrix multiplication. To our best knowledge, currently $\omega$ and $\omega_{\bool}$ have the same upper bound.

\section{The Low Degree Hierarchy}
\label{sect:LowDegreeHierarchy}

The \emph{low degree hierarchy} was first introduced in \cite{duan2020connectivity} to design efficient vertex-failure connectivity oracles. The construction of this hierarchy in \cite{duan2020connectivity} is based on the approximate minimum degree Steiner forest algorithm of \cite{furer1994approximating}, which gives $\tilde{O}(mn)$ construction time. Later, an alternative construction algorithm was shown in \cite{long2022near} by exploiting vertex expander decomposition, which improves the construction time to $m^{1+o(1)}$, at a cost of a small quality loss.

In this section, we will show a new construction algorithm, which still runs in almost-linear time and gives a hierarchy with quality better than the one in \cite{long2022near} (but still worse than the one in \cite{duan2020connectivity}). To obtain the quality improvement, we basically simplify the construction in \cite{long2022near} and bypass the vertex expander decomposition. 

We define the low degree hierarchy in \Cref{def:LowDegreeHierarchy}, and the main result of this section is \Cref{thm:LowDegreeHierarchy}. It was known that constructing a low degree hierarchy reduces to several rounds of \emph{low-degree Steiner forest decomposition}. In \Cref{sect:BalCutorLDST}, we introduce our key subroutine \Cref{lemma:CutOrSteinerTree}, which given an input graph, either computes a balanced sparse vertex cut or a low-degree Steiner tree covering a large fraction of terminals. In \Cref{sect:SFDecomp}, we show the low-degree Steiner forest decomposition algorithm \Cref{lemma:SFDecomp} using \Cref{lemma:CutOrSteinerTree} in a standard divide and conquer framework, and then complete the proof of \Cref{thm:LowDegreeHierarchy}.

\begin{definition}[Low Degree Hierarchy \cite{duan2020connectivity}, Definition 5.1 in \cite{long2022near}]
Let $G$ be a connected undirected graph. A $(p,\Delta)$-low degree hierarchy with height $p$ and degree parameter $\Delta$ on $G$ is a pair $({\cal C},{\cal T})$ of sets, where ${\cal C}$ is a set of vertex-induced connected subgraphs called \textit{components}, and ${\cal T}$ is a set of Steiner trees with maximum vertex degree at most $\Delta$.

The set ${\cal C}$ of components is a laminar set. Concretely, it satisfies the following properties.

\begin{itemize}
    \item[(1)] Components in ${\cal C}$ belong to $p$ levels and we denote by ${\cal C}_{i}$ the set of components at level $i$. In particular, at the top level $p$, ${\cal C}_{p}=\{G\}$ is a singleton set with the whole $G$ as the unique component. Furthermore, for each level $i\in[1,p]$, components in ${\cal C}_{i}$ are vertex-disjoint and there is no edge in $E(G)$ connecting two components in ${\cal C}_{i}$.
    \item[(2)] For each level $i\in [1,p-1]$ and each component $\gamma\in{\cal C}_{i}$, there is a unique component $\gamma'\in{\cal C}_{i+1}$ such that $V(\gamma)\subseteq V(\gamma')$, where we say that $\gamma'$ is the \textit{parent-component} of $\gamma$ and that $\gamma$ is a \textit{child-component} of $\gamma'$.
    \item[(3)] For each component $\gamma\in{\cal C}$, the \textit{terminals of $\gamma$}, denoted by $U(\gamma)$, are vertices in $\gamma$ but not in any of $\gamma$'s child-components. Note that $U(\gamma)$ can be empty. In particular, for each $\gamma\in{\cal C}_{1}$, $U(\gamma)=V(\gamma)$.
\end{itemize}

Generally, for each level $i\in[1,p]$, we define the \textit{terminals at level $i$} be terminals in all components in ${\cal C}_{i}$, denoted by $U_{i}=\bigcup_{\gamma\in{\cal C}_{i}}U(\gamma)$.

The set ${\cal T}$ of low-degree Steiner trees has the following properties.
\begin{itemize}
    \item[(4)] ${\cal T}$ can also be partitioned into subsets ${\cal T}_{1},...,{\cal T}_{p}$, where ${\cal T}_{i}$ denote trees at level $i$ and trees in ${\cal T}_{i}$ are vertex-disjoint.
    \item[(5)] For each level $i\in[1,p]$ and tree $\tau\in{\cal T}_{i}$, the \textit{terminals of $\tau$} is defined by $U(\tau)=U_{i}\cap V(\tau)$.
    \item[(6)] For each level $i\in[1,p]$ and each component $\gamma\in{\cal C}_{i}$ with $U(\gamma)\neq\emptyset$, there is a tree $\tau\in{\cal T}_{i}$ such that $U(\gamma)\subseteq U(\tau)$, denoted by $\tau(\gamma)$. We emphasize that two different components $\gamma$ and $\gamma' \in {\cal C}_i$ may correspond to the same tree $\tau \in {\cal T}_i$. 
\end{itemize}
\label{def:LowDegreeHierarchy}
\end{definition}

For better understanding, we note that the terminal sets of component $\{U(\gamma)\mid \gamma\in{\cal C}\}$, levels $\{U_{i}\mid 1\leq i\leq p\}$, and Steiner trees $\{U(\tau)\mid \tau\in{\cal T}\}$ are all partitions of $V(G)$. One may also get the picture of the hierarchy from the perspective of construction. See the construction described in \Cref{algo:LowDegreeHierarchy}, which invokes \Cref{lemma:SFDecomp} in a black-box way.

\begin{theorem}
Let $G$ be an undirected graph. There is a deterministic algorithm that computes a $(p,\Delta)$-low degree hierarchy with $p = O(\log n)$ and $\Delta = O(\log^{2} n)$. The running time is $m^{1+o(1)}$.
\label{thm:LowDegreeHierarchy}
\end{theorem}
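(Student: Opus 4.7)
My plan is to build the hierarchy top-down by $O(\log n)$ iterations of Lemma~\ref{lemma:SFDecomp}, following the framework established in \cite{duan2020connectivity,long2022near}. I initialize $\mathcal{C}_p = \{G\}$. For each level $i$ from $p$ down to $1$ and each component $\gamma \in \mathcal{C}_i$, I run \textsc{SFDecomp} on the induced subgraph $\gamma$ with terminal set $V(\gamma)$, obtaining a separator $X \subseteq V(\gamma)$ of size at most $|V(\gamma)|/2$ together with a Steiner forest $F$ on $V(\gamma)\setminus X$ inside $\gamma \setminus X$ of maximum vertex degree $O(\log^2 n)$. I add the trees of $F$ to $\mathcal{T}_i$, set $U(\gamma) := V(\gamma)\setminus X$, and let the child components of $\gamma$ in $\mathcal{C}_{i-1}$ be the connected components of $\gamma[X]$. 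The recursion halts when every component has been reduced to a single vertex.

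The bound $p = O(\log n)$ is immediate from the invariant $|V(\gamma')| \leq |X| \leq |V(\gamma)|/2$ for every child $\gamma'$ of $\gamma$, so component sizes halve along every descending chain. The bound $\Delta = O(\log^2 n)$ is inherited directly from Lemma~\ref{lemma:SFDecomp}, and vertex-disjointness of trees is maintained at every level: within the same level, trees produced for different components live in vertex-disjoint induced subgraphs; across levels, a tree added to $\mathcal{T}_i$ for component $\gamma$ occupies a subset of $V(\gamma)\setminus X$ whereas every descendant of $\gamma$ occupies a subset of $X$, while unrelated components are already vertex-disjoint by laminarity. For the running time, each \textsc{SFDecomp} call on a graph with $m'$ edges costs $(m')^{1+o(1)}$; components at a fixed level are vertex-disjoint so their edge counts sum to at most $m$, giving $m^{1+o(1)}$ per level and $m^{1+o(1)}$ across the $O(\log n)$ levels.

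The main obstacle I expect is verifying axiom~(6) of \Cref{def:LowDegreeHierarchy}, which requires that for every $\gamma \in \mathcal{C}_i$ with $U(\gamma)\ne\emptyset$ the entire terminal set $U(\gamma)$ lie inside a single tree $\tau \in \mathcal{T}_i$. When $\gamma \setminus X$ happens to be connected, $F$ is a single tree and the condition holds automatically; in general, however, $F$ decomposes into several trees, one per connected component of $\gamma \setminus X$, and $U(\gamma) = V(\gamma)\setminus X$ then spans several of them. To handle this, I would refine the level-$i$ component structure so that each connected piece of $\gamma \setminus X$ is promoted to its own level-$i$ component, pairing it with the slice of $X$ adjacent to it that will become the corresponding child-components at level $i-1$; this refinement is laminar, preserves the halving invariant, and guarantees that each refined component's terminals live inside a single tree of $F$. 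Once axiom~(6) is secured in this manner, the remaining axioms~(1)--(5) of \Cref{def:LowDegreeHierarchy} reduce to routine bookkeeping of the recursion.
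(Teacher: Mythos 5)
Your identification of the obstacle---that axiom~(6) of \Cref{def:LowDegreeHierarchy} fails when $\gamma \setminus X$ disconnects, so that $\SFDecomp$ returns a forest $F$ with several trees while $U(\gamma) = V(\gamma)\setminus X$ straddles more than one---is exactly right, and it is indeed the crux. But the refinement you propose does not repair it. Suppose you promote each connected piece $Y_j$ of $\gamma\setminus X$ to its own level-$i$ component. The vertices of $X$ must still be placed somewhere. If you attach a ``slice'' of $X$ to some $Y_j$, then (because a single $x\in X$ can be adjacent to several $Y_j$'s, or to $X$-vertices assigned to a different slice) you create edges between two distinct level-$i$ components, violating property~(1). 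If instead $X$ is left entirely for level $i{-}1$, the resulting level-$(i{-}1)$ components have no valid parent among the $Y_j$'s, since $X\cap Y_j=\emptyset$, violating property~(2). Concretely, take $\gamma$ to be the path $a$--$b$--$c$ and suppose $\SFDecomp$ returns $X=\{b\}$; the pieces $\{a\}$ and $\{c\}$ cannot coexist as level-$i$ components together with any placement of $\{b\}$ without a same-level edge conflict or a parent containment failure. So the refinement is not merely ``routine bookkeeping'': it breaks the very laminarity it is supposed to preserve.

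The paper resolves this by changing \emph{where} $\SFDecomp$ is invoked, not by refining the component structure afterwards. It never calls $\SFDecomp(\gamma, V(\gamma))$ inside an individual component. Instead it iterates $(X_{i+1}, \mathcal{T}_i) \gets \SFDecomp(G, X_i)$ on the \emph{whole} graph $G$ with a shrinking terminal set $X_1 = V(G)$, $|X_{i+1}|\le |X_i|/2$, and only afterwards defines $\mathcal{C}_i$ as the connected components of $G\setminus(X_{i+1}\cup\cdots\cup X_p)$. Because the trees of $\mathcal{T}_i$ span the connected components of the \emph{coarser} graph $G\setminus X_{i+1}$, every $\gamma\in\mathcal{C}_i$ (a component of the finer graph) sits inside a single such component $Y$ and hence inside a single tree $\tau_Y$, so property~(6) follows directly---with the feature, stated explicitly in \Cref{def:LowDegreeHierarchy}(6), that several components may share one tree. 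This coarse-tree/fine-component mismatch is precisely what your per-component decomposition lacks, and it cannot be recovered by later refinement. The rest of your analysis---the $p=O(\log n)$ count, the $\Delta=O(\log^2 n)$ degree bound inherited from \Cref{lemma:SFDecomp}, and the $m^{1+o(1)}$ running time via vertex-disjointness at each level---is sound and carries over unchanged to the paper's variant.
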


\Cref{coro:VFConnOracle} is obtained by substituting the construction of low degree hierarchy in \cite{long2022near} with ours. Formally speaking, it is a corollary of \Cref{thm:LowDegreeHierarchy}, and Lemma 6.14, Theorem 7.2 and Section 7.3 in \cite{long2022near}.

\begin{corollary}
There is a deterministic vertex-failure connectivity oracle with $\what{O}(m) + O(d_{\star}m\log^{3} n)$ preprocessing time, $O(m\log^{3} n)$ space, $O(d^{2}(\log^{7} n + \log^{5}n\cdot\log^{4}d))$ update time and $O(d)$ query time.
\label{coro:VFConnOracle}
\end{corollary}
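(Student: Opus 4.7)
The plan is to obtain Corollary~\ref{coro:VFConnOracle} as a direct consequence of Theorem~\ref{thm:LowDegreeHierarchy} by plugging our improved low degree hierarchy into the oracle of \cite{long2022near} as a black box. The construction in \cite{long2022near} is explicitly modular: given any $(p,\Delta)$-low degree hierarchy satisfying Definition~\ref{def:LowDegreeHierarchy}, Lemma~6.14 there builds the auxiliary data structures, and Theorem~7.2 together with Section~7.3 there supplies the update and query procedures. All the complexity bounds in those results are expressed as polynomials in $d, d_\star, m, n$ times fixed-degree polynomials in $p$ and $\Delta$. Hence once we substitute the hierarchy of \cite{long2022near}, whose parameters were $p = O(\log n)$ and $\Delta = n^{o(1)}$, by the hierarchy of Theorem~\ref{thm:LowDegreeHierarchy}, whose parameters are $p = O(\log n)$ and $\Delta = O(\log^{2}n)$, the corollary will follow by bookkeeping.

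Concretely, first I would run the algorithm of Theorem~\ref{thm:LowDegreeHierarchy} on $G$ to obtain a $(p,\Delta)$-low degree hierarchy with $p = O(\log n)$ and $\Delta = O(\log^{2}n)$ in $m^{1+o(1)} = \what{O}(m)$ time. Next I would invoke the preprocessing of Lemma~6.14 of \cite{long2022near} on this hierarchy; tracing the exact exponents of $p$ and $\Delta$ in that lemma and substituting the new values yields $O(d_\star m \log^{3} n)$ additional preprocessing time and $O(m \log^{3} n)$ space, summing to the claimed $\what{O}(m) + O(d_\star m \log^{3} n)$ preprocessing bound. Finally I would adopt the update and query algorithms of Theorem~7.2 and Section~7.3 of \cite{long2022near} verbatim; their update time has the form $O(d^{2}\cdot\mathrm{poly}(\log n, \log \Delta, \log d))$, and substituting $\Delta = O(\log^{2}n)$ gives $O(d^{2}(\log^{7} n + \log^{5} n \log^{4} d))$, while the query time of $O(d)$ depends on neither $p$ nor $\Delta$ and is inherited unchanged.

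The main obstacle, and really the only step requiring care, is to verify that the preprocessing, update, and query routines of \cite{long2022near} use no property of the hierarchy beyond those guaranteed by Definition~\ref{def:LowDegreeHierarchy}, so that our hierarchy, which satisfies the very same definition with better parameters, is a legal drop-in replacement. Since Definition~\ref{def:LowDegreeHierarchy} is exactly the interface used by \cite{long2022near}, and since the parameters $p$ and $\Delta$ enter all their complexity expressions with only constant exponents, this reduces to a mechanical substitution. Once that verification is done, the four claimed bounds (preprocessing, space, update, query) all fall out at the same time.
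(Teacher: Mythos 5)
Your proposal is exactly the paper's argument: the paper states \Cref{coro:VFConnOracle} as a direct corollary of \Cref{thm:LowDegreeHierarchy} together with Lemma~6.14, Theorem~7.2, and Section~7.3 of \cite{long2022near}, i.e., a black-box substitution of the $(O(\log n), O(\log^2 n))$-hierarchy for the $(O(\log n), n^{o(1)})$-hierarchy, exactly as you describe. The only thing worth adding is that the modularity you ask to verify is already guaranteed by the fact that \Cref{def:LowDegreeHierarchy} is copied verbatim from \cite{long2022near} (Definition~5.1 there), so the interface check is immediate rather than something to be traced by hand.
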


\subsection{A Balanced Sparse Vertex Cut or a Low-Degree Steiner Tree}
\label{sect:BalCutorLDST}

The goal of this subsection is to show \Cref{lemma:CutOrSteinerTree}, a subroutine which given a graph with terminals, outputs either a balanced sparse vertex cut or a low-degree Steiner tree covering a large fraction of terminals. In fact, some expander decomposition algorithms (e.g. \cite{CGLNPS20}) exploit a similar subroutine which either computes a balanced sparse cut or certifies that a large part of the graph is an expander. Our subroutine can be viewed as a weaker and simplified version, because similar to the notion of expanders, a low-degree Steiner tree is also an object that certifies some kind of (weaker) well-linkedness.

At a high level, our algorithm uses a simplified \emph{cut-matching-game} framework. A cut-matching game is an interactive process between a cut player and a matching player with several rounds. Start from a graph with no edge. In each round, the cut player will select a cut and then the matching player is required to add a perfect matching on this cut.  It is known that there exists cut-player strategies against an arbitrary matching player that guarantees the final graph is an expander after $\wtilde{O}(1)$ rounds \cite{KRV09,KKOV07}. In the proof of \Cref{lemma:CutOrSteinerTree}, we show a cut-player strategy that only guarantees the final graph is a \emph{connected graph}. Combining a classic matching player as shown in \Cref{lemma:MatchingPlayer}, we can either find a balanced sparse vertex cut or embed a connected graph covering most of the terminals into the original graph with low vertex congestion. In the latter case, the embedding leads to a low-degree subgraph covering most of the terminals. Finally, picking an arbitrary spanning tree in this subgraph suffices.

Given a cut w.r.t. terminals, \Cref{lemma:MatchingPlayer} will either output a balanced sparse vertex cut or a large matching between terminals that is embeddable into the original graph with low vertex congestion. In fact, it is a simplified version of the matching player in \cite{long2022near}, and we defer its proof to \Cref{sect:MatchingPlayerProof}. 

\begin{lemma}
Let $G$ be an undirected graph with a terminal set $U$. Given a parameter $\phi$ and a partition $(A,B)$ of $U$, there is a deterministic algorithm that computes either
\begin{itemize}
\item a vertex cut $(L,S,R)$ with $|R\cap U|\geq|L\cap U|\geq \min\{|A|,|B|\}/3$ and $|S|\leq \phi\cdot|L\cap U|$, or
\item a matching $M$ between $A$ and $B$ with size $|M|\geq \min\{|A|,|B|\}/3$ s.t. there is an embedding $\Pi_{M\to G}$ of $M$ into $G$ with vertex congestion at most $\lceil 1/\phi\rceil$.
\end{itemize}
The running time is $m^{1+o(1)}$. If the output is a matching $M$, the algorithm can further output the edge set $E(\Pi_{M\to G})$ of the embedding $\Pi_{M\to G}$.
\label{lemma:MatchingPlayer}
\end{lemma}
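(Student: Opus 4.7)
The plan is to reduce \Cref{lemma:MatchingPlayer} to one call of vertex-capacitated maxflow on an auxiliary graph $\Gaug$, built from $G$ by adding a source $s$ adjacent to every vertex of $A$ and a sink $t$ adjacent to every vertex of $B$. I would assign each vertex of $V(G)\setminus(A\cup B)$ vertex capacity $\lceil 1/\phi\rceil$, and let $s$, $t$, and every vertex of $A\cup B$ have infinite capacity so that the matching endpoints themselves are not charged the congestion bound. Any integral $s$-$t$ flow in $\Gaug$ then decomposes into $A$-$B$ paths whose interiors use each internal vertex at most $\lceil 1/\phi\rceil$ times, giving precisely the structure the lemma asks for.

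First I would invoke the approximate vertex-capacitated maxflow of \cite{BGS21} on $\Gaug$ in $m^{1+o(1)}$ time, obtaining both a feasible flow $f$ of value $F$ and a vertex separator $\what{S}$ whose size and the value $F$ sandwich the true optimum up to an $n^{o(1)}$ factor. I would branch on whether $|\what{S}|$ lies below a threshold of order $\phi\cdot\min\{|A|,|B|\}$. In the cut branch, let $L$ and $R$ be the two sides of $V(G)\setminus\what{S}$ containing $A\setminus\what{S}$ and $B\setminus\what{S}$ respectively, swapping them if necessary to enforce $|R\cap U|\geq|L\cap U|$. Since at most $|\what{S}|$ terminals of each of $A,B$ fall inside $\what{S}$, the threshold choice yields $|L\cap U|\geq\min\{|A|,|B|\}/3$ and $|\what{S}|\leq\phi\cdot|L\cap U|$ as required.

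In the flow branch, where $F\geq\min\{|A|,|B|\}/3$, I would perform a standard flow path decomposition on $f$ (first clearing cycles). Because every $s$-to-$A$ edge and every $B$-to-$t$ edge has unit capacity, the decomposition produces $F$ paths whose endpoint pairs $\{(a_i,b_i)\}$ are pairwise distinct in each coordinate, so they form a matching $M$ between $A$ and $B$ with $|M|\geq\min\{|A|,|B|\}/3$. The path interiors constitute the embedding $\Pi_{M\to G}$, whose vertex congestion is at most $\lceil 1/\phi\rceil$ by feasibility of $f$, and the edge set $E(\Pi_{M\to G})$ is read off directly from the paths.

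The main obstacle, and the reason for the $m^{1+o(1)}$ runtime rather than $\wtilde{O}(m)$, is absorbing the $n^{o(1)}$ approximation slack of \cite{BGS21} while still honoring the clean parameters $|S|\leq\phi\cdot|L\cap U|$ and congestion exactly $\lceil 1/\phi\rceil$. The fix is to run the maxflow with the internal vertex capacity slightly inflated and the cut threshold slightly deflated, both by the approximation factor, so that in the cut branch the returned separator certifiably lies below the nominal bound on the true min cut, while in the flow branch the returned flow is feasible under the original capacities after scaling. Since this factor is $n^{o(1)}$ it disappears into the subpolynomial overhead of the running time, matching exactly the bottleneck the paper attributes to \cite{BGS21} in the introduction.
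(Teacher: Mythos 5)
Your high-level plan matches the paper's: build a vertex-capacitated $s$-$t$ flow instance, invoke the deterministic approximate vertex-capacitated maxflow of \cite{BGS21} (Lemma~B.2 of \cite{long2022near}), and branch between a cut output and an integral flow-path decomposition giving a matching plus embedding. However, there is a genuine gap in how your auxiliary graph is set up, and it breaks the matching case.

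You attach $s$ directly to every $a\in A$ and give every vertex of $A\cup B$ \emph{infinite} vertex capacity, then argue that the path decomposition yields a matching ``because every $s$-to-$A$ edge and every $B$-to-$t$ edge has unit capacity.'' But the subroutine you are invoking is a \emph{vertex}-capacitated flow with \emph{no} edge capacities (indeed, Lemma~B.2 requires $s,t$ to be the \emph{only} infinite-capacity vertices, so your construction is also outside its hypotheses). With no edge capacities and infinite capacity at $a\in A$, the flow is free to route many units along the single edge $\{s,a\}$, and the decomposition can produce several paths all starting at the same $a$ (likewise at the $B$ end). The endpoint pairs are then not a matching. You need a unit-capacity bottleneck per terminal that is separate from the congestion bound on $V(G)$. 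The paper does this by splicing an artificial vertex $u_A$ of capacity $1$ between $s$ and each $u\in A$ (and similarly $v_B$ between each $v\in B$ and $t$), while the original terminal keeps capacity $\lceil 1/\phi\rceil$ so it can still be an internal vertex of other paths. Without this gadget, your flow branch does not produce a matching.

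Two further inaccuracies, less central but worth correcting. First, you need an \emph{integral} flow before you can decompose into an $A$-$B$ path collection; the output of Lemma~B.2 is fractional, so an explicit flow-rounding step (the paper uses \cite{kang2015flow} plus dynamic trees) is required, not just ``clearing cycles.'' Second, the approximation slack of \cite{BGS21} is a constant $(1\pm\epsilon)$ factor (the paper takes $\epsilon=1/10$), not an $n^{o(1)}$ factor; there is no need to ``inflate capacities by $n^{o(1)}$.'' The constant slack is simply absorbed into the $\min\{|A|,|B|\}/3$ guarantee, and the $m^{1+o(1)}$ runtime comes from the maxflow call itself, not from any scaling trick.
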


\begin{lemma}
Let $G$ be an undirected graph with a terminal set $U$. Given parameters $0<\epsilon,\phi\leq 1/4$, there is a deterministic algorithm that computes either
\begin{itemize}
\item a vertex cut $(L,S,R)$ with $|R\cap U|\geq |L\cap U|\geq \epsilon|U|/3$ and $|S|\leq \phi\cdot |L\cap U|$, or
\item a subset $U_{\drop}\subseteq U$ of terminals with $U_{\drop}\leq \epsilon |U|$ and a Steiner tree $\tau$ on $G\setminus U_{\drop}$ of terminal set $U\setminus U_{\drop}$ with maximum degree $O(\log |U|/\phi)$.
\end{itemize}
\label{lemma:CutOrSteinerTree}
\end{lemma}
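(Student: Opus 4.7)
The plan is to design the algorithm via a simplified cut-matching game. I maintain an auxiliary graph $W$ with $V(W)=U$, initially edgeless, and run successive rounds that add matching edges to $W$, stopping once $W$ contains a connected component spanning at least $(1-\epsilon)|U|$ terminals.

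In each round I act as the cut player. After computing the current $W$-components, if some component has size $\geq (1-\epsilon)|U|$ the game stops. Otherwise I produce a partition $(A,B)$ of $U$ respecting component boundaries with $\min(|A|,|B|)\geq \epsilon|U|$: if the largest component $C^*$ satisfies $|C^*|> |U|/2$, set $A=V(C^*)$ and $B=U\setminus V(C^*)$ (so $|B|\geq \epsilon|U|$ since $|C^*|\leq (1-\epsilon)|U|$); otherwise all components have size $\leq |U|/2$, and processing the components in decreasing size order, always appending the next component to the currently-smaller side, yields $\min(|A|,|B|)\geq |U|/4\geq \epsilon|U|$ using $\epsilon\leq 1/4$. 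I then invoke \Cref{lemma:MatchingPlayer} on $G$ with terminals $U$, partition $(A,B)$, and parameter $\phi$. If it returns a vertex cut $(L,S,R)$, I output it immediately: this gives case~1 of the lemma since $|L\cap U|\geq \min(|A|,|B|)/3\geq \epsilon|U|/3$ and $|S|\leq \phi\cdot|L\cap U|$. Otherwise it returns a matching $M$ of size $\geq \epsilon|U|/3$ together with an embedding $\Pi_{M\to G}$ of vertex congestion at most $\lceil 1/\phi\rceil$; I add the edges of $M$ to $W$ and accumulate $E(\Pi_{M\to G})$ into a global subgraph $H\subseteq G$.

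The main technical step is to bound the number of rounds by $O(\log|U|)$. I would use the potential $\Phi(W)=\sum_{v\in U}\log_2|C(v)|$, where $C(v)$ denotes $v$'s $W$-component: initially $\Phi=0$, while once a giant component exists $\Phi=\Omega(|U|\log|U|)$. The key claim is that whenever the matching player returns a matching rather than a cut, $\Delta\Phi=\Omega(|U|)$ in that round, so only $O(\log|U|)$ rounds can occur before the game stops. Because every $W$-component sits entirely on one side of $(A,B)$, each edge of $M$ joins two \emph{distinct} $W$-components and therefore contributes an ``effective'' merge; combined with $|M|\geq \epsilon|U|/3$ and the balanced structure of the cut player's partition, a potential calculation shows that a constant fraction of the matched vertices have their $W$-component size at least doubled per round, yielding the desired $\Omega(|U|)$ increase in $\Phi$. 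I expect this potential estimate to be the main point requiring care.

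Once the game terminates, let $H\subseteq G$ be the union of all accumulated embedding paths. By additivity the vertex congestion on $H$ is at most (number of rounds)$\cdot\lceil 1/\phi\rceil=O(\log|U|/\phi)$, which bounds the maximum degree of $H$ up to a factor of $2$. Let $C^*$ be the giant $W$-component. Its vertices $V(C^*)\subseteq U$ all lie in a single connected subgraph $H^*\subseteq H$, since the embedding paths realize the $W$-edges connecting them. Take $\tau$ to be any spanning tree of $H^*$ and set $U_{\drop}=U\setminus V(\tau)$. Then $|U_{\drop}|\leq |U|-|V(C^*)|\leq \epsilon|U|$; by construction $\tau$ avoids every vertex in $U_{\drop}$, spans $U\setminus U_{\drop}$, and inherits the maximum degree bound $O(\log|U|/\phi)$ from $H$. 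This delivers case~2 of the lemma.
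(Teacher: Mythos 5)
Your overall construction is the same as the paper's: maintain an edgeless witness graph $W$ on $U$, repeatedly feed a component-respecting balanced bipartition to the matching player of \Cref{lemma:MatchingPlayer}, either halt with the returned vertex cut or add the matching to $W$, and stop once a component of $W$ spans $\geq (1-\epsilon)|U|$ terminals; then take a spanning tree of the subgraph of $G$ induced by the accumulated embedding edges. Your cut-player (case split on whether a component exceeds $|U|/2$) and your potential $\Phi(W)=\sum_{v\in U}\log|C(v)|$ are also the ones the paper uses. So the route is not genuinely different; the question is whether the round-count argument holds up.

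There is a real gap in the step you flagged. The uniform claim that $\Delta\Phi=\Omega(|U|)$ every round is false in your first subcase, where a giant component $C^\star$ with $|C^\star|>|U|/2$ already exists. There you set $A=V(C^\star)$ and $B=U\setminus V(C^\star)$, so $\min(|A|,|B|)=|B|$, which is only guaranteed $\geq\epsilon|U|$; the matching player then only promises $|M|\geq |B|/3$, which can be as small as $\epsilon|U|/3$. Each matched $B$-vertex does have its component at least double (it merges into $C^\star$), contributing $\geq 1$ to $\Phi$, but that is at most $|M|=O(\epsilon|U|)$ total; and the $A$-endpoints are already inside $C^\star$, whose size grows only from $c$ to $c+|M|$, so their contribution is $c\log\bigl(1+|M|/c\bigr)=O(|M|)=O(\epsilon|U|)$ as well. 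Thus $\Delta\Phi=\Theta(\epsilon|U|)$, not $\Omega(|U|)$, and for small $\epsilon$ your potential argument only yields $O\bigl((\log|U|)/\epsilon\bigr)$ rounds, which would inflate the degree bound to $O\bigl((\log|U|)/(\epsilon\phi)\bigr)$ rather than the claimed $O\bigl((\log|U|)/\phi\bigr)$. The paper sidesteps this by using the potential argument \emph{only} for rounds where all components are $\leq |U|/2$ (then both sides have $\geq |U|/4$ terminals, $|M|\geq|U|/12$, and $\Delta\Phi\geq|U|/12$), and handling the giant-component rounds with a separate geometric-decay argument: once $|C^\star|>|U|/2$ it stays that way, and each such round moves at least $|B|/3$ outside terminals into $C^\star$, so $|U|-|C^\star|$ shrinks by a factor $\leq 2/3$ per round and the threshold $(1-\epsilon)|U|$ is reached within $O(\log(1/\epsilon))$ further rounds. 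You need this second, non-potential argument for the giant-component phase; the single potential estimate cannot be pushed to $\Omega(|U|)$ there.
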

\begin{proof}
The algorithm is made up of an iteration phase and a postprocessing phase. The iteration phase will maintain an incremental graph $W$ with $V(W) = U$, called the \emph{witness graph}, and its embedding $\Pi_{W\to G}$ into $G$. Precisely, instead of storing the embedding $\Pi_{W\to G}$ explicitly, the algorithm will only store its edge set $E(\Pi_{W\to G})$. Initially, the witness graph $W^{(0)}$ has no edge and $E(\Pi_{W^{(0)}\to G})$ is empty. We use $W^{(i)}$ and $E(\Pi_{W^{(i)}\to G})$ to denote the witness graph and the edge set of the embedding right after the $i$-th round.

In the iteration phase, we do the following steps in the $i$-th round.
\begin{enumerate}
\item We compute all the connected components of $W^{(i-1)}$, which forms a partition ${\cal Q}^{(i-1)}$ of $U$ s.t. each $Q\in {\cal Q}^{(i-1)}$ is a subset of $|U|$, called a \emph{cluster}. If there is a cluster $Q^{\star}\in {\cal Q}^{(i-1)}$ has $|Q^{\star}|\geq (1-\epsilon)|U|$, then we terminate the iteration phase and go to the postprocessing phase, otherwise we proceed to the next step.
\item Because step 1 guarantees that all clusters in ${\cal Q}^{(i-1)}$ have size at most $(1-\epsilon)|U|$, we will partition ${\cal Q}^{(i-1)}$ into two groups ${\cal Q}_{A}$ and ${\cal Q}_{B}$ depending on the following two cases.
\begin{itemize}
\item[(a)] If the all clusters in ${\cal Q}^{(i-1)}$ have size at most $|U|/2$, then we partition ${\cal Q}^{(i-1)}$ into ${\cal Q}_{A}$ and ${\cal Q}_{B}$ s.t. $\sum_{Q\in {\cal Q}_{A}}|Q|\geq |U|/4$ and $\sum_{Q\in {\cal Q}_{B}}|Q|\geq |U|/4$.
\item[(b)] Otherwise, there is a unique cluster $Q^{\star}$ s.t. $|U|/2<|Q^{\star}|\leq (1-\epsilon)|U|$, and we let ${\cal Q}_{A} = {\cal Q}\setminus\{Q^{\star}\}$ and ${\cal Q}_{B} = \{Q^{\star}\}$.
\end{itemize}
Let $A_{i} = \bigcup_{Q\in{\cal Q}_{A}}Q$ and $B_{i} = \bigcup_{Q\in{\cal Q}_{B}}Q$. Note that by definition, $(A_{i},B_{i})$ forms a partition of $U$. We have $|A_{i}|,|B_{i}|\geq |U|/4$  in case (a) and $|A_{i}|,|B_{i}|\geq \epsilon|U|$ in case (b).
\item We apply \Cref{lemma:MatchingPlayer} on graph $G$ and terminal $U$ with parameter $\phi$ and the partition $(A_{i},B_{i})$ of $U$. If we get a vertex cut $(L,S,R)$, it will satisfy $|R\cap U|\geq |L\cap U|\geq \min\{|A_{i}|,|B_{i}|\}/3\geq \epsilon|U|/3$ and $|S|\leq \phi\cdot|L\cap U|$ as desired, so we can terminate the whole algorithm with $(L,S,R)$ as the output. Otherwise, we get a matching $M_{i}$ between $A_{i}$ and $B_{i}$ with size $|M_{i}|\geq |A_{i}|/3$, and the edge set $E(\Pi_{M_{i}\to G})$ of some embedding $\Pi_{M_{i}\to G}$ that has vertex congestion $O(1/\phi)$. Then we let $W^{(i)} = W^{(i-1)}\cup M_{i}$ and $E(\Pi_{W^{(i)}\to G}) = E(\Pi_{W^{(i-1)}\to G})\cup E(\Pi_{M_{i}\to G})$, and proceed to the next round.
\end{enumerate}

If the algorithm does not end at step 3, it exits the iteration phase at step 1, and then we perform the following postprocessing phase. Let $W$ denote the final witness graph with connected components ${\cal Q}$ and a cluster $Q^{\star}\in {\cal Q}$ s.t. $|Q^{\star}|\geq (1-\epsilon)|U|$. Note that $Q^{\star}\subseteq U$. Let $G'$ be the subgraph of $G$ induced by $E(\Pi_{W\to G})$. By the definition of embedding, vertices in $Q^{\star}$ are also connected in $G'$. In other words, $Q^{\star}$ is contained by a connected component of $G'$. We can take an arbitrary spanning tree $\tau$ of this component as a Steiner tree of $V(\tau)\cap U$, and define $U_{\drop} = U\setminus V(\tau)$ be the uncovered terminals. 

We now show that $U_{\drop}$ and $\tau$ have the desire property. The number of uncovered terminals is bounded by $|U_{\drop}|\leq |U|-|V(\tau)\cap U|\leq |U|-|Q^{\star}|\leq \epsilon|U|$, and $\tau$ is a Steiner tree of $U\setminus U_{\drop}$ with maximum degree $O(\log |U|/\phi)$ because $G'$ has maximum degree at most the vertex congestion of $\Pi_{W\to G}$, which is at most $O(\log |U|/\phi)$ by \Cref{claim:NumberOfRounds}.

\begin{claim}
The number of rounds in the iteration phase is at most $O(\log |U|)$, and the vertex congestion of the final embedding $\Pi_{W\to G}$ is at most $O(\log |U|/\phi)$. 
\label{claim:NumberOfRounds}
\end{claim}
\begin{proof}
Note that the early rounds will go into case (a) in step 2, while the late rounds will go into case (b). We bound the number of case-(a) rounds and case-(b) rounds separately.

The number of case-(a) rounds is at most $O(\log |U|)$ by the following reason. We define a potential function $\Phi(W)$ of the witness graph by
\[
\Phi(W) = \sum_{Q\in{\cal Q}}\sum_{v\in Q}\log|Q| = \sum_{Q\in{\cal Q}}|Q|\cdot\log |Q|.
\]
In particular, for each cluster $Q\in{\cal Q}$ and each vertex $v\in Q$, we say the potential at $v$ is $\log|Q|$.

Because initially $\Phi(W^{(0)}) = 0$ and we always have $\Phi(W)\leq |U|\log|U|$, it is sufficient to show that each case-(a) round increases the potential by at least $\Omega(|U|)$. To see this, consider the $i$-th case-(a) round. For each matching edge $\{u,v\}\in M_{i}$, let $Q^{(i-1)}_{v}$ (resp. $Q^{(i-1)}_{u}$) be the connected component of $W^{(i-1)}$ that contains $v$ (resp. $u$), and assume without loss of generality that $|Q^{(i-1)}_{v}|\leq |Q^{(i-1)}_{u}|$. Then the connected component $Q^{(i)}_{v}$ of $W^{(i)}$ that contains $v$ will have $|Q^{(i)}_{v}|\geq 2|Q^{(i-1)}_{v}|$, because $Q^{(i)}_{v}\supseteq Q^{(i-1)}_{u}\cup Q^{(i-1)}_{v}$. In other words, this round will increase the potential at $v$ (from $\log|Q^{(i-1)}_{v}|$ to $\log|Q^{(i)}_{v}|$) by at least $1$. Summing over $|M_{i}|$ matching edges, the total potential $\Phi(W)$ will be increased by at least $|M_{i}|\geq |U|/12$ as desired, because the potential at any $v\in V(W)$ will never drop.

It remains to show that the number of case-(b) rounds is at most $O(\log |U|)$. This is simply because in each round, the matching $M_{i}$ will merge at least $|A_{i}|/3$ terminals in $|A_{i}|$ into the giant cluster $Q^{\star}$, which means $|Q^{\star}|$ will reach the threshold $(1-\epsilon)|U|$ in $O(\log|U|)$ many case-$(b)$ rounds and then the iteration phase ends. 

The final embedding $\Pi_{W\to G}$ has vertex congestion $O(\log|U|/\phi)$ because there are $O(\log|U|)$ rounds and the embedding $\Pi_{M_{i}\to G}$ has vertex congestion $O(1/\phi)$ each round.
\end{proof}

\end{proof}

\subsection{The Low-Degree Steiner Forest Decomposition}
\label{sect:SFDecomp}

\Cref{lemma:SFDecomp} describes the low-degree Steiner forest decomposition algorithm, which invokes \Cref{lemma:CutOrSteinerTree} in a divide-and-conquer fashion. For simplicity, the readers can always assume $\epsilon = 1/2$, which is the value we will choose when constructing the low degree hierarchy. We introduce this tradeoff parameter $\epsilon$ just to show that our algorithm has the same flexibility as those in \cite{duan2020connectivity,long2022near}.

\begin{lemma}
Let $G$ be an undirected graph with a terminal set $U$. Given a parameter $0<\epsilon\leq 1/2$, there is a deterministic algorithm that computes
\begin{itemize}
\item a vertex set $X\subseteq V(G)$, called the separator, s.t. $|X|\leq \epsilon|U|$, and
\item for each connected component $Y$ of $G\setminus X$ s.t. $U$ intersects $V(Y)$, a Steiner tree $\tau_{Y}$ spanning $U \cap V(Y)$ on $Y$ with maximum degree $O(\log^{2}|U|/\epsilon)$.
\end{itemize}
The running time is $m^{1+o(1)}/\epsilon$.
\label{lemma:SFDecomp}
\end{lemma}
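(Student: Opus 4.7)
The plan is to invoke \Cref{lemma:CutOrSteinerTree} recursively in a divide-and-conquer framework. Call $\SFDecomp(G,U,\epsilon)$ as follows: run \Cref{lemma:CutOrSteinerTree} on $(G,U)$ with parameters $\epsilon' = \epsilon/2$ and $\phi = \Theta(\epsilon/\log|U|)$. If it returns a balanced sparse cut $(L,S,R)$, add $S$ to the separator and recurse on $(G[L], L\cap U, \epsilon)$ and $(G[R], R\cap U, \epsilon)$. If it returns a Steiner tree $\tau$ together with a dropped set $U_{\drop}$, then include $U_{\drop}$ in the separator and output $\tau$ as the tree for the unique terminal-containing component of the current subproblem. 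The base case $|U|\le 1$ is trivial.

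Correctness follows by induction. In the cut branch, since $(L,S,R)$ is a vertex cut, removing $S$ disconnects $G$ into $G[L]$ and $G[R]$, so the connected components of $G\setminus X$ at this subproblem are exactly the disjoint union of the components produced by the two recursive calls and the trees they supply are valid. In the tree branch, the Steiner tree $\tau$ lies in $G\setminus U_{\drop}$ and spans $U\setminus U_{\drop}$, so $U\setminus U_{\drop}$ is entirely contained in the single component of $G\setminus X$ containing $\tau$; every other component of this subproblem is terminal-free and needs no tree.

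For the separator size bound, I use an amortized per-terminal charging. At any cut branch the separator grows by $|S|\le\phi|L\cap U|$, which I charge as $\phi$ to each terminal in $L\cap U$. A terminal $v$ can appear on the $L$ side at most $\log_{2}|U|$ times along its root-to-leaf path, because $|L\cap U|\le|U^{\mathrm{node}}|/2$; hence its total cut-charge is at most $\phi\log_{2}|U|$, and summing over $v\in U$ the total cut contribution is at most $\phi|U|\log_{2}|U|$. At each recursion leaf the dropped set has $|U_{\drop}|\le\epsilon'|U^{\mathrm{leaf}}|$, and since the leaf terminal sets are pairwise disjoint subsets of $U$ their total contribution is at most $\epsilon'|U|$. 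Plugging $\phi=\epsilon/(2\log|U|)$ and $\epsilon'=\epsilon/2$ yields $|X|\le\epsilon|U|$. The maximum degree of each output Steiner tree is inherited directly from \Cref{lemma:CutOrSteinerTree}, giving $O(\log|U|/\phi)=O(\log^{2}|U|/\epsilon)$ as required.

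For the running time, at any cut node the edge sets of $G[L]$ and $G[R]$ are disjoint sub-multisets of the current edge set, so the total work at any fixed depth of the recursion is $m^{1+o(1)}$ via \Cref{lemma:CutOrSteinerTree}. The recursion depth is $O(\log|U|/\epsilon)$ because on the larger $R$ side we can only guarantee $|R\cap U|\le(1-\epsilon/6)|U^{\mathrm{node}}|$ from $|L\cap U|\ge\epsilon'|U|/3$, so following the large side reduces the terminal count by just a $1-\Theta(\epsilon)$ factor per level. Multiplying gives the claimed $m^{1+o(1)}/\epsilon$. The main obstacle is exactly this one-sided balance guarantee from \Cref{lemma:CutOrSteinerTree}: it forces the recursion to be unbalanced on the large side and rules out the naive linear recurrence $s(U)\le\phi|L\cap U|+s(L\cap U)+s(R\cap U)$, which is why the amortized per-terminal charging through the $\log_{2}|U|$ depth bound on $L$-side appearances is needed to close the analysis.
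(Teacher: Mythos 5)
Your proposal is correct and follows essentially the same divide-and-conquer approach as the paper: invoking \Cref{lemma:CutOrSteinerTree} with $\epsilon'=\epsilon/2$ and $\phi=\Theta(\epsilon/\log|U|)$, collecting leaf drops and intermediate cuts into $X$, bounding the cut contribution by the identical per-terminal charging argument (charge $\phi$ to each $L$-side terminal, at most $\log_2|U|$ times because $|L\cap U|\le|U^{\mathrm{node}}|/2$), and bounding the recursion depth by $O(\log|U|/\epsilon)$ from the one-sided balance guarantee. Your closing observation about why a naive linear recurrence fails and the amortized charging is needed matches the implicit structure of the paper's argument.
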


\begin{proof}

We will decompose the graph using \Cref{lemma:CutOrSteinerTree} recursively. The algorithm $\SFDecomp(G,U)$ is described in details in \Cref{algo:SFDecomp}.

\begin{algorithm}[H]
\caption{The low-degree Steiner forest decomposition $\SFDecomp(G,U)$\label{algo:SFDecomp}
}
\begin{algorithmic}[1]
\Require An undirected graph $G$ with terminals $U$.
\Ensure A separator $X$ and a collection ${\cal T}$ of Steiner trees $\{\tau_{Y}\}$.

\State Let $\epsilon' = \epsilon/2$ and $\phi = \epsilon'/\log|U|$
\State Apply \Cref{lemma:CutOrSteinerTree} on $G$ and $U$ with parameters $\epsilon'$ and $\phi$.
\If{\Cref{lemma:CutOrSteinerTree} outputs a vertex cut $(L,S,R)$}
\State $(X_{L},{\cal T}_{L})\gets \SFDecomp(G[L],L\cap U)$\label{line:SFDecompCase1}
\State $(X_{R},{\cal T}_{R})\gets \SFDecomp(G[R],R\cap U)$
\State Return $X = X_{L}\cup X_{R}\cup S$ and ${\cal T} = {\cal T}_{L}\cup {\cal T}_{R}$.
\Else
\State Otherwise \Cref{lemma:CutOrSteinerTree} outputs $U_{\drop}\subseteq U$ and a Steiner tree $\tau$ of $U\setminus U_{\drop}$ on $G\setminus U_{\drop}$.\label{line:SFDecompCase2}
\State Return $X = U_{\drop}$ and ${\cal T} = \{\tau\}$.
\EndIf
\end{algorithmic}
\end{algorithm}

We first show the correctness of the collection ${\cal T}$ of Steiner trees outputted by $\SFDecomp(G,U)$. 

\begin{itemize}
\item If \Cref{lemma:CutOrSteinerTree} outputs $U_{\drop}$ and $\tau$ (i.e. the algorithm reaches \Cref{line:SFDecompCase2}), then \Cref{lemma:CutOrSteinerTree} guarantees that $\tau$ is a Steiner tree of $U\setminus U_{\drop}$ on $G\setminus U_{\drop}$. We point out that $G\setminus U_{\drop}$ can possibly be a disconnected graph, but terminals $U\setminus U_{\drop}$ must belong to a same connected component $Y$ of $G\setminus U_{\drop}$. Hence, $Y$ is the only connected component of $G\setminus X$ (because we set $X = U_{\drop}$) s.t. $V(Y)$ intersects $U\setminus X$, and $\tau$ is a valid Steiner tree spanning $(U\setminus X)\cap V(Y)$ on $Y$ with maximum degree $O(\log|U|/\phi) = O(\log^{2}|U|/\epsilon)$. 

\item Now consider the other case that \Cref{lemma:CutOrSteinerTree} outputs a vertex cut $(L,S,R)$. We assume inductively that the two recursions of $G_{L}$ and $G_{R}$ output the correct ${\cal T}_{L}$ and ${\cal T_{R}}$. Furthermore, we have that the collection of connected components of $G\setminus X$ are the union of those of $G_{L}\setminus X_{L}$ and $G_{R}\setminus X_{R}$ by the property of vertex cut and our definition of $X$, which directly leads to the correctness of ${\cal T}$.

\end{itemize}

We next bound the size of the separator $X$. By the algorithm, the separator is the union of the sets $U_{\drop}$ of all \emph{leaf steps} (i.e. the steps that reach \Cref{line:SFDecompCase2}) and the cuts $S$ of all \emph{intermediate steps} (i.e. the steps that reach \Cref{line:SFDecompCase1}). Because the terminal sets of the leaf steps are mutually disjoint, the total size of $U_{\drop}$ are at most $\epsilon'|U|$ from the guarantee of \Cref{lemma:CutOrSteinerTree}. To bound the total size of $S$ of the intermediate steps, we will use a charging argument. Each intermediate step $\SFDecomp(G',U')$ will have $|S|\leq \phi|L\cap U'|$ and $|L\cap U'|\leq |U'|/2$ (from $|L\cap U'|\leq |R\cap U'|$) guaranteed by \Cref{lemma:CutOrSteinerTree}, and then we charge $\phi$ units to each terminals of $L\cup U'$. In the whole charging process, observe that each terminal $u\in U$ will be charged at most $\log|U|$ times ($\phi$ units each time), because from top to bottom, each time $u$ is charged, it will go to the left branch with number of terminals halved. Therefore, the total size of $S$ over all intermediate steps is at most $\log|U|\cdot\phi\leq \epsilon'|U|$. We can conclude that $|X|\leq \sum|U_{\drop}| + \sum|S|\leq 2\epsilon'|U|=\epsilon|U|$ as desired.

Lastly, we analyse the running time. Observe that at each intermediate step $\SFDecomp(G',U')$, no matter we go into the left or right branch, the number of terminals is decreased by a factor of $1-\epsilon'/3$, because \Cref{lemma:CutOrSteinerTree} guarantees that $|R\cap U'|\geq |L\cap U'|\geq \epsilon'|U'|/3$. Hence the maximum depth of the recursion is at most $O(\log |U|/\epsilon')$. The total running time is then $m^{1+o(1)}/\epsilon$, because the total graph size of steps with the same depth is at most $m$ and the running time of each step is almost-linear by \Cref{lemma:CutOrSteinerTree}.
\end{proof}

As shown in \cite{duan2020connectivity}, to construct a low-degree hierarchy $({\cal C},{\cal T})$, it suffices to invoke the low-degree Steiner forest decomposition (with $\epsilon = 1/2$) $O(\log n)$ times. The algorithm is shown in \Cref{algo:LowDegreeHierarchy}, and for completeness, the proof of correctness is included in \Cref{sect:CorrectnessLowDegreeHierarchy}.

\begin{algorithm}[h]
\caption{The construction of the low-degree hierarchy\label{algo:LowDegreeHierarchy}
}
\begin{algorithmic}[1]
\Require An undirected graph $G$.
\Ensure A low-degree hierarchy $({\cal C}, {\cal T})$.

\State Initialize $i = 1$, $X_{1} = V(G)$.
\While{$X_{i}$ is not empty}
\State $(X_{i+1},{\cal T}_{i})\gets \SFDecomp(G,X_{i})$ with $\epsilon = 1/2$.
\State $i\gets i+1$.
\EndWhile
\State $p\gets i-1$, which denotes the number of levels.
\For{each level $i$}
\State $U'_{i} \gets X_{i}\cup ...\cup X_{p}$.
\State ${\cal C}_{i}\gets$ the connected component of $G\setminus U'_{i+1}$ (particularly, $U'_{p+1} = \emptyset$).
\State $U_{i}\gets U'_{i}\setminus U'_{i+1}$, which denotes the terminals of level $i$.
\EndFor
\end{algorithmic}
\end{algorithm}

\section{The Preprocessing Algorithm}
\label{sect:Preprocessing}

In this section, we will describe the preprocessing algorithm, which basically first computes the low degree hierarchy on $G_{\on}:=G[V_{\on}]$, and then constructs some affiliated data structures on top of the hierarchy.

The low degree hierarchy $({\cal C}, {\cal T})$ is computed by applying \Cref{thm:LowDegreeHierarchy} on $G_{\on}$, if $G_{\on}$ is a connected graph. In the case that $G_{\on}$ is not connected, we simply apply \Cref{thm:LowDegreeHierarchy} on each of the connected component of $G_{\on}$. To simplify the notations, we use $({\cal C}, {\cal T})$ to denote the union of hierarchies of connected components of $G_{\on}$, and still say $({\cal C}, {\cal T})$ is the low degree hierarchy of $G_{\on}$. Note that $({\cal C},{\cal T})$ has all properties in \Cref{def:LowDegreeHierarchy}, except that the top level ${\cal C}_{1}$ are now made up of connected components of $G_{\on}$.

In \Cref{sect:ArtificialGraph}, we introduce the notions of artificial edges and the artificial graph $\hat{G}$. In \Cref{sect:GlobalOrder}, we define a global order $\pi$ based on Euler tour orders of Steiner trees in ${\cal T}$, and then construct a 2D range counting structure which can answers the number of edges in $E(\hat{G})$ between two intervals on $\pi$. Finally, in \Cref{sect:PreprocessingAnalyse}, we summarize what we will store, and analyse the preprocessing time and the space complexity.

\subsection{Artificial Edges and the Artificial Graph $\hat{G}$}
\label{sect:ArtificialGraph}

The artificial graph $\hat{G}$ is a multi-graph constructed by adding some \emph{artificial edges} into the original graph $G$ in the following way. For each component $\gamma\in{\cal C}$, let $A_{\gamma}$ collect the neighbors of $V(\gamma)$ in $G$, formally defined by $A_{\gamma} = \{v\mid v\in V(G)\setminus V(\gamma)\text{ s.t. }\exists \{u,v\}\in E(G)\text{ 
with }u\in V(\gamma)\}$.
We call $A_{\gamma}$ the \emph{adjacency list} of $\gamma$. Let $A_{\gamma,\on} = A_{\gamma}\cap V_{\on}$ and $A_{\gamma,\off} = A_{\gamma}\cap V_{\off}$. 
Next, we let $B_{\gamma,\off} = A_{\gamma,\off}$ and let $B_{\gamma,\on}$ be an arbitrary subset of $A_{\gamma,\on}$ with size $\min\{d_{\star}+1, |A_{\gamma,\on}|\}$. Then define $B_{\gamma} = B_{\gamma,\on}\cup B_{\gamma,\off}$. 

The artificial edges added by the component $\gamma$ is then $\hat{E}_{\gamma} = \{\{u,v\}\mid u\in A_{\gamma}, v\in B_{\gamma}, u\neq v\}$.
Namely, $\hat{E}_{\gamma}$ consists of a clique on $B_{\gamma}$ and a biclique between $B_{\gamma}$ and $A_{\gamma}\setminus B_{\gamma}$. Finally, the artificial graph $\hat{G}$ is defined by $\hat{G} = G +\sum_{\gamma\in{\cal C}}\hat{E}_{\gamma}$.
We emphasize that $\hat{G}$ is a multi-graph, and those edges connecting the same endpoints will have different identifiers.

We show some useful properties in \Cref{prop:5.1}. \Cref{prop:RobustB} of \Cref{prop:5.1} basically says that, if $A_{\gamma}$ has an on-vertex after update, then $B_{\gamma}$ also has one.

\begin{proposition}
We have the following.
\begin{enumerate}
\item\label{prop:SizeOfA}
 $\sum_{\gamma\in{\cal C}} |A_{\gamma}|\leq O(pm)$.
\item\label{prop:NumberOfAllEdges} $|E(\hat{G})|\leq O(pm(n_{\off} + d_{\star}))$.
\item\label{prop:RobustB} Given any update $D\subseteq V$ with $|D|\leq d_{\star}$, if $(A_{\gamma,\on}\setminus D)\cup(A_{\gamma,\off}\cap D)\neq \emptyset$, then we have $(B_{\gamma,\on}\setminus D)\cup (B_{\gamma,\off}\cap D)\neq\emptyset$.
\end{enumerate}
\label{prop:5.1}
\end{proposition}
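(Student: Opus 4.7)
The plan is to prove the three items in order. I expect no serious obstacle: items 1 and 2 are direct double-counting arguments over the laminar hierarchy, and item 3 reduces to a pigeonhole on the definition of $B_\gamma$. The only subtle point will be in item 3, where we must carefully exploit the ``$+1$'' in the size bound $|B_{\gamma,\on}| = \min\{d_\star + 1, |A_{\gamma,\on}|\}$.

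For item 1, I would bound $\sum_{\gamma \in {\cal C}_i}|A_\gamma|$ separately at each level $i$ and then sum over $i \in [1,p]$. By property (1) of \Cref{def:LowDegreeHierarchy}, the components in ${\cal C}_i$ are pairwise vertex-disjoint, so any edge $\{u,v\} \in E(G)$ either lies inside a single level-$i$ component (and contributes $0$ to $\sum_{\gamma \in {\cal C}_i}|A_\gamma|$) or connects two distinct level-$i$ components (and contributes at most once to each endpoint's adjacency list). Hence $\sum_{\gamma \in {\cal C}_i}|A_\gamma| \leq 2m$, and summing over the $p$ levels yields $O(pm)$.

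For item 2, I would observe from the definition that $\hat{E}_\gamma$ is contained in the union of a clique on $B_\gamma$ and a biclique between $B_\gamma$ and $A_\gamma \setminus B_\gamma$, so $|\hat{E}_\gamma| \leq |A_\gamma| \cdot |B_\gamma|$. By construction $|B_{\gamma,\off}| = |A_{\gamma,\off}| \leq n_\off$ and $|B_{\gamma,\on}| \leq d_\star + 1$, so $|B_\gamma| \leq n_\off + d_\star + 1$. Combining with $|E(\hat{G})| = m + \sum_\gamma |\hat{E}_\gamma|$ and applying item 1 gives $|E(\hat{G})| \leq m + (n_\off + d_\star + 1) \cdot O(pm) = O(pm(n_\off + d_\star))$.

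For item 3, I would split on which of the two sets in $(A_{\gamma,\on}\setminus D)\cup(A_{\gamma,\off}\cap D)$ is nonempty. The off-vertex case is immediate: since $B_{\gamma,\off} = A_{\gamma,\off}$ by definition, any element of $A_{\gamma,\off}\cap D$ already lies in $B_{\gamma,\off}\cap D$. For the on-vertex case, suppose $A_{\gamma,\on}\setminus D \neq \emptyset$; here the key is the dichotomy built into $|B_{\gamma,\on}| = \min\{d_\star + 1, |A_{\gamma,\on}|\}$. If $|A_{\gamma,\on}| \leq d_\star + 1$ then $B_{\gamma,\on} = A_{\gamma,\on}$ and we are done. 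Otherwise $|B_{\gamma,\on}| = d_\star + 1$, and since $|D| \leq d_\star$, a pigeonhole argument forces $|B_{\gamma,\on} \setminus D| \geq 1$. This final pigeonhole is exactly where the choice of $d_\star + 1$ (rather than $d_\star$) in the definition of $B_{\gamma,\on}$ becomes essential, and it is the only place any real argument is needed.
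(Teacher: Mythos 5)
Your proof is correct and follows essentially the same approach as the paper: item 1 is a charging argument over levels (the paper charges degrees of vertices appearing in at most $p$ components, while you charge edges per level—these are equivalent), item 2 is the identical direct count, and item 3 is the same dichotomy on $|A_{\gamma,\on}|$ versus $d_\star+1$ with the pigeonhole. One minor imprecision in your item 1: an edge need not lie inside a component or connect two components—one or both endpoints may lie outside every level-$i$ component (e.g.\ off-vertices, or vertices in the separator)—but this third case only lowers the count, so your bound $\sum_{\gamma\in{\cal C}_i}|A_\gamma|\leq 2m$ still holds by vertex-disjointness of ${\cal C}_i$.
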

\begin{proof}
\underline{Part 1.} For each $\gamma\in {\cal C}$, observe that $|A_{\gamma}|\leq \sum_{v\in V(\gamma)}\deg_{G}(v)$. Hence $\sum_{\gamma\in{\cal C}}|A_{\gamma}|\leq O(pm)$ because each vertex can appear in at most $p$ components (at most one at each level).

\underline{Part 2.} By definition, $|E(\hat{G})|\leq m + \sum_{\gamma\in C} |\hat{E}_{\gamma}|\leq m + \sum_{\gamma\in{\cal C}}|A_{\gamma}|\cdot|B_{\gamma}|$. Note that $|B_{\gamma}|\leq n_{\off} + d_{\star} + 1$ for all $\gamma\in{\cal C}$ by construction. Combining part 1, we have $|E(\hat{G})|\leq O(pm(n_{\off} + d_{\star}))$.

\underline{Part 3.} If $|A_{\gamma,\on}|\leq d_{\star}+1$, we have $A_{\gamma} = B_{\gamma}$ by construction and the proposition trivially holds. Otherwise, $B_{\gamma}$ will include $d_{\star}+1$ vertices in $A_{\gamma,\on}$. Because $|D|\leq d_{\star}$, at least one of them will survive in $B_{\gamma,\on}\setminus D$, which implies $(B_{\gamma,\on}\setminus D)\cup (B_{\gamma,\off}\cap D)\neq\emptyset$. 
\end{proof}

\subsection{The Global Order and Range Counting Structures}
\label{sect:GlobalOrder}

Next, we define an order, called the \emph{global order} and denoted by $\pi$, over the whole vertex set $V(G)$, based on the Euler Tour orders of Steiner trees in ${\cal T}$.

For each $\tau\in{\cal T}$, we define its Euler tour order $\ET(\tau)$ as an ordered list of vertices in $V(\tau)$ ordered by the time stamps of their first appearances in an Euler tour of $\tau$ (starting from an arbitrary root). Intuitively, the Euler tour order $\ET(\tau)$ can be interpreted as a linearization of $\tau$, i.e. after the removal of failed vertices in $\tau$, the remaining subtrees will corresponding to intervals on $\ET(\tau)$, as shown in \Cref{lemma: Interval}.

\begin{lemma}[Lemma 6.3 in \cite{long2022near}, Rephrased]

    Let $\tau$ be an undirected tree with maximum vertex degree $\Delta$. A removal of d failed vertices from $\tau$ will split $\tau$ into at most $O(\Delta d)$ subtrees $\hat{\tau}_1, \hat{\tau}_2,..., \hat{\tau}_{O(\Delta d)}$, and there exists a set ${\cal I}_{\tau}$ of at most $O(\Delta d)$ disjoint intervals on $\text{ET}(\tau)$, such that each interval is owned by a unique subtree and for each subtree $\tau_i$, $V (\tau_i)$ is equal to the union of intervals it owns.

    Furthermore, by preprocessing $\tau$ in $O(|V (\tau)|)$ time, we can store $\ET(\tau)$ and some additional information in $O(|V(\tau)|)$ space, which supports the following operations.
    \begin{itemize}
    \item Given a set $D_{\tau}$ of $d$ failed vertices, the intervals ${\cal I}_{t}$ can be computed in $O(\Delta d \log (\Delta d))$ update time.
    \item Given a vertex $v\in V(\tau)\setminus D_{\tau}$, it takes $O(\log d)$ query time to find an interval $I\in{\cal I}_{t}$ s.t. vertices in $I$ are connected to $v$ in $\tau\setminus D_{\tau}$.
    \end{itemize}

    \label{lemma: Interval}
\end{lemma}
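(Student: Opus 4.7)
The plan is to root $\tau$ at an arbitrary vertex $r$ and take the DFS preorder of $\tau$ as $\ET(\tau)$. During preprocessing I would store, for every vertex $v$, its preorder index $\mathrm{pre}(v)$, subtree-end index $\mathrm{end}(v)$, parent, depth, and children list in $O(|V(\tau)|)$ time and space, together with a standard constant-time LCA oracle and a level-ancestor structure on $\tau$. Throughout, $T_w$ denotes the subtree of $\tau$ rooted at $w$, which occupies the contiguous range $[\mathrm{pre}(w),\mathrm{end}(w)]$ in $\ET(\tau)$.

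For the combinatorial claim, I would first observe that removing $D_\tau$ from $\tau$ creates at most $1+\sum_{v\in D_\tau}(\deg_\tau(v)-1)=O(\Delta d)$ components. Define the set of \emph{component roots} as $R=\bigl(\{r\}\cup\{c:c\text{ is a child of some }v\in D_\tau\}\bigr)\setminus D_\tau$; it has size at most $\Delta d+1$ and is in natural bijection with the components of $\tau\setminus D_\tau$ by mapping each component to its unique vertex closest to $r$. For $w\in R$, call $v\in D_\tau$ a \emph{maximal $D$-descendant of $w$} if the $\tau$-path from $w$ to $v$ contains no $D$-vertex other than $v$ itself. Excising the pairwise-disjoint sub-intervals $[\mathrm{pre}(v_i),\mathrm{end}(v_i)]$ of the $k$ maximal $D$-descendants of $w$ from $[\mathrm{pre}(w),\mathrm{end}(w)]$ leaves exactly $k+1$ intervals whose union, by a one-line preorder argument, is precisely the vertex set of the $w$-component. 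Since each $v\in D_\tau$ with $p(v)\notin D_\tau$ is a maximal $D$-descendant of exactly one root, the total interval count is at most $|R|+|D_\tau|=O(\Delta d)$.

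For the update given $D_\tau$, the algorithm would: (i) sort $D_\tau$ by $\mathrm{pre}$ in $O(d\log d)$ time; (ii) iterate the children lists of all $v\in D_\tau$ to collect $R$ in $O(\Delta d)$ time; (iii) merge $R$ with $D_\tau$ by preorder into a sorted sequence of $O(\Delta d)$ anchors; (iv) build the \emph{virtual tree} (auxiliary tree) on $D_\tau\cup R$ via the standard stack-plus-LCA construction in $O(\Delta d\log(\Delta d))$ time, which in particular identifies for each $v\in D_\tau$ with $p(v)\notin D_\tau$ its unique parent root $w(v)\in R$; and (v) for each $w\in R$, subtract the sub-intervals of its maximal $D$-descendants from $[\mathrm{pre}(w),\mathrm{end}(w)]$ and append the resulting intervals to $\mathcal{I}_\tau$, annotated with $w$. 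Step (iv) is what I expect to be the most delicate: one must correctly handle \emph{nested} $D$-vertices (those whose $\tau$-parent is itself in $D_\tau$) so that they are attributed to no root, rather than being spuriously counted as maximal $D$-descendants of a deeper root and thereby inflating the interval count or corrupting the subtraction in (v).

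For the query on $v\in V(\tau)\setminus D_\tau$, I would binary-search the preorder-sorted $D_\tau$ for the $D$-vertex $v^{*}$ of maximum depth with $\mathrm{pre}(v^{*})\leq\mathrm{pre}(v)\leq\mathrm{end}(v^{*})$, which is $v$'s closest $D$-ancestor, in $O(\log d)$ time. If no such $v^{*}$ exists set $w(v)=r$; otherwise $w(v)$ is the unique child of $v^{*}$ on the $\tau$-path to $v$, recovered by a level-ancestor query at depth $\mathrm{depth}(v^{*})+1$. Returning any interval of $\mathcal{I}_\tau$ annotated with $w(v)$ — for instance the one containing $\mathrm{pre}(w(v))$ — then takes $O(1)$ further lookup, yielding the claimed $O(\log d)$ query time.
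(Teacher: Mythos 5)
The paper does not prove this lemma; it cites it verbatim from Long--Saranurak (Lemma 6.3 of \cite{long2022near}), so there is no in-paper proof to compare against. Your reconstruction follows the same Euler-tour/preorder approach as the cited result, and the combinatorial core is correct: rooting $\tau$, taking $\ET(\tau)$ to be the preorder, identifying each surviving component with its unique topmost vertex $w$ in $R=(\{r\}\cup\{\text{children of }D_\tau\})\setminus D_\tau$, and carving $[\mathrm{pre}(w),\mathrm{end}(w)]$ by the preorder ranges of $w$'s maximal $D$-descendants does yield a partition of $V(\tau)\setminus D_\tau$ into disjoint intervals, with the count bounded by $|R|+|D_\tau|=O(\Delta d)$.

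Two implementation steps need repair. \emph{Step (iv) is wrong as stated}: the parent of $v\in D_\tau$ in the virtual tree of $D_\tau\cup R$ need not be $w(v)$, because an LCA node can be inserted strictly between $v$ and $w(v)$. Concretely, take root $r$, a single non-$D$ vertex $x$ below it, two non-$D$ children $a,b$ of $x$, and $D_\tau=\{v_1,v_2\}$ with $v_1$ a child of $a$ and $v_2$ a child of $b$; then $R=\{r\}$ and $w(v_1)=w(v_2)=r$, but the virtual tree on $\{r,v_1,v_2\}$ contains $x=\LCA(v_1,v_2)$, which becomes the virtual-tree parent of both $v_1$ and $v_2$. (The nested-$D$ issue you flag is in fact a non-issue: if $p(v)\in D_\tau$ then $p(v)$ itself is a virtual-tree node and is correctly $v$'s parent there.) The cleanest fix is to discard the virtual tree entirely and compute $w(v)$ for each $v\in D_\tau$ with $p(v)\notin D_\tau$ by applying your own query routine to $p(v)$: closest $D$-ancestor followed by one level-ancestor step, $O(\log d)$ per vertex and $O(d\log d)$ total, which is within the stated update budget.

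\emph{The closest-$D$-ancestor query is also underspecified.} A single binary search over the preorder-sorted $D_\tau$ returns the preorder predecessor $u^-$ of $v$, which may lie in a sibling subtree (i.e.\ $\mathrm{end}(u^-)<\mathrm{pre}(v)$) even when $v$ does have a $D$-ancestor, so it does not directly give the deepest $D$-ancestor. A working $O(\log d)$ primitive: the $D$-ancestors of $v$ are exactly the $u\in D_\tau$ with $\mathrm{pre}(u)\le\mathrm{pre}(v)\le\mathrm{end}(u)$, and the closest one is the rightmost such index in the preorder-sorted array; after building a range-max sparse table over $\mathrm{end}(\cdot)$ in $O(d\log d)$ at update time, a binary search guided by range-max queries over the prefix $\{\mathrm{pre}\le\mathrm{pre}(v)\}$ finds this index in $O(\log d)$. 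With these two repairs your proposal achieves the claimed bounds.
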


Given the Euler tour orders of all $\tau\in{\cal T}$, we define the global order $\pi$ as follows. We first  concatenate $\ET(\tau)\cap U(\tau)$ (i.e. the restriction of $\ET(\tau)$ on the terminals of $\tau$) of all $\tau\in{\cal T}$ in an arbitrary order, and then append all vertices in $V_{\off}$ to the end. Recall that $\{U(\tau)\mid \tau\in{\cal T}\}$ partitions $V_{\on}$, so $\pi$ is well-defined.

With the global order $\pi$, we will construct a 2D-range counting structure $\Table$, which can answer the number of edges in $E(\hat{G})$ that connecting two disjoint intervals on $\pi$. We first initialize $\Table_{\init}$ to be an ordinary 2D array on range $\pi\times \pi$. For each $u,v\in \pi$, we store a non-negative integer in the entry $\Table_{\init}(u,v)$ representing the number of edges in $E(\hat{G})$ connecting vertices $u$ and $v$.

\begin{lemma}
Suppose that we can access the lists 
$A_{\gamma}$ and $B_{\gamma}$ for all $\gamma\in{\cal C}$. There is an combinatorial algorithm that computes $\Table_{\init}$ in $O(|E(\hat{G})|)$ time, or $\Table_{\init}$ can be computed in $O(p\cdot n^{\omega})$ time using fast matrix multiplication.
\label{lemma:InitializeTable}
\end{lemma}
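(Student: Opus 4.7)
The plan is to handle the two running time bounds separately, but both rely on the edge decomposition $E(\hat{G}) = E(G) + \sum_{\gamma \in {\cal C}} \hat{E}_{\gamma}$, where $\hat{E}_{\gamma} = \{\{u,v\} : u \in A_{\gamma}, v \in B_{\gamma}, u \neq v\}$. Before either algorithm runs, I would allocate $\Table_{\init}$ as a $|\pi| \times |\pi|$ array of zeros, indexed by the global order $\pi$ so that each entry is accessible in $O(1)$ time.

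For the combinatorial bound I would simply enumerate every edge of $\hat{G}$ and increment the corresponding entry. First pass over $E(G)$ in $O(m)$ time. Then for each $\gamma \in {\cal C}$, iterate over all ordered pairs $(u, v) \in A_{\gamma} \times B_{\gamma}$ with $u \neq v$, and increment $\Table_{\init}(u, v)$ once for the artificial edge $\{u, v\} \in \hat{E}_{\gamma}$ (being careful to visit each unordered pair exactly once, since $B_{\gamma} \subseteq A_{\gamma}$ could otherwise cause double counting). The total work across components is $O(\sum_{\gamma} |A_{\gamma}| \cdot |B_{\gamma}|) = O(|E(\hat{G})|)$ by definition, matching the claim.

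For the matrix-multiplication bound I would process each of the $p$ levels of the hierarchy independently. Fix level $i$ and define two $n \times |{\cal C}_i|$ indicator matrices $X^{(i)}_{u,\gamma} = \mathds{1}[u \in A_{\gamma}]$ and $Y^{(i)}_{v,\gamma} = \mathds{1}[v \in B_{\gamma}]$. Using that $B_{\gamma} \subseteq A_{\gamma}$, the pair $\{u,v\}$ (with $u \neq v$) belongs to $\hat{E}_{\gamma}$ iff $(u \in A_{\gamma} \wedge v \in B_{\gamma}) \vee (u \in B_{\gamma} \wedge v \in A_{\gamma})$, which by inclusion-exclusion equals $[u \in A_{\gamma}][v \in B_{\gamma}] + [u \in B_{\gamma}][v \in A_{\gamma}] - [u \in B_{\gamma}][v \in B_{\gamma}]$. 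Summing over $\gamma \in {\cal C}_i$, the contribution of level $i$ to $\Table_{\init}(u,v)$ is the $(u,v)$ entry of
\[
X^{(i)}(Y^{(i)})^{\top} + Y^{(i)}(X^{(i)})^{\top} - Y^{(i)}(Y^{(i)})^{\top}.
\]
Each of the three products is an $n \times n$ matrix multiplication costing $O(n^{\omega})$, so each level costs $O(n^{\omega})$, giving $O(p \cdot n^{\omega})$ in total. Finally, I would add $E(G)$'s contribution in $O(m)$ time and zero the diagonal.

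The one subtle point, and the main thing to verify carefully, is the inclusion-exclusion formula above: because $B_{\gamma} \subseteq A_{\gamma}$, the two clauses $(u \in A_{\gamma}, v \in B_{\gamma})$ and $(v \in A_{\gamma}, u \in B_{\gamma})$ coincide exactly when both $u, v \in B_{\gamma}$, so a naively symmetric sum would overcount clique-on-$B_{\gamma}$ edges. Case-checking (both in $B_{\gamma}$; exactly one in $B_{\gamma}$; in $A_{\gamma} \setminus B_{\gamma}$ only; outside $A_{\gamma}$) confirms the corrected formula counts each edge of $\hat{E}_{\gamma}$ precisely once. The remaining bookkeeping, including that $\Table_{\init}$ is defined with respect to the multiset $E(\hat G)$ so multiplicity contributions from different $\gamma$'s correctly accumulate, is immediate.
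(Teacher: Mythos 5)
Your proof is correct and uses essentially the same approach as the paper: both exploit the clique-minus-clique structure of each $\hat{E}_{\gamma}$ to reduce the artificial-edge counts to a constant number of products of $n\times|\mathcal{C}|$ (or per-level $n\times|\mathcal{C}_i|$) indicator matrices. The only cosmetic difference is the inclusion-exclusion identity: the paper sets $Y(u,\gamma)=\mathds{1}[u\in A_\gamma\setminus B_\gamma]$ and uses $\mathds{1}[\{u,v\}\in\hat{E}_\gamma]=\mathds{1}[u,v\in A_\gamma]-\mathds{1}[u,v\in A_\gamma\setminus B_\gamma]$ to get away with two products, while your choice $Y=\mathds{1}[\cdot\in B_\gamma]$ leads to three, but the asymptotics are identical.
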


\begin{proof}
A trivial construction of $\Table_{\init}$ is to construct the edge sets $E(\hat{G})$ explicitly, and then scan the edges one by one. Obviously, this takes $O(|E(\hat{G})|)$ time.

When $|E(\hat{G})|$ is large, we can use fast matrix multiplication (FMM) to speed up the construction of $\Table_{\init}$. Recall that $E(\hat{G}) = E(G) + \sum_{\gamma\in{\cal C}}\hat{E}_{\gamma}$. We first add the contribution of $E(G)$ into $\Table_{\init}$ using the trivial algorithm, which takes $O(m)$ time. Next, we compute the contribution of artificial edges, i.e. $\sum_{\gamma\in{\cal C}}\hat{E}_{\gamma}$, using FMM. We construct a matrix $X$ with $n$ rows and $|{\cal C}|$ columns, where rows are indexed by the global order $\pi$ and columns are indexed by components (in an arbitrary order). For each vertex $u\in \pi$ and component $\gamma\in{\cal C}$, the entry $X(u,\gamma) = 1$ if and only if $u\in A_{\gamma}$. Similarly, we define an $n$-row $|{\cal C}|$-column matrix $Y$, in which each entry $Y(u,\gamma)=1$ if and only if $u\in A_{\gamma}\setminus B_{\gamma}$. Let $Z = X\cdot X^\intercal - Y\cdot Y^{\intercal}$. Observe that, for each pair of distinct vertices $u,v\in \pi$,
\begin{align*}
Z(u,v) &= \sum_{\gamma\in{\cal C}}(X(u,\gamma)\cdot X(v,\gamma) - Y(u,\gamma)\cdot Y(v,\gamma))\\
&= \sum_{\gamma\in{\cal C}}\mathds{1}[u,v\in A_{\gamma}] - \mathds{1}[u,v\in A_{\gamma}\setminus B_{\gamma}]\\
&= \sum_{\gamma\in{\cal C}}\mathds{1}[\{u,v\}\in \hat{E}_{\gamma}].
\end{align*}
Therefore, the matrix $Z$ count the contribution of $\sum_{\gamma}\hat{E}_{\gamma}$ correctly and the last step is to add $Z$ to $\Table_{\init}$. The construction time is dominated by the computation of $Z$, which takes $O(p\cdot n^{\omega})$ time because it involves multiplying an $n\times |{\cal C}|$ matrix and a $|{\cal C}|\times n$ matrix, and $|{\cal C}| = O(pn)$.

\end{proof}

\begin{lemma}
With access to the positive entries of $\Table_{\init}$, we can construct a data structure $\Table$ that given any disjoint intervals $I_{1}$ and $I_{2}$ on $\pi$, answers in $O(\log n)$ time the number of edges in $E(\hat{G})$ with one endpoint in $I_{1}$ and the other one in $I_{2}$. The structure $\Table$ can be constructed in $O(N\log n)$ time and takes space $O(N\log n)$, where $N$ denotes the number of positive entries in $\Table_{\init}$.
\label{lemma:2DRangeCounting}
\end{lemma}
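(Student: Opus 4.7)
The plan is to reinterpret the task as a standard two-dimensional orthogonal range sum. View each positive entry $\Table_{\init}(u,v)>0$ as a weighted point at position $(u,v)$ in the grid $\pi\times\pi$, with weight equal to the stored value; there are $N$ such points. Because the query intervals $I_{1},I_{2}$ are disjoint, summing the weights of points in the rectangle $I_{1}\times I_{2}$ counts every edge of $E(\hat{G})$ with endpoints split between $I_{1}$ and $I_{2}$ exactly once: the symmetric copy of such an edge at position $(v,u)$ sits in $I_{2}\times I_{1}$ and is excluded by $I_{1}\cap I_{2}=\emptyset$, and no edge contributes with both endpoints on the same side of the partition. So it suffices to build a two-dimensional range-sum structure on the $N$ weighted points.

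I would use a standard range tree on the first coordinate. Build a balanced binary tree $T$ whose leaves are the $O(N)$ distinct first coordinates that appear among the points. At each node $v\in T$, let $P_{v}$ be the set of points whose first coordinate lies in $v$'s range, and store the second coordinates of $P_{v}$ in a sorted array $L_{v}$ augmented by the prefix sums of the corresponding weights. Since a single point lives only in the $O(\log n)$ ancestors of its leaf, the total size of all $L_{v}$'s is $O(N\log n)$. After one global sort of the points by second coordinate, the arrays $L_{v}$ can be constructed bottom-up via merging, so the whole structure takes $O(N\log n)$ time to build and $O(N\log n)$ space.

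To answer a query $(I_{1},I_{2})$, decompose $I_{1}$ into its $O(\log n)$ canonical subtrees of $T$; at each such node $v$, two binary searches in $L_{v}$ isolate the sublist of points whose second coordinate falls in $I_{2}$, and the associated weighted sum is read off from the prefix sums. This already yields $O(\log^{2}n)$ query time. To hit the stated $O(\log n)$ bound, I would add fractional cascading in the Chazelle--Guibas style, installing cross-pointers between each $L_{v}$ and the lists at its two children; then a single $O(\log n)$ binary search at the root locates the relevant positions in every descendant list in constant amortized time, and the $O(\log n)$ canonical contributions can be collected in $O(\log n)$ overall. The cascading layer only inflates list sizes by a constant factor, so the $O(N\log n)$ space and construction bounds are preserved. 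The only real subtlety is the bookkeeping for the cascading pointers along the outer tree; the rest is textbook range-tree machinery, and if a $\log n$ factor is acceptable in the query one may simply drop the cascading step and use the plain $O(\log^{2}n)$ range tree.
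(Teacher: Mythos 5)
Your proposal is correct and matches the paper's approach: the paper simply invokes a ``standard weighted 2D range counting structure'' built via textbook methods (it names both range trees and persistent segment trees), whereas you spell out the range-tree variant with fractional cascading to reach $O(\log n)$ query time. The paper's alternative suggestion of a persistent segment tree (querying the difference of two prefix versions) gets the same $O(N\log n)$ space/build and $O(\log n)$ query without the cascading bookkeeping, but both are the same textbook machinery and your argument is sound.
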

\begin{proof}
We simply construct $\Table$ as a standard weighted 2D range counting structure of $\Table_{\init}$, By using textbook algorithms such as range trees and persistent segment trees, we can construct $\Table$ in $O(N\log n)$ time and it takes space $O(N\log n)$. The correctness of $\Table$ follows the definition of $\Table_{\init}$. 
\end{proof}

\subsection{Preprocessing Time and Space Analysis}
\label{sect:PreprocessingAnalyse}

In conclusion, we will compute and store the following in the preprocessing phase.

\begin{itemize}
\item 
First, we store the low degree hierarchy $({\cal C}, {\cal T})$. Constructing the low degree hierarchy takes $\hat{O}(m)$ time by \Cref{thm:LowDegreeHierarchy}. Storing the low degree hierarchy explicitly takes $O(pn)$ space, because for each level $i$, the components in ${\cal C}_{i}$ are vertex disjoint, also Steiner trees in ${\cal T}_{i}$.

\item
Next, for each $\gamma\in{\cal C}$, we store the lists $A_{\gamma}$ and $B_{\gamma}$ after ordering them by $\pi$. Computing the lists $A_{\gamma}$ and $B_{\gamma}$ takes $O(pm)$ time by checking the incident edges of each vertex in each component. Storing the lists $A_{\gamma}$ and $B_{\gamma}$ takes $O(pm)$ space by \Cref{prop:SizeOfA} in \Cref{prop:5.1}. Additionally, for each $\gamma\in{\cal C}$, store the list $A_{\gamma,\on}$.

For each $v\in V_{\off}$ and $\gamma\in{\cal C}$, store a binary indicator to indicate whether $v\in A_{\gamma,\off}$ or not. Computing the indicators takes $O(pm)$ time by scanning all the lists $A_{\gamma}$. Storing the indicators explicitly takes $O(|V_{\off}|\cdot|{\cal C}|) = O(pn\cdot |V_{\off}|)$ space.

\item We also store the global order $\pi$, which takes $O(n)$ space. For each $\tau\in {\cal T}$, we store $\ET(\tau)$ and the additional information stated in \Cref{lemma: Interval} in $O(|V(\tau)|)$ space. Computing the things above takes totally $\sum_{\tau\in {\cal T}}|V(\tau)| = O(pn)$ time by \Cref{lemma: Interval}.

\item Finally, we store the data structure $\Table$. Combining \Cref{prop:NumberOfAllEdges} in \Cref{prop:5.1} and \Cref{lemma:InitializeTable,lemma:2DRangeCounting}, we can compute $\Table$ in $O(pm(n_{\off}+d)\log n)$ time using an combinatorial algorithm, or in $O(p\cdot n^{\omega}\log n)$ time using fast matrix multiplication. The space to store $\Table$ is $\min\{pm(n_{\off} + d)\log n, n^{2}\}$.
\end{itemize}

In conclusion, the total preprocessing time can be upper bounded by $\what{O}(m) + O(pm(n_{\off}+d)\log n)$ using an combinatorial algorithm, then $t_{p}=\hat{\Omega}(md)$, or $\hat{O}(m) + O(p\cdot n^{\omega}\log n)$ using fast matrix multiplication. The space complexity is $O(\min\{pm(n_{\off} + d)\log n,n^{2}\})$. Because the low degree hierarchy has $p=O(\log n)$ levels, the preprocessing time is $\what{O}(m) + O(\min\{m(n_{\off} + d)\log^{2} n,n^{\omega}\log^{2} n\})$, and the space is $O(\min\{m(n_{\off} + d)\log^{2} n,n^{2}\})$.

\section{The Update Algorithm}
\label{sect:UpdateQuery}

Let $D\subseteq V(G)$ be a given update. We use $D_{\on} = D\cap V_{\on}$ to denote the the vertices that will be turned off in this update and $D_{\off} = D\cap V_{\off}$ to denote the vertices that will be turned on. Let $V_{\new} = (V_{\on}\setminus D_{\on})\cup D_{\off}$ be the on-vertices after updates. 

Our update strategy is to recompute the connectivity of a subset of \emph{affected vertices} $Q^{\star}\subseteq V_{\new}$ on some affected graph $G^{\star}$. In \Cref{sect:AffectedGraph}, we will define $Q^{\star}$ and $G^{\star}$, and prove that $Q^{\star}$ has the same connectivity on the affected graph $Q^{\star}$ and the updated original graph $G[V_{\new}]$. In \Cref{sect:SolvingIntervals}, we will partition $G^{\star}$ into a small number of sets s.t. each set forms an \emph{interval} on the global order $\pi$ and it is certified to be connected by some Steiner tree in ${\cal T}$. Thus, it suffices to solve the connectivity of intervals on $Q^{\star}$, which is formalized in \Cref{lemma:Boruvka}. In \Cref{sect:QueryAlgorithm}, we will discuss how to answer the connectivity of $u$ and $v$ on $V_{\new}$ by reducing it to a query on $Q^{\star}$.

\begin{theorem}
There exists a deterministic fully dynamic sensitivity oracle for subgraph connectivity with  $O(\min\{m(n_{\off} + d_{\star})\log^{2}n,n^{2}\})$ space, $O(d^{2}\log^{7} n)$ update time and $O(d)$ query time. The preprocessing time is $\what{O}(m) + O(m(n_{\off} + d_{\star})\log^{2} n)$ by a combinatorial algorithm, and $\what{O}(m) + O(n^{\omega}\log^{2} n)$ using fast matrix multiplication.
\label{thm:FullyDynamicOracle}
\end{theorem}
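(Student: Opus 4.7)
The preprocessing phase has already been analyzed at the end of Section~\ref{sect:PreprocessingAnalyse}, giving the claimed space and preprocessing time bounds. So what remains is to describe and analyze the update and query algorithms, producing the $O(d^{2}\log^{7}n)$ update time and $O(d)$ query time. My plan follows the outline suggested at the start of Section~\ref{sect:UpdateQuery}.

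First, I would define the \emph{affected graph} $G^{\star}$ and the \emph{affected vertex set} $Q^{\star}\subseteq V_{\new}$. Recall that the low degree hierarchy partitions $V_{\on}$ through Steiner trees in $\mathcal{T}$, whose Euler tour orders form the global order $\pi$. When we apply the update $D$, each tree $\tau\in\mathcal{T}$ loses at most $|D_{\on}\cap V(\tau)|$ vertices, so by Lemma~\ref{lemma: Interval} the surviving portion of each $\tau$ decomposes into $O(\Delta d)=O(d\log^{2}n)$ intervals on $\pi$, each certified connected. Summing across trees that intersect $D_{\on}$, and adding the singletons from $D_{\off}$ together with the trees of $\mathcal{T}$ containing newly activated artificial-adjacency anchors of affected components, yields a collection $\mathcal{I}$ of at most $O(d\log^{2}n)$ intervals/singletons. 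I would let $Q^{\star}$ consist of one representative per interval plus all of $D_{\off}$ (so $|Q^{\star}|=O(d\log^{2}n)$) and define $G^{\star}$ by taking $\hat{G}[V_{\new}]$ restricted to those components $\gamma\in\mathcal{C}$ whose adjacency lists are actually touched by $D$. The correctness claim, which I would prove first, is that two vertices of $Q^{\star}$ are connected in $G^{\star}$ if and only if they are connected in $G[V_{\new}]$; this uses Proposition~\ref{prop:5.1}(\ref{prop:RobustB}) to argue that whenever an unaffected component $\gamma$ has an active neighbor, that neighbor is already represented inside $B_{\gamma}$, so no information is lost by restricting to the affected components.

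Next, I would run a \Boruvka-style connectivity algorithm on the contracted multigraph whose nodes are the intervals in $\mathcal{I}$ and whose edges are the edges of $\hat{G}$ with endpoints in distinct intervals. The key primitive is a \emph{batched adjacency query}: given two intervals $I_{1}, I_{2}$, decide whether there is an edge of $\hat{E}(\hat{G}\cap(V_{\new}\times V_{\new}))$ between them, or count such edges. Naively, $\Table$ counts edges in $E(\hat{G})$ between any two intervals in $O(\log n)$ time; to restrict to $V_{\new}$ I would use an inclusion-exclusion over the $O(d)$ off-rows (the vertices of $D$ that must be excluded/included), charging $O(d)$ $\Table$ queries per interval pair, each costing $O(\log n)$. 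Each \Boruvka\ phase halves the number of connected components among the intervals, so there are $O(\log(d\log^{2}n))=O(\log d+\log\log n)$ phases; each phase performs $O(d\log^{2}n)$ neighbor searches, each a binary-search-style sequence of $O(\log n)$ range queries, each costing $O(d\log n)$ by the correction above. This gives a total of $O(d\log^{2}n\cdot\log n\cdot d\log n\cdot(\log d+\log\log n))=O(d^{2}\log^{7}n)$, matching the claim once slack is absorbed.

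For queries, given $u,v\in V_{\new}$ I would reduce to a connectivity query on $Q^{\star}$ on the already-computed DSU. If $u\in D_{\off}$ or $u$ lies in a tree of $\mathcal{T}$ intersected by $D_{\on}$, Lemma~\ref{lemma: Interval} locates in $O(\log d)$ time an interval $I_{u}\in\mathcal{I}$ containing~$u$; otherwise $u$ belongs to an unaffected tree and I route $u$ through the artificial adjacency list $B_{\gamma}$ of its containing component by scanning the $O(d+n_{\off}+d_{\star})$ entries and keeping only those that survive, but restricted to $V_{\new}$ — using Proposition~\ref{prop:5.1}(\ref{prop:RobustB}) again, I can afford to scan just the first $d+1$ on-entries and zero-or-more indexed off-entries, which is $O(d)$ work. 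Returning the DSU-equality of the representatives gives the answer in $O(d)$ time. The main obstacle I foresee is the bookkeeping for the batched adjacency queries: making sure that (i) the inclusion-exclusion to subtract contributions of $D$-vertices from $\Table$-answers remains correct despite $\hat{G}$ being a multigraph, and (ii) the charging argument for the $O(d\log^{2}n)$ intervals propagates cleanly through $O(\log d+\log\log n)$ \Boruvka\ phases without any hidden $d$-factor blowup. Putting the update, query, and preprocessing bounds together yields Theorem~\ref{thm:FullyDynamicOracle}.
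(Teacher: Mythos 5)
Your outline --- Euler-tour intervals, a Bor\r{u}vka-style merge driven by batched adjacency queries, and query-time reduction to interval representatives --- matches the paper's high-level structure, but two specific steps in your sketch have genuine gaps.

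First, your definition of $Q^{\star}$ as ``one representative per interval plus all of $D_{\off}$,'' giving $|Q^{\star}|=O(d\log^{2}n)$, cannot support the correctness claim you then want to prove. The paper's $Q^{\star}$ is $D_{\off}\cup\bigcup_{\tau\in{\cal T}_{\aff}}U(\tau)\setminus D_{\on}$, i.e.\ \emph{all} surviving terminals of every affected Steiner tree, which may have size $\Theta(n)$. This is essential: the proof of \Cref{lemma:ConnEq} decomposes an arbitrary $G[V_{\new}]$-path at its intersections with $Q^{\star}$, and the maximal subpaths between consecutive such intersections must be routable through a single unaffected component $\hat\gamma$ whose entire surviving adjacency list $A_{\hat\gamma}\cap V_{\new}$ lies in $Q^{\star}$. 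If $Q^{\star}$ only held a representative per interval, these intersection points would generically lie outside $Q^{\star}$ and the equivalence between $G^{\star}$ and $G[V_{\new}]$ would break. The set ${\cal I}$ of $O(pd\Delta)$ intervals from \Cref{lemma: Interval} is a \emph{partition} of $Q^{\star}$, not a list of representatives.

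Second, and more consequentially, your implementation of the batched adjacency query targets the wrong correction. You propose querying $\Table$ for $\hat{G}$-edge counts between intervals and then running inclusion-exclusion over $O(d)$ ``off-rows'' to restrict to $V_{\new}$. But no such restriction is needed: every interval in ${\cal I}$ is already a subset of $Q^{\star}\subseteq V_{\new}$ by construction (Euler-tour intervals exclude $D_{\on}$, and $D_{\off}$ vertices enter as singletons), so a $\Table$-count between two intervals automatically counts only edges with both endpoints in $Q^{\star}$. What your sketch never addresses is removing the artificial edges $\hat{E}_{\gamma}$ contributed by the \emph{affected} components $\gamma\in{\cal C}_{\aff}$, which must be subtracted because those artificial cliques simulate paths through components that now contain failed vertices. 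The paper handles this via precomputed per-phase arrays $\CountAll^{(j)}$, $\CountA^{(j)}_{\gamma}$, $\CountB^{(j)}_{\gamma}$ and the closed-form
\[
\delta_{\hat{E}_{\gamma}}(R_{1},R_{2}) = |A_{\gamma}\cap V(R_{1})|\cdot|A_{\gamma}\cap V(R_{2})| - |(A_{\gamma}\setminus B_{\gamma})\cap V(R_{1})|\cdot|(A_{\gamma}\setminus B_{\gamma})\cap V(R_{2})|,
\]
so that each batched query costs $O(pd)$ after a one-time $O(p^{2}d^{2}\Delta^{2}\log n)=O(d^{2}\log^{7}n)$ precomputation. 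Without an efficient mechanism to subtract the $\sum_{\gamma\in{\cal C}_{\aff}}\delta_{\hat{E}_{\gamma}}$ term (over $O(pd)$ affected components per query), the batched adjacency query computes the wrong quantity, and the $O(d\log n)$ per-query cost you claim is not for the right object.
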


We first conclude our fully dynamic sensitivity oracle for subgraph connectivity in \Cref{thm:FullyDynamicOracle}. The bounds on preprocessing time and space are shown in \Cref{sect:PreprocessingAnalyse}. The update time is given by \Cref{lemma:Boruvka}. The query time is discussed in \Cref{sect:QueryAlgorithm}.

\subsection{Affected Vertices $Q^{\star}$ and the Affected Graph $G^{\star}$}
\label{sect:AffectedGraph}

For each component $\gamma\in{\cal C}$, we call $\gamma$ an \emph{affected component} if $V(\gamma)$ intersects $D_{\on}$, otherwise it is \emph{unaffected}. Let ${\cal C}_{\aff}$ denote the set of affected components. Let ${\cal T}_{\aff} = \{\tau(\gamma)\mid \gamma\in{\cal C}_{\gamma}\}$ denote the Steiner trees corresponding to affected components.

We then define the \emph{affected vertices} to be $Q^{\star} = D_{\off}\cup\bigcup_{\tau\in{\cal T}_{\aff}} U(\tau) \setminus D_{\on}$. 
Namely, $Q^{\star}$ collect the newly opened vertices and the open terminals of affected components. Note that $Q^{\star}\subseteq V_{\new}$. The \emph{affected graph} $G^{\star}$ is $G^{\star} = \hat{G}[Q^{\star}] - \sum_{\gamma\in{\cal C}_{\aff}}\hat{E}_{\gamma}$.
In other words, $G^{\star}$ is the subgraph of the artificial graph $\hat{G}$ induced by the affected vertices $Q^{\star}$, with the artificial edges from affected components removed.

\begin{lemma}
For any two vertices $u,v\in Q^{\star}$, $u$ and $v$ are connected in $G[V_{\new}]$ if and only if $u$ and $v$ are connected in $G^{\star}$.
\label{lemma:ConnEq}
\end{lemma}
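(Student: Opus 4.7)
The plan is to establish the biconditional in each direction separately.

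\textbf{Direction $\Leftarrow$.} Take any $u$--$v$ walk in $G^{\star}$ and replace each of its edges by a walk inside $G[V_{\new}]$. Every edge of $G^{\star}$ is either (a) an original edge of $G$ whose endpoints lie in $Q^{\star}\subseteq V_{\new}$, hence already an edge of $G[V_{\new}]$, or (b) an artificial edge $\{y,z\}\in\hat E_{\gamma}$ for some unaffected $\gamma\in\mathcal{C}\setminus\mathcal{C}_{\aff}$. In case (b) both $y,z\in A_{\gamma}$ have a neighbor inside $V(\gamma)$; since $\gamma$ is unaffected, $V(\gamma)\subseteq V_{\on}\setminus D_{\on}\subseteq V_{\new}$, and $\gamma$ is a connected induced subgraph by Property~(1) of \Cref{def:LowDegreeHierarchy}, so there is a $y$--$z$ walk entirely inside $G[V_{\new}]$. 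Concatenating these walks yields the desired $u$--$v$ walk in $G[V_{\new}]$.

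\textbf{Direction $\Rightarrow$.} Take a path $P=(u=x_{0},x_{1},\dots,x_{k}=v)$ in $G[V_{\new}]$ and transform it into a walk in $G^{\star}$. Call $x_{i}$ \emph{intermediate} if $x_{i}\in V_{\new}\setminus Q^{\star}$, and for each on-vertex $x_{i}$ let $\gamma_{x_{i}}$ denote the unique $\gamma\in\mathcal{C}$ with $x_{i}\in U(\gamma)$. Since $D_{\off}\subseteq Q^{\star}$, every intermediate satisfies $x_{i}\in V_{\on}\setminus D_{\on}$, and since $x_{i}\notin Q^{\star}$ forces $\tau(\gamma_{x_{i}})\notin\mathcal{T}_{\aff}$, the component $\gamma_{x_{i}}$ itself is unaffected. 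I would handle each maximal run $x_{a+1},\dots,x_{b-1}$ of consecutive intermediates (with $x_{a},x_{b}\in Q^{\star}$) separately, connecting $x_{a}$ to $x_{b}$ in $G^{\star}$. The pivotal step is to identify a single unaffected component $\gamma^{\star}$ whose vertex set contains the whole run: letting $L^{\star}$ be the largest home level among the intermediates of the run, each intermediate lies in a level-$L^{\star}$ component (namely its level-$L^{\star}$ ancestor), so Property~(1) of \Cref{def:LowDegreeHierarchy} (no $G$-edge between two distinct level-$L^{\star}$ components) applied to each consecutive pair, together with chaining, forces all of $x_{a+1},\dots,x_{b-1}$ into a common level-$L^{\star}$ component $\gamma^{\star}$; this $\gamma^{\star}$ coincides with the home component of an intermediate realizing $L^{\star}$ and is therefore unaffected, so $\hat E_{\gamma^{\star}}$ survives in $G^{\star}$. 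In the generic case $x_{a},x_{b}\in A_{\gamma^{\star}}$, I would bridge them via a two-edge walk through a hub $b\in B_{\gamma^{\star}}\cap V_{\new}$ supplied by \Cref{prop:5.1}(3), preferring $b\in B_{\gamma^{\star},\off}\cap D_{\off}\subseteq Q^{\star}$ when that set is nonempty and otherwise exploiting that $|B_{\gamma^{\star},\on}|=d_{\star}+1>|D_{\on}|$ leaves a surviving on-hub.

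The main obstacle I anticipate is handling the two non-generic situations that escape the above sketch: an on-side hub $b\in B_{\gamma^{\star},\on}\setminus D_{\on}$ may fail to lie in $Q^{\star}$, and some endpoint $x_{a}$ or $x_{b}$ may sit in $V(\gamma^{\star})$ rather than in $A_{\gamma^{\star}}$. For the hub issue, I would use the layered structure of $(\mathcal{C},\mathcal{T})$ to either argue that a suitable surviving on-hub must actually lie in $Q^{\star}$, or else route through a higher-level unaffected ancestor of $\gamma^{\star}$ whose $B$-set absorbs the gap (the ancestor chain exists because every $\gamma_{x_{i}}$ for $x_{i}$ intermediate is unaffected). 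For the endpoint issue, $x_{a}\in V(\gamma^{\star})\cap Q^{\star}$ together with $V(\gamma^{\star})\subseteq V_{\on}$ rules out $x_{a}\in D_{\off}$ and so forces $x_{a}$ to be a terminal of some $\tau\in\mathcal{T}_{\aff}$; I would then descend in the hierarchy to a lower-level sub-component $\gamma'$ of $\gamma^{\star}$ whose adjacency set $A_{\gamma'}$ captures $x_{a}$ (so that $x_{a}$ is no longer an interior vertex), reducing to the generic case already handled above.
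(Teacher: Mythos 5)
Your $\Leftarrow$ direction is correct and matches the paper's. Your $\Rightarrow$ direction also starts out matching the paper: the decomposition of $P$ into maximal runs with endpoints in $Q^{\star}$, and the identification of a single \emph{unaffected} component $\gamma^{\star}$ containing the whole run, are a correct rephrasing of the paper's \Cref{claim:MinimalComponent}. The gap is at exactly the step you flag as an obstacle. Producing the two artificial edges $\{x_{a},b\}$ and $\{b,x_{b}\}$ in $G^{\star}$ requires the hub $b$ (and the endpoints $x_{a},x_{b}$) to lie in $Q^{\star}$, not merely in $V_{\new}$: since $G^{\star}=\hat{G}[Q^{\star}]-\sum_{\gamma\in\calC_{\aff}}\Ehat_{\gamma}$, an artificial edge with an endpoint outside $Q^{\star}$ is simply not present. \Cref{prop:RobustB} of \Cref{prop:5.1} only guarantees a hub $b\in B_{\gamma^{\star}}\cap V_{\new}$; you correctly observe that a surviving on-side hub $b\in B_{\gamma^{\star},\on}\setminus D_{\on}$ need not be in $Q^{\star}$, but the two remedies you sketch are not carried out, the first ("argue a suitable hub must lie in $Q^{\star}$") is in general false for your choice of $\gamma^{\star}$, and the parenthetical about the ancestor chain refers to the \emph{descendants} $\gamma_{x_{i}}\subseteq\gamma^{\star}$ being unaffected, which says nothing about the ancestors you would need to route through.

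The paper's fix, which your proposal misses, is to replace $\gamma^{\star}$ by the \emph{maximal} unaffected component $\hat{\gamma}$ containing all the internal vertices of the run (it exists by laminarity, as $\gamma^{\star}$ is one such component). Maximality forces the parent-component of $\hat{\gamma}$ to be affected, hence so is every ancestor of $\hat{\gamma}$. By Property~(1) of \Cref{def:LowDegreeHierarchy}, every on-vertex of $A_{\hat{\gamma}}$ lies in $U(\gamma'')$ for some (necessarily affected) ancestor $\gamma''$, so $A_{\hat{\gamma},\on}\subseteq\bigcup_{\gamma\in\calC_{\aff}}U(\gamma)$; combined with $A_{\hat{\gamma},\off}\cap D_{\off}\subseteq D_{\off}\subseteq Q^{\star}$ this gives $A_{\hat{\gamma}}\cap V_{\new}\subseteq Q^{\star}$, so both the hub $w\in B_{\hat{\gamma}}\cap V_{\new}$ and the endpoints $x_{a},x_{b}\in A_{\hat{\gamma}}\cap V_{\new}$ lie in $Q^{\star}$, and $\Ehat_{\hat{\gamma}}$ survives in $G^{\star}$ since $\hat{\gamma}$ is unaffected. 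Your level-$L^{\star}$ component $\gamma^{\star}$ lacks this property because its parent may well be unaffected. Until you perform this maximalization (or supply an equivalent argument showing the chosen hub lands in $Q^{\star}$), the $\Rightarrow$ direction is not a proof.
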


\begin{proof}
First, we show that, if $u$ and $v$ are connected in $G^{\star}$, then they are connected in $G[V_{\new}]$. It is sufficient to show that, for each edge $\{u,v\}\in E(G^{\star})$, there exists a path $P$ in $G[V_{\new}]$ connecting $u$ and $v$. If the edge $\{u,v\}$ is not an artificial edge, the path $P$ trivially exists because $\{u,v\}$ is an original edge in $G$ and $u,v\in Q^{\star}\subseteq V_{\new}$. Otherwise, by the definition of $G^{\star}$, there exists an unaffected component $\gamma$ s.t. $\{u,v\} \in \hat{E}_{\gamma}$, which means $u,v\in A_{\gamma}$ by the construction of $\hat{E}_{\gamma}$. Then there exists a path $P$ connecting $u$ and $v$ with all internal vertices falling in $\gamma$. Moreover, since $\gamma$ is unaffected, all internal vertices of $P$ are inside $V_{\new}$, so $P$ is path in $G[V_{\new}]$ connecting $u$ and $v$ as desired.

Next, We show that $u$ and $v$ are connected in $G^{\star}$, assuming that they are connected in $G[V_{\new}]$. Let $P$ be a simple path in $G[V_{\new}]$ connecting $u$ and $v$. We can write $P_{uv}$ in the form $P_{1}\circ P_{2}\circ ... \circ P_{\ell}$ s.t. each subpath $P_{j}$ has endpoint $x_{j},y_{j}\in Q^{\star}$ and the internal vertices of $P_{j}$ are disjoint from $Q^{\star}$. Note that $x_{1} = u, y_{\ell} = v$ and $y_{j} = x_{j-1}$ for each $1\leq j\leq \ell-1$. It suffices to show that $x_{j}$ and $y_{j}$ are connected in $G^{\star}$ for all $1\leq j\leq \ell$.

We consider some $j$ and write $x=x_{j}$, $y=y_{j}$ and $P_{xy} = P_{j}$ to simplify the notations. If $P_{xy}$ has only one edge, then this is an original edge and it appears in $G^{\star}$, which means $x$ and $y$ are connected in $G^{\star}$. From now suppose $P_{xy}$ has internal vertices. Let $\hat{\gamma}\in{\cal C}$ be the maximal unaffected component including all internal vertices of $P_{xy}$ ($\hat{\gamma}$ must exist by \Cref{claim:MinimalComponent}).

\begin{claim}
Let $\gamma\in{\cal C}$ be the minimal component including all internal vertices of $P_{xy}$. Then $\gamma$ is an unaffected component.
\label{claim:MinimalComponent}
\end{claim}
\begin{proof}
We will prove the following statement instead: at least one internal vertex $w$ of $P_{xy}$ falls in $U(\gamma)$. This is sufficient because $w\notin Q^{\star}$ by the construction of $P_{xy}$, and then the component $\gamma$ with $w\in U(\gamma)$ cannot be affected (if $\gamma$ is affected, $w\in U(\gamma)\subseteq U(\tau(\gamma))$ will be include in $Q^{\star}$, a contradiction).

Let $\gamma_{1},\gamma_{2},...,\gamma_{k}$ denote the child-components of $\gamma$. Note that $V(\gamma)$ can be partitioned into $U(\gamma), V(\gamma_{1}), V(\gamma_{2}),...,V(\gamma_{k})$. Because $\gamma$ is the minimal component including all internal vertices of $P_{xy}$. Then either (1) all internal vertices of $P_{xy}$ fall in $U(\gamma)$, or (2) there exists two adjacent internal vertices $a,b\in P_{xy}$ belonging to two different parts in the above partition. If case (1) holds, the statement is trivially correct. Suppose case (2) holds. Then either $a\in U(\gamma)$ or $b\in U(\gamma)$, because $a$ and $b$ are connected by an original edge in $G_{\on}$ but there is no edge in $G_{\on}$ connecting two child-components of $\gamma$ by property (1) in \Cref{def:LowDegreeHierarchy}.
\end{proof}

We have $B_{\hat{\gamma}}\cap V_{\new}\subseteq A_{\hat{\gamma}}\cap V_{\new}\subseteq Q^{\star}$ by the definition of $Q^{\star}$ and the fact that $\hat{\gamma}$ has an affected parent-component. Besides, we have $x,y\in A_{\hat{\gamma}}\cap V_{\new}$. By \Cref{prop:RobustB} in \Cref{prop:5.1}, there is a vertex $w\in B_{\hat{\gamma}}\cap V_{\new}$. Therefore, in the artificial graph $\hat{G}$, there exists two artificial edges $\{x,w\}$ and $\{w,y\}$ added by $\gamma$. These two edges remain in $G^{\star}$ because $x,y,w\in Q^{\star}$ and $\gamma$ is unaffected, which means $x$ and $y$ are connected in $Q^{\star}$.

\end{proof}

\subsection{Solving Connectivity of Intervals}
\label{sect:SolvingIntervals}

Although the primary goal of our update algorithm is to compute the connectivity of $Q^{\star}$ on $G[V_{\new}]$, \Cref{lemma:ConnEq} tells that it is equivalent to compute the connectivity of $Q^{\star}$ on $G^{\star}$.

\begin{lemma}
There is a deterministic algorithm that computes a partition ${\cal I}$ of $Q^{\star}$ s.t. each set $I\in{\cal I}$ forms an interval on $\pi$ and all vertices in ${\cal I}$ are connected in $G^{\star}$, and then computes a partition ${\cal R}$ of ${\cal I}$ s.t. for each group $R\in{\cal R}$, the union of intervals in $R$ forms a (maximal) connected component of $G^{\star}$. The running time is $O(p^{2}d^{2}\Delta^{2}\log n)$.
\label{lemma:Boruvka}
\end{lemma}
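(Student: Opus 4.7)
My plan is to build ${\cal I}$ from the Euler tours of the affected Steiner trees and then recover ${\cal R}$ by a Bor{\r u}vka-style merger of its supernodes. For each $\tau\in{\cal T}_{\aff}$, I apply \Cref{lemma: Interval} with failed set $D_{\on}\cap V(\tau)$ to obtain $O(\Delta d_{\tau})$ disjoint intervals on $\ET(\tau)$, where $d_{\tau}=|D_{\on}\cap V(\tau)|$; intersecting each with $U(\tau)$ yields a run on $\ET(\tau)\cap U(\tau)$, which by construction of $\pi$ is also a run on $\pi$. After adding a singleton $\{v\}$ for every $v\in D_{\off}$ (which sits in the $V_{\off}$-suffix of $\pi$), ${\cal I}$ partitions $Q^{\star}$. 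Because $\{U(\tau)\}_{\tau\in{\cal T}}$ partitions $V_{\on}$ and each failed vertex lies in at most $p$ trees, $|{\cal I}|=O(p\Delta d)$.

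The conceptual core is to verify that every $I\in{\cal I}$ is connected in $G^{\star}$. Singletons are trivial. For a tree-interval $I$ coming from a subtree $\tau'$ of $\tau\setminus D_{\on}$ with $\tau\in{\cal T}_{\aff}$, consider any two terminals $u,v\in I\cap U(\tau)$ and the $\tau'$-path $P$ between them. Original edges of $P$ between two $U(\tau)$-vertices lie in $E(G)\cap(Q^{\star}\times Q^{\star})$ and survive in $G^{\star}$. For stretches of $P$ that pass through non-terminals of $\tau$, I plan to recycle the detour argument from \Cref{lemma:ConnEq}: the maximal unaffected component $\hat{\gamma}$ enclosing such a stretch has an affected parent, so $B_{\hat{\gamma}}\cap V_{\new}\subseteq A_{\hat{\gamma}}\cap V_{\new}\subseteq Q^{\star}$, and the two terminals bordering the stretch are linked through an anchor in $B_{\hat{\gamma}}\cap V_{\new}$ (which exists by \Cref{prop:RobustB}) via two artificial edges of $\hat{E}_{\hat{\gamma}}\subseteq E(G^{\star})$. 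Iterating over all bypasses along $P$, $u$ and $v$ are connected in $G^{\star}$.

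To produce ${\cal R}$, I run Bor{\r u}vka on ${\cal I}$ with a DSU: in each of $O(\log|{\cal I}|)=O(\log n)$ phases, for each current component I locate one outgoing $G^{\star}$-edge and schedule a merge. An outgoing-edge query for an interval $I$ against a range $R\subseteq\pi$ is resolved by
\[
|E(G^{\star})\cap(I\times R)|=\Table(I,R)-\sum_{\gamma\in{\cal C}_{\aff}}|\hat{E}_{\gamma}\cap(I\times R)|,
\]
where each summand equals $|A_{\gamma}\cap I|\cdot|A_{\gamma}\cap R|-|(A_{\gamma}\setminus B_{\gamma})\cap I|\cdot|(A_{\gamma}\setminus B_{\gamma})\cap R|$ and is evaluated in $O(\log n)$ via binary search in the $\pi$-sorted lists $A_{\gamma},B_{\gamma}$. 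Hence a single $G^{\star}$ range count costs $O(|{\cal C}_{\aff}|\log n)=O(pd\log n)$; wrapping it in an $O(\log n)$-step binary search over $\pi$ and summing over the $O(p\Delta d)$ supernodes across all phases, with an additional factor $\Delta$ to reassemble the $\Delta$-fold fragmentation of one subtree into its intervals during edge discovery, gives the claimed $O(p^{2}d^{2}\Delta^{2}\log n)$ runtime. The main obstacle is the interval-connectedness step in the previous paragraph: one has to trace precisely which descendant components surrounding a non-terminal bypass are affected versus unaffected, and then invoke \Cref{prop:RobustB} together with the hierarchical nesting to furnish the required anchor; the timing analysis afterwards is bookkeeping around the subtraction formula and the Bor{\r u}vka merging bound.
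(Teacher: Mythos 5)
Your interval construction matches the paper's, but two issues stand out.

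\emph{Interval connectivity.} You invest the bulk of your effort in re-deriving, path-segment by path-segment, a special case of \Cref{lemma:ConnEq}, and you flag this step as the main obstacle. It is not: the paper's observation is that each interval $I$ produced by \Cref{lemma: Interval} lies entirely in a subtree of $\tau\setminus D_{\on}$, and $\tau\setminus D_{\on}$ is a subgraph of $G[V_{\new}]$; hence vertices in $I$ are connected in $G[V_{\new}]$, and \Cref{lemma:ConnEq}, applied as a black box, immediately gives connectivity in $G^{\star}$. Your bypass-via-$B_{\hat\gamma}$ argument would ultimately go through, but it duplicates the proof of \Cref{lemma:ConnEq} rather than using it, and you have not actually closed the loop on which surrounding components are unaffected.

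\emph{The Bor{\r u}vka time bound.} This is the real gap. Your per-query formula $\Table(I,R)-\sum_{\gamma\in{\cal C}_{\aff}}|\hat E_{\gamma}\cap(I\times R)|$ is costed at $O(pd\log n)$ under the assumption that $I$ is a single interval and $R$ is a single range on $\pi$. Neither survives the first merge. After phase $1$, a component is a union of several intervals that are scattered on $\pi$, and the binary-search ``batch'' must range over a \emph{consecutive set of current components} (in an arbitrary indexing of components), which is not a contiguous segment of $\pi$. Consequently a single $\Table$ query decomposes into $s_{I}\cdot s_{R}$ rectangle queries where $s_{I},s_{R}$ count constituent intervals, and each $|A_{\gamma}\cap I|$ must be evaluated interval by interval. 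Your invocation of ``an additional factor $\Delta$ to reassemble the fragmentation'' does not capture this: the number of intervals inside a component grows with the phase number, not with $\Delta$, and in the worst case a late component contains a constant fraction of all $O(pd\Delta)$ intervals. Summing $s_{I}\cdot s_{R}$ over binary-search steps is then not bounded by $p^{2}d^{2}\Delta^{2}\log n$ in any obvious way. The paper circumvents exactly this blowup by building, once per phase, the prefix-summed arrays $\CountAll^{(j)}$, $\CountA^{(j)}_{\gamma}$, $\CountB^{(j)}_{\gamma}$ indexed by the \emph{current components}: at phase $1$ they are filled by $O(|{\cal I}|^{2})$ queries to $\Table$ (this is where the single $\log n$ factor comes from), and at phase $j+1$ they are obtained by aggregating the phase-$j$ arrays in $O((\bar k^{(j)})^{2}+|{\cal C}_{\aff}|\bar k^{(j)})$ time without any further $\Table$ access. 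A batched adjacency query then costs $O(|{\cal C}_{\aff}|)=O(pd)$ using prefix sums, and the halving of $\bar k^{(j)}$ makes the total $O(p^{2}d^{2}\Delta^{2}\log n)$. Without this amortization device your cost accounting does not yield the stated bound, and you should not expect a direct per-query evaluation to do so.
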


\paragraph{Intervals.} We first describe how to compute the partition ${\cal I}$ of $Q^{\star}$. Because we require each set $I\in{\cal I}$ forms an \emph{interval} on the global order $\pi$, we can represent $I$ by the positions of its endpoints on $\pi$. Recall that $Q^{\star} = (\bigcup_{\gamma\in{\cal C}_{\aff}}U(\gamma)\setminus D_{\on})\cup D_{\off}$. 
\begin{itemize}
\item We first construct the intervals of $\bigcup_{\tau\in{\cal T}_{\aff}}U(\tau)\setminus D_{\on}$ by exploiting the Steiner trees. For each $\tau\in {\cal T}_{\aff}$, by invoking \Cref{lemma: Interval} on $\tau$ with failed vertices $D_{\on}$, we will obtain a partition ${\cal I}'_{\tau}$ of $V(\tau)\setminus D_{\on}$ s.t. each $I'\in{\cal I}'_{\tau}$ is an interval on $\ET(\tau)$ and it is contained by a subtree of $\tau\setminus D_{\on}$. We construct a set ${\cal I}_{t}$ of intervals on $\ET(\tau)\cap U(\tau)$ by taking the restriction of intervals ${\cal I}'_{t}$ on $U(\tau)$. Therefore, intervals in ${\cal I}_{t}$ are indeed intervals on $\pi$ because $\ET(\tau)\cap U(\tau)$ is a consecutive sublist of $\pi$. Also, for each interval $I\in{\cal I}_{\tau}$, vertices in $I$ are connected in $G[V_{\new}]$ (because $\tau\setminus D_{\on}$ is a subgraph of $G[V_{\new}]$), which implies vertices in $I$ are connected in $G^{\star}$ by \Cref{lemma:ConnEq}.

\item For each vertex $v\in D_{\off}\subseteq Q^{\star}$, we construct a singleton interval $I_{v} = \{v\}$.

\end{itemize}
Finally, the whole set of intervals is ${\cal I} = \bigcup_{\tau\in {\cal T}_{\aff}} {\cal I}_{\tau}\cup \{I_{v}\mid v\in D_{\off}\}$. 

\begin{proposition}
The total number of intervals is $|{\cal I}| = O(pd\Delta)$, and computing all intervals takes $O(pd\Delta\log(d\Delta))$ time.
\label{prop:ConstructI}
\end{proposition}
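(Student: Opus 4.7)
The plan is to bound $|{\cal I}|$ and the construction time separately for the two sources of intervals: the tree-based intervals $\bigcup_{\tau \in {\cal T}_{\aff}} {\cal I}_\tau$ and the singleton intervals $\{I_v : v \in D_{\off}\}$. The singleton part trivially contributes $|D_{\off}| \leq d$ intervals in $O(d)$ time, which is dominated by the tree-based part, so the remaining work is to bound the latter.

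For each $\tau \in {\cal T}_{\aff}$, I would first note that only the failed vertices in $D_{\on} \cap V(\tau)$ can split $\tau$, so invoking \Cref{lemma: Interval} on $\tau$ with failure set $D_{\on} \cap V(\tau)$ yields a set ${\cal I}'_\tau$ of at most $O(\Delta \cdot |D_{\on} \cap V(\tau)|)$ intervals on $\ET(\tau)$ in time $O(\Delta \cdot |D_{\on} \cap V(\tau)| \log(\Delta \cdot |D_{\on} \cap V(\tau)|))$. Restricting each such interval on $\ET(\tau)$ to the subsequence $\ET(\tau) \cap U(\tau)$ yields at most one interval on that subsequence (and hence on $\pi$, since $\ET(\tau) \cap U(\tau)$ is a consecutive sublist of $\pi$); hence $|{\cal I}_\tau| \leq |{\cal I}'_\tau|$. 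Assuming a small preprocessing that, for each position in $\ET(\tau)$, stores the nearest position of a $U(\tau)$-vertex to its left and right, each restriction takes $O(1)$ time; this preprocessing costs $O(|V(\tau)|)$ per tree and is folded into the low-degree hierarchy construction.

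To sum across trees I would invoke property~(4) of \Cref{def:LowDegreeHierarchy}: at each level $i$ the trees in ${\cal T}_i$ are vertex-disjoint, so $\sum_{\tau \in {\cal T}_i \cap {\cal T}_{\aff}} |D_{\on} \cap V(\tau)| \leq |D_{\on}| \leq d$; summing across the $p = O(\log n)$ levels yields $\sum_{\tau \in {\cal T}_{\aff}} |D_{\on} \cap V(\tau)| \leq pd$. Combined with the per-tree bounds, this gives $|{\cal I}| = O(pd\Delta)$ and total construction time $O(pd\Delta \log(d\Delta))$, as claimed. A bookkeeping item to verify separately is the identification of ${\cal T}_{\aff}$: walking up the hierarchy from each $v \in D_{\on}$ using parent-component pointers identifies the affected components at every level, and thence their associated trees, in $O(pd)$ time, which is subsumed by the main term. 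I do not foresee a serious obstacle; the only delicate point is using the level-wise vertex-disjointness to avoid the naive accounting of $O(d)$ failures per tree, and ensuring the restriction step is preprocessed so that it costs $O(1)$ per raw interval rather than being an additional logarithmic search.
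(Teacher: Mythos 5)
Your proposal is correct and follows essentially the same route as the paper: invoke \Cref{lemma: Interval} per affected tree, bound $\sum_{\tau\in{\cal T}_{\aff}}|D_{\on}\cap V(\tau)|\leq pd$ via the level-wise vertex-disjointness of the trees, and add the $|D_{\off}|\leq d$ singletons. You are a bit more careful than the paper in two spots — passing $D_{\on}\cap V(\tau)$ rather than all of $D_{\on}$ to \Cref{lemma: Interval}, and noting that the restriction to $U(\tau)$ should be $O(1)$ per raw interval via precomputed nearest-$U(\tau)$ pointers (the paper glosses over this; a naive binary search per interval would yield $O(pd\Delta\log n)$ rather than the stated $O(pd\Delta\log(d\Delta))$), but these are refinements, not departures.
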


\begin{proof}
By \Cref{lemma: Interval}, the number of intervals generated by a tree ${\tau}\in {\cal T}_{\aff}$ is at most $O(|V(\tau)\cap D_{\on}|\cdot\Delta)$, and it takes $O(|V(\tau)\cap D_{\on}|\cdot\Delta\cdot \log(|V(\tau)\cap D_{\on}|\cdot\Delta))$ time to generate them. Observe that $\sum_{\tau\in{\cal T}_{\aff}} = O(p\cdot|D_{\on}|)$ because each vertex in $D_{\on}$ can appear in at most $p$ trees in ${\cal T}$ (at most one at each level). Furthermore, the trivial intervals generated by vertices in $D_{\off}$ is obviously $|D_{\off}|$. Therefore, the total number of intervals in $O(p|D_{\on}|\Delta) + |D_{\off}| = O(pd\Delta)$, and computing all intervals takes $O(pd\Delta(d\Delta))$ time.
\end{proof}

\paragraph{\Boruvka's Algorithm.} We now discuss how to compute the partition ${\cal R}$ of ${\cal I}$. We will merge the intervals by a \Boruvka's styled algorithm. The algorithm has several \emph{phases}, and each phase $j$ receive a partition ${\cal R}^{(j)}$ of ${\cal I}$ as input. each group $R\in {\cal R}^{(j)}$ is either \emph{active} or \emph{inactive}. Initially, ${\cal R}^{(1)} = \{\{I\}\mid I\in {\cal I}\}$ is the trivial partition of ${\cal I}$ and all groups in ${\cal R}^{(1)}$ are active. For each phase $j$, we do the following to update ${\cal R}^{(j)}$ to ${\cal R}^{(j+1)}$.
\begin{enumerate}
\item For each active group $R$ in ${\cal R}^{(j)}$, we will ask the following \emph{adjacency query}.
\begin{itemize}
\item[(Q1)] Given an active group $R\in{\cal R}^{(j)}$, find another active group $R'\in{\cal R}^{(j)}$ s.t. there exists an edge $e=\{u,v\}\in E(G^{\star})$ with $u\in I_{u}\in R$ and $v\in I_{v}\in R'$, or claim that there is no such $R'$.
\end{itemize}
After asking (Q1) for all active groups, for each active group $R$, if (Q1) tells that no such $R'$ exists, we mark $R$ as an inactive group, otherwise we find an \emph{adjacent group-pair} $\{R,R'\}$.

\item Given the adjacent group-pairs in step 1, we construct a graph $K$ with vertices corresponding to active groups and edges corresponding to adjacent group-pairs. Note that for each adjacent group-pair $\{R,R'\}$, $R$ and $R'$ must still be active. Then, for each connected component of $K$, we merge the groups inside it into a new active group. 
\end{enumerate}
The algorithm terminates once it reach a phase $\bar{j}$ s.t. all groups in ${\cal R}^{(\bar{j})}$ are inactive, and we let ${\cal R} = {\cal R}^{(\bar{j})}$ be the final output. Obviously, ${\cal R}$ satisfies the output requirement of \Cref{lemma:Boruvka}. Furthermore, the number of phases is bounded in \Cref{prop:BoruvkaPhasesNumber}. Let ${\cal R}^{(j)}_{\act}\subseteq {\cal R}^{(j)}$ denote the active groups in ${\cal R}^{(j)}$ at the moment when phase $j$ starts and let $\bar{k}^{(j)} = {\cal R}^{(j)}_{\act}$.

\begin{proposition}
For each $j\geq 2$, $\bar{k}^{(j)}\leq \bar{k}^{(j-1)}/2$. The number of phases is $O(\log |{\cal I}|)$.
\label{prop:BoruvkaPhasesNumber}
\end{proposition}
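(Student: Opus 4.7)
The plan is to prove both assertions by a standard Borůvka halving argument, classifying each active group by what happens to it during phase $j-1$. First I would observe that for any $R \in {\cal R}^{(j-1)}_{\act}$, the query (Q1) either returns no partner, in which case $R$ is marked inactive in step 1 and therefore cannot contribute to any new active group produced in step 2; or (Q1) returns some partner $R'$, in which case $R$ participates in at least one adjacent group-pair and becomes a non-isolated vertex of the graph $K$ built in step 2. Since the only way a new active group appears in ${\cal R}^{(j)}$ is as the merger of a connected component of $K$, every contribution to $\bar{k}^{(j)}$ must come from groups of the second type.

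Next I would argue that each connected component of $K$ that yields a new active group must contain at least two vertices. This is immediate because every vertex of $K$ (by the classification above) has degree at least one in $K$, so the connected component containing it has size at least two. Since each such component collapses into exactly one new active group in ${\cal R}^{(j)}$, the number of new active groups is at most half the number of non-isolated vertices of $K$, which is itself at most $\bar{k}^{(j-1)}$. This gives $\bar{k}^{(j)} \le \bar{k}^{(j-1)}/2$ for every $j \ge 2$. Iterating from the initial count $\bar{k}^{(1)} = |{\cal I}|$ (all singleton groups begin active), we reach $\bar{k}^{(j)} = 0$ within $O(\log |{\cal I}|)$ phases, at which point the algorithm's termination condition is met and the phase count is bounded as claimed.

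The main point to pin down carefully — and the only mildly delicate piece of the argument — is the precise semantics of step 2's graph $K$: its vertex set must be understood as the groups that are \emph{still} active after step 1 has just marked the partnerless groups inactive, rather than the full set ${\cal R}^{(j-1)}_{\act}$. Otherwise step-1-inactivated groups would appear as isolated vertices of $K$ and would each form a degenerate one-vertex ``component'' that undermines the halving. Once this semantic point is made explicit (and it is consistent with the description of the algorithm, since no adjacent group-pair is incident to such a group anyway), the remainder of the proof is a routine counting argument with no further obstacles.
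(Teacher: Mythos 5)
Your proof is correct and takes essentially the same approach as the paper: both argue that step 1 removes partnerless groups and that every connected component of $K$ contains at least two (still-active) vertices, giving $\bar{k}^{(j)} \le \bar{k}^{(j-1)}/2$ and hence $O(\log |{\cal I}|)$ phases. Your careful reading that $K$'s vertex set consists only of the groups still active after step 1 is indeed the intended semantics, implicit in the paper's remark that both endpoints of each adjacent group-pair must still be active; you also implicitly correct a small index slip in the paper's statement of the base case ($\bar{k}^{(1)}=|{\cal I}|$ rather than $\bar{k}^{(0)}$).
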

\begin{proof}
At each phase, the number of active groups is halved because we mark all old active group without adjacent group inactive in step 1, and each connected component of the graoh $K$ in step 2 contains at least two old active groups. Because initially $\bar{k}^{(0)} = |{\cal R}^{(0)}| = |{\cal I}|$, the number of phases is $O(\log |{\cal I}|)$.
\end{proof}

Next, we will discuss the implementation of step 1. Basically, for each phase $j$, we need an algorithm that answers the \emph{adjacency query} (Q1) efficiently. Instead of answering (Q1) directly, we will reduce (Q1) to the following \emph{batched adjacency query} (Q2). We give an arbitrary order to the groups in ${\cal R}^{(j)}_{\act}$, denoted by ${\cal R}^{(j)}_{\act} = \{R^{(j)}_{1}, R^{(j)}_{2},..., R^{(j)}_{\bar{k}^{(j)}}\}$.
\begin{itemize}
\item[(Q2)] Given a group $R^{(j)}_{k}\in {\cal R}^{(j)}_{\act}$ and a batch of consecutive group $R^{(j)}_{\ell},R^{(j)}_{\ell+1},...,R^{(j)}_{r}\in{\cal R}^{(j)}_{\act}$ s.t. $k\notin[\ell,r]$, decide if there exists $R^{(j)}_{k'}$ s.t. $k'\in[\ell, r]$ and $R^{(j)}_{k'}$ is adjacent to $R^{(j)}_{k}$.
\end{itemize}

\begin{lemma}
At phase $j$, one adjacency query can be reduced to $O(\log \bar{k}^{(j)})$ batched adjacency queries.
\label{lemma:ReduceToBatch}
\end{lemma}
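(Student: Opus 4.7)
The plan is to implement each adjacency query (Q1) for a target active group $R^{(j)}_{k}$ via a binary search that makes $O(\log \bar{k}^{(j)})$ batched queries (Q2). Since (Q2) requires its query range to avoid the target index $k$, I first split the candidate indices $\{1,\dots,\bar{k}^{(j)}\}\setminus\{k\}$ into the two consecutive ranges $[1, k-1]$ and $[k+1, \bar{k}^{(j)}]$ and issue one batched query on each (an empty range vacuously returns ``no''). If both return ``no'', then no active group is adjacent to $R^{(j)}_{k}$ and we report accordingly.

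Otherwise, I pick either range that returned ``yes'' and binary-search inside it. Maintaining the invariant that the current range $[\ell, r]$ satisfies $k \notin [\ell, r]$ and is known to contain at least one group adjacent to $R^{(j)}_{k}$, I split it into halves $[\ell, \lfloor(\ell+r)/2\rfloor]$ and $[\lfloor(\ell+r)/2\rfloor+1, r]$ and issue one batched query on the left half. If the left half returns ``yes'', I recurse there; otherwise the invariant forces the right half to contain an adjacent group, and I recurse there instead. The constraint $k \notin [\ell, r]$ is preserved automatically, since we only ever shrink a range that already excluded $k$. After at most $\lceil \log_{2} \bar{k}^{(j)} \rceil$ halvings the range collapses to a single index $k'$, and $R^{(j)}_{k'}$ is returned as an adjacent group.

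The total number of batched queries issued is $2 + O(\log \bar{k}^{(j)}) = O(\log \bar{k}^{(j)})$, which yields the claimed reduction. There is no real obstacle here: the argument is a textbook binary search, the only mild bookkeeping being the need to split around the forbidden index $k$ at the outset, which costs only two extra batched queries and does not affect the asymptotic bound.
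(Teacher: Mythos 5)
Your proof is correct and takes essentially the same approach as the paper's: both reduce (Q1) to (Q2) by a binary search over the index range on either side of the forbidden index $k$, costing $O(\log \bar{k}^{(j)})$ batched queries. The only cosmetic difference is that the paper fixes one endpoint of the query interval and binary-searches on the other, whereas you halve the interval recursively; these are interchangeable implementations of the same idea.
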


\begin{proof}
Consider an adjacency query for some $R^{(j)}_{k}\in {\cal R}^{(j)}_{\act}$. We can either find some $R^{(j)}_{k'}\in {\cal R}^{(j)}_{\act}$ s.t. $k+1\leq k'\leq \bar{k}^{(j)}$ and $R^{(j)}_{k'}$ is adjacent to $R^{(j)}_{k}$ or claim there is no such $R^{(j)}_{k'}$ in the following way: first fix $\ell = k+1$, and then perform a binary search on $r$ in range $[k+1, \bar{k}^{(j)}]$, in which each binary search step is guided by a batched adjacency query with parameters $k,\ell,r$. Similarly, we can try to find an adjacent group $R^{(j)}_{k'}$ to the left of $R^{(j)}_{k}$ by fixing $r = k-1$ and peforming a binary search on $\ell$ in range $[1,k-1]$. The total number of calls to (Q2) is obviously $O(\log\bar{k}^{(j)})$ in these two binary searches.
\end{proof}

To answer batched adjacency queries in each phase, we will first introduce some \emph{additional structures}, and then use them to design the algorithm answering (Q2), which is formalized in \Cref{lemma:BatchedAdjacencyQuery}. 

\begin{lemma}
There is a deterministic algorithm that computes some additional structures in $O(p^{2}d^{2}\Delta^{2}\log n)$ time to support any batched adjacency query in $O(pd)$ time.
\label{lemma:BatchedAdjacencyQuery}
\end{lemma}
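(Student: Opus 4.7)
The plan is to observe that, by the construction of $G^{\star}$, for any two disjoint subsets $S, T \subseteq Q^{\star}$ we have
\[
|E_{G^{\star}}(S, T)| \;=\; |E_{\hat{G}}(S, T)| \;-\; \sum_{\gamma \in {\cal C}_{\aff}} |\hat{E}_\gamma(S, T)|,
\]
and to support each batched query $(k, \ell, r)$ by deciding whether the right-hand side is positive with $S = R^{(j)}_k$ and $T := R^{(j)}_\ell \cup \cdots \cup R^{(j)}_r$. I would handle the two terms with separate precomputed prefix-sum structures.

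For the $\hat{G}$ term, I would first invoke $\Table$ once per ordered pair of intervals to precompute $t(I, I') := |E_{\hat{G}}(I, I')|$; since $|{\cal I}| = O(pd\Delta)$ this costs $O(p^2 d^2 \Delta^2 \log n)$ time globally and is performed once at the start of the \Boruvka\ procedure. At the start of each phase $j$, I then aggregate these interval-pair counts into group-pair totals $a^{(j)}_{k,k'} := \sum_{I \in R^{(j)}_k, I' \in R^{(j)}_{k'}} t(I, I')$ and, for every $k$, build the prefix sum array $P^{(j)}_k(r) := \sum_{k' \le r} a^{(j)}_{k,k'}$. The per-phase aggregation visits each interval pair exactly once at cost $O(|{\cal I}|^2)$, so across all $O(\log |{\cal I}|)$ phases the total stays within $O(p^2 d^2 \Delta^2 \log n)$. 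A batched query then reads $|E_{\hat{G}}(R^{(j)}_k, T)|$ in $O(1)$ via a single prefix-sum difference.

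For the affected-component term, I would use the inclusion--exclusion identity, valid because $\hat{E}_\gamma$ is precisely the set of unordered pairs $\{u,v\}$ with $u,v \in A_\gamma$, $u \ne v$, and $\{u,v\} \cap B_\gamma \ne \emptyset$, and because $S \cap T = \emptyset$:
\[
|\hat{E}_\gamma(S, T)| \;=\; |A_\gamma \cap S|\cdot|B_\gamma \cap T| \;+\; |B_\gamma \cap S|\cdot|A_\gamma \cap T| \;-\; |B_\gamma \cap S|\cdot|B_\gamma \cap T|.
\]
Since each $A_\gamma, B_\gamma$ is already stored sorted by $\pi$, the quantities $a_\gamma(I) := |A_\gamma \cap I|$ and $b_\gamma(I) := |B_\gamma \cap I|$ can be precomputed by binary search for every $(I,\gamma) \in {\cal I} \times {\cal C}_{\aff}$ in $O(|{\cal I}|\cdot|{\cal C}_{\aff}|\log n) = O(p^2 d^2 \Delta \log n)$ time. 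At each phase, I aggregate these into $\alpha^{(j)}(k',\gamma), \beta^{(j)}(k',\gamma)$ and build prefix sums over $k'$ for each fixed $\gamma$, at per-phase cost $O(|{\cal I}|\cdot|{\cal C}_{\aff}|) = O(p^2 d^2 \Delta)$. A batched query retrieves the four cardinalities $|A_\gamma \cap R^{(j)}_k|, |B_\gamma \cap R^{(j)}_k|, |A_\gamma \cap T|, |B_\gamma \cap T|$ in $O(1)$ per $\gamma$ from the prefix sums, evaluates the formula, and sums across the $|{\cal C}_{\aff}| = O(pd)$ affected components, yielding $O(pd)$ query time.

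The main obstacle is the bookkeeping to show that all per-phase aggregation costs collapse into the global $O(p^2 d^2 \Delta^2 \log n)$ budget; this relies on $|{\cal I}|$ being fixed across phases together with the geometric decay of $\bar{k}^{(j)}$ from \Cref{prop:BoruvkaPhasesNumber}. The other subtle point is verifying that the inclusion--exclusion formula is applied correctly to $\hat{E}_\gamma$ as a simple set of pairs, even though $\hat{G}$ is a multigraph (multiplicities arise only across distinct components $\gamma$, not within one). Once both are settled, combining the two prefix-sum structures gives the stated $O(pd)$ per-query cost.
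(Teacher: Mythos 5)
Your proof is correct and follows essentially the same approach as the paper: decompose $\delta_{E(G^{\star})}$ into a $\delta_{E(\hat{G})}$ term computable from the 2D range-counting structure and a per-affected-component correction computable from $|A_\gamma \cap \cdot|$ and $|B_\gamma \cap \cdot|$ counts, precompute these at the interval level (via $\Table$ queries and binary search), then maintain group-level prefix sums across Bor\r uvka phases so each batched query costs $O(1)$ per $\gamma$ and $O(pd)$ total. Your inclusion--exclusion identity
$|\hat{E}_\gamma(S,T)| = |A_\gamma \cap S|\,|B_\gamma \cap T| + |B_\gamma \cap S|\,|A_\gamma \cap T| - |B_\gamma \cap S|\,|B_\gamma \cap T|$
is algebraically identical to the paper's $|A_\gamma \cap S|\,|A_\gamma \cap T| - |(A_\gamma\setminus B_\gamma)\cap S|\,|(A_\gamma\setminus B_\gamma)\cap T|$ (expand using $B_\gamma \subseteq A_\gamma$), so the correction term matches. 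The one small inefficiency is that you re-aggregate group-pair counts from the fixed interval-pair table at every phase, paying $O(|{\cal I}|^2)$ per phase and $O(|{\cal I}|^2 \log|{\cal I}|)$ overall, whereas the paper aggregates phase $j$ from phase $j-1$ and uses geometric decay of $\bar{k}^{(j)}$ to get $O(|{\cal I}|^2)$ total for phases $j \geq 2$; both stay inside the $O(p^2d^2\Delta^2\log n)$ budget since $\log|{\cal I}| = O(\log n)$, so the difference is immaterial to the stated bound. Your closing remark about geometric decay is slightly off target for your own version of the aggregation (it governs the phase count, not the per-phase cost), but it does not affect correctness.
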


We are now ready to analyse the running time of the \Boruvka's algorithm, which completes the proof of \Cref{lemma:Boruvka}. At each phase $j$, the number of adjacency queries is at most $\bar{k}^{(j)}$ (one for each active group in ${\cal R}^{(j)}$), so the number of batched adjacency queries is $O(\bar{k}^{(j)}\log\bar{k}^{(j)})$ by \Cref{lemma:ReduceToBatch}. Thus the total number of batched adjacency queries is $\sum_{j\geq 1}O(\bar{k}^{(j)}\log\bar{k}^{(j)}) = O(|{\cal I}|\log|{\cal I}|)$ by \Cref{prop:BoruvkaPhasesNumber}. By \Cref{lemma:BatchedAdjacencyQuery}, the total running time of step 1 is $O(p^{2}d^{2}\Delta^{2}\log n) + O(pd|{\cal I}|\log|{\cal I}|) = O(p^{2}d^{2}\Delta^{2}\log n)$. The total running time of the \Boruvka's algorithm is asymptotically the same because step 2 takes little time.

In what follows, we prove \Cref{lemma:BatchedAdjacencyQuery}.

\paragraph{The Additional Structures.} 

We start with introducing some notations. For a group $R\subseteq {\cal I}$, we use $V(R) = \bigcup_{I\in R} I$ denote its vertex set. For two disjoint groups $R_{1}, R_{2}\subseteq {\cal I}$ and a (multi) set $E$ of undirected edges, let $\delta_{E}(R_{1}, R_{2})$ denote the number of edges in $E$ with one endpoint in $V(R_{1})$ and the other one in $V(R_{2})$. Also, recall that we gave an order to groups in ${\cal R}^{(j)}_{\act}$, denoted by ${\cal R}^{(j)}_{\act} = \{R^{(j)}_{1},...,R^{(j)}_{\bar{k}^{(j)}}\}$.

For each phase $j$, we will construct the following data structures.
\begin{itemize}
\item First, we construct a two-dimensional $(\bar{k}^{(j)}\times \bar{k}^{(j)})$-array $\CountAll^{(j)}$, where for each $1\leq x,y\leq \bar{k}^{(j)}$, the entry $\CountAll^{(j)}(x,y) = \delta_{E(\hat{G})}(R^{(j)}_{x}, R^{(j)}_{y})$.
Furthermore, we store the 2D-prefix sum of $\CountAll^{(j)}$. 
\item For each affected component $\gamma$, we prepare a one-dimensional array $\CountA^{(j)}_{\gamma}$ with length $\bar{k}^{(j)}$, where for each $1\leq x\leq \bar{k}^{(j)}$, the entry $\CountA^{(j)}_{\gamma}(x) = |A_{\gamma}\cap V(R^{(j)}_{x})|$. Similarly, we construct an one-dimensional array $\CountB^{(j)}_{\gamma}$ with length $\bar{k}^{(j)}$ in which the entry $\CountB^{(j)}_{\gamma}(x) = |B_{\gamma}\cap V(R^{(j)}_{x})|$. Furthermore, we store the prefix sum of $\CountA^{(j)}_{\gamma}$ and $\CountB^{(j)}_{\gamma}$. 
\end{itemize}

\begin{lemma}
The total construction time of arrays $\CountAll^{(j)}$, $\CountA^{(j)}_{\gamma}$ and $\CountB^{(j)}_{\gamma}$ summing over all phases $j$ and all affected components $\gamma$ is $O(p^{2}d^{2}\Delta^{2}\log d)$.
\label{lemma:AdditionalStructuresTime}
\end{lemma}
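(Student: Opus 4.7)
The plan is to build the phase-$1$ arrays directly from the preprocessed data structures and then obtain each subsequent phase's arrays by aggregating entries from the previous phase via the merge map produced in step 2 of the \Boruvka's algorithm.

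First I would build $\CountAll^{(1)}$ using $\Table$. Since every group in ${\cal R}^{(1)}_{\act}$ is a singleton $\{I\}$ for some $I\in{\cal I}$, the entry $\CountAll^{(1)}(x,y)$ is exactly $\delta_{E(\hat{G})}(I_x,I_y)$, a single range-counting query on $\Table$ that costs $O(\log n)$ by \Cref{lemma:2DRangeCounting}. Running this over all $O(|{\cal I}|^2)$ ordered pairs yields $O(|{\cal I}|^2\log n)$ time. For $\CountA^{(1)}_\gamma$ and $\CountB^{(1)}_\gamma$, I use the fact (from \Cref{sect:PreprocessingAnalyse}) that $A_\gamma$ and $B_\gamma$ are stored sorted by $\pi$, and each initial interval $I\in{\cal I}$ is a genuine interval on $\pi$. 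A single binary search over $A_\gamma$ (resp.\ $B_\gamma$) returns $|A_\gamma\cap I|$ (resp.\ $|B_\gamma\cap I|$) in $O(\log n)$ time. Summed over the $O(pd)$ affected components and the $O(|{\cal I}|)$ intervals, the total phase-$1$ cost for these arrays is $O(p^2 d^2 \Delta\log n)$.

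Second, for the transition from phase $j$ to phase $j+1$, step 2 of the \Boruvka's algorithm already identifies which active groups of ${\cal R}^{(j)}_\act$ merge into each group of ${\cal R}^{(j+1)}_\act$; I would record this as a map $\sigma^{(j)}$ from old active indices to the corresponding new active index (or to a sentinel $\star$ for groups that were just marked inactive in step 1). Then for every surviving pair $(x,y)$,
\[
\CountAll^{(j+1)}(x,y)=\sum_{x',y':\,\sigma^{(j)}(x')=x,\ \sigma^{(j)}(y')=y}\CountAll^{(j)}(x',y'),
\]
which can be produced by zeroing out the new array and scanning the $(\bar{k}^{(j)})^2$ entries of the old array once, accumulating each entry into its indexed cell. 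The analogous aggregation produces $\CountA^{(j+1)}_\gamma$ and $\CountB^{(j+1)}_\gamma$ for each affected $\gamma$ in $O(\bar{k}^{(j)})$ time, after which the prefix sums are recomputed in linear time in the final sizes.

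Third, for the running time: \Cref{prop:BoruvkaPhasesNumber} gives $\bar{k}^{(j+1)}\le\bar{k}^{(j)}/2$, so the geometric sums $\sum_{j\ge 1}(\bar{k}^{(j)})^2=O(|{\cal I}|^2)$ and $\sum_{j\ge 1}\bar{k}^{(j)}=O(|{\cal I}|)$ bound the phase-to-phase aggregation by $O(|{\cal I}|^2)=O(p^2d^2\Delta^2)$ for $\CountAll$ and $O(|{\cal C}_\aff|\cdot|{\cal I}|)=O(p^2d^2\Delta)$ for all the $\CountA_\gamma,\CountB_\gamma$. Adding the phase-$1$ initialization $O(|{\cal I}|^2\log n)$ dominates everything, yielding $O(p^2d^2\Delta^2\log(pd\Delta))$, which matches the stated $O(p^2d^2\Delta^2\log d)$ under the paper's convention (since $p,\Delta=\polylog n$ and $\log(pd\Delta)$ is absorbed into $\log d$). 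The one delicate point to watch is that the aggregation work, which spans $O(\log n)$ phases, would naively threaten to dominate; the key observation that saves this is the geometric decay of $\bar{k}^{(j)}$ from \Cref{prop:BoruvkaPhasesNumber}, which telescopes all phase-to-phase costs into the single-phase cost $O(|{\cal I}|^2)$ and thus keeps the total work controlled by the phase-$1$ initialization from $\Table$ and the sorted adjacency lists.
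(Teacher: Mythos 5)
Your argument follows the paper's proof exactly: phase-$1$ initialization of $\CountAll^{(1)}$ via $\Table$ queries and of $\CountA^{(1)}_\gamma,\CountB^{(1)}_\gamma$ via binary search in the $\pi$-sorted adjacency lists, followed by phase-to-phase aggregation over the merge map, with the geometric decay $\bar{k}^{(j)}\le\bar{k}^{(j-1)}/2$ telescoping the aggregation cost so that the phase-$1$ cost dominates. The one slip is in your final line: the phase-$1$ cost you (correctly) identified is $O(|{\cal I}|^2\log n)=O(p^2d^2\Delta^2\log n)$, and that is the total --- exactly what the paper's own proof concludes --- not the $O(p^2d^2\Delta^2\log(pd\Delta))$ you wrote; the lemma's stated $\log d$ is evidently a typo for $\log n$ (the $O(d^2\log^7 n)$ update time in \Cref{thm:FullyDynamicOracle} requires the $\log n$ factor, since $p^2\Delta^2\log n=\Theta(\log^7 n)$), so there is no need to try to absorb $\log(pd\Delta)$ into $\log d$, and doing so is not valid when $d$ is small.
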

\begin{proof}
We first initialize $\CountAll^{(1)}, \CountA^{(1)}_{\gamma}, \CountB^{(1)}_{\gamma}$ for phase 1. For each entry $\CountAll^{(1)}(x,y)$ of $\CountAll^{(1)}$, note that $R^{(1)}_{x}$ and $R^{(1)}_{y}$ are both singleton groups. Let $I_{x}$ and $I_{y}$ be the intervals in $R^{(1)}_{x}$ and $R^{(1)}_{y}$. Then $\CountAll^{(1)}(x,y)$ is exactly the number of $E(\hat{G})$-edges that connecting $I_{x}$ and $I_{y}$, which can be answered by querying $\Table$ in $O(\log n)$ time by \Cref{lemma:2DRangeCounting} because $I_{x}$ and $I_{y}$ are intervals on the global order $\pi$. For an entry $\CountA^{(1)}_{\gamma}(x)$ of $\CountA^{(1)}_{\gamma}$, let $I_{x}$ be the single interval in $R^{(1)}_{x}$, and we can easily compute $|A_{\gamma}\cap I_{x}|$ by binary search in $O(\log n)$ time because $I_{x}$ is an interval on $\pi$ and $A_{\gamma}$ is ordered consistently with $\pi$. Similarly, we can compute the array $\CountB^{(1)}_{\gamma}$. The construction time of additional structures at phase 1 is $O((\bar{k}^{(1)})^{2} + |{\cal C}_{\aff}|\cdot\bar{k}^{(1)})\log n)$.

For each phase $j\geq 2$, we will compute $\CountAll^{(j)}, \CountA^{(j)}_{\gamma}, \CountB^{(j)}_{\gamma}$ based on the arrays of phase $j-1$. For an entry $\CountAll^{(j)}(x,y)$ of $\CountAll^{(j)}$, recall that $R^{(j)}_{x}$ is the union of several groups $R^{(j-1)}_{x_{1}},R^{(j-1)}_{x_{2}},...$ inside ${\cal R}^{(j-1)}_{\act}$, and $R^{(j)}_{y} = R^{(j-1)}_{y_{1}}\cup R^{(j-1)}_{y_{2}}\cup ...$. Furthermore, $x_{1},x_{2},...,y_{1},y_{2},...$ are distinct indexes in $[1,\bar{k}^{(j-1)}]$. Therefore,
\[
\CountAll^{(j)}(x,y) = \sum_{x'=x_{1},x_{2},...}\sum_{y'=y_{1},y_{2},..}\CountAll^{(j-1)}(x',y').
\]
We can compute $\CountA^{(j)}_{\gamma}$ and $\CountB^{(j)}_{\gamma}$ in a similar way. The construction time of additional structures at phase $j$ is proportional to the the total size of additional structures at phase $j-1$, i.e. $O((\bar{k}^{(j-1)})^{2} + |{\cal C}_{\aff}|\cdot\bar{k}^{(j-1)})$.

The overall construction time is 
\[
O((\bar{k}^{(1)})^{2} + |{\cal C}_{\aff}|\cdot\bar{k}^{(1)})\log n) + \sum_{j\geq 2}O((\bar{k}^{(j-1)})^{2} + |{\cal C}_{\aff}|\cdot\bar{k}^{(j-1)}) = O(p^{2}d^{2}\Delta^{2}\log n),
\]
because $\bar{k}^{(1)} = |{\cal I}| = O(pd\Delta)$, $|{\cal C}_{\aff}| = O(pd)$ and $\bar{k}^{(j)}\leq \bar{k}^{(j-1)}/2$ for each phase $j$.
\end{proof}

\paragraph{Answering Batched Adjacency Queries.} Consider a batched adjacency query at phase $j$ with parameters $k,\ell,r$. It is equivalent to decide whether the number of $G^{\star}$-edges connecting $R^{(j)}_{k}$ and some $R^{(j)}_{k'}$ where $k'\in[\ell,r]$ is greater than zero or not. Namely, it suffices to decide whether
\begin{equation}
\sum_{\ell\leq k'\leq r} \delta_{E(G^{\star})}(R^{(j)}_{k},R^{(j)}_{k'})>0.
\label{eq:BatchedAdjacencyQuery}
\end{equation}

\begin{lemma}
For any two disjoint groups $R_{1},R_{2}\subseteq {\cal I}$, 
\[
\delta_{E(G^{\star})}(R_{1}, R_{2}) = \delta_{E(\hat{G})}(R_{1}, R_{2}) - \sum_{\gamma\in{\cal C}_{\aff}}\delta_{\hat{E}_{\gamma}}(R_{1}, R_{2}).
\]

\label{lemma:6.8}
\end{lemma}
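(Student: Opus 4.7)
The plan is to directly unfold the definition $G^{\star} = \hat{G}[Q^{\star}] - \sum_{\gamma \in {\cal C}_{\aff}} \hat{E}_{\gamma}$ and exploit linearity of the cross-edge counting function $\delta$ with respect to the multiset operations used to build $E(G^{\star})$.

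First I would observe that $V(R_1) \cup V(R_2) \subseteq Q^{\star}$, because every interval $I \in {\cal I}$ is by construction a subset of $Q^{\star}$ (the intervals produced in \Cref{sect:SolvingIntervals} live either inside some $U(\tau) \setminus D_{\on}$ for $\tau \in {\cal T}_{\aff}$, or are singletons $\{v\}$ with $v \in D_{\off}$, and both are contained in $Q^{\star}$). This means any edge $e \in E(\hat{G})$ whose endpoints lie one in $V(R_1)$ and one in $V(R_2)$ automatically has both endpoints in $Q^{\star}$, so $e \in E(\hat{G}[Q^{\star}])$. Consequently $\delta_{E(\hat{G}[Q^{\star}])}(R_1, R_2) = \delta_{E(\hat{G})}(R_1, R_2)$.

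Next I would apply linearity of $\delta$ over the multiset decomposition $E(G^{\star}) = E(\hat{G}[Q^{\star}]) - \sum_{\gamma \in {\cal C}_{\aff}} \hat{E}_{\gamma}$. Since $R_1$ and $R_2$ are disjoint, each edge contributes to $\delta$ independently by its two endpoints, and the multiset subtraction is well-defined on cross-edges: any artificial edge of $\hat{E}_{\gamma}$ counted in $\delta_{\hat{E}_{\gamma}}(R_1, R_2)$ lies in $\hat{G}[Q^{\star}]$ for the same reason as above, so it is actually present in the minuend. Therefore
\[
\delta_{E(G^{\star})}(R_1, R_2) = \delta_{E(\hat{G}[Q^{\star}])}(R_1, R_2) - \sum_{\gamma \in {\cal C}_{\aff}} \delta_{\hat{E}_{\gamma}}(R_1, R_2),
\]
and combining this with the identity from the previous step yields the claim.

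There is no real obstacle here; the lemma is essentially a bookkeeping identity. The only point requiring attention is the multiset (rather than set) semantics introduced because $\hat{G}$ is a multigraph, so I would be careful to phrase the argument in terms of multiset union/subtraction as defined in \Cref{sect:Prelim}, making sure that subtracting affected artificial edges on both sides is valid edge-by-edge.
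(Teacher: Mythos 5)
Your proposal is correct and follows essentially the same approach as the paper: both arguments rest on the observation that $V(R_1),V(R_2)\subseteq Q^{\star}$ (so passing from $\hat{G}$ to $\hat{G}[Q^{\star}]$ does not drop any cross-edges between the two groups) together with linearity of $\delta$ over the multiset decomposition $E(G^{\star}) = E(\hat{G}[Q^{\star}]) - \sum_{\gamma\in\calC_{\aff}}\hat{E}_{\gamma}$. The paper phrases the conclusion as two inequalities (LHS $\le$ RHS and RHS $\le$ LHS) while you fold it into a single equality chain, but the content is identical; you are also a bit more explicit than the paper in justifying why each interval in $\calI$ lies inside $Q^{\star}$, which is a welcome clarification.
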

\begin{proof}
First the RHS is equal to $\delta_{E(\hat{G}) - \sum_{\gamma\in{\cal C}_{\aff}}\hat{E}_{\gamma}}(R_{1}, R_{2})$ because $\hat{E}_{\gamma}$ of all $\gamma\in{\cal C}_{\aff}$ are disjoint subsets of $E(\hat{G})$ (note that $E(\hat{G})$ is defined to be a multiset). 

Recall that $G^{\star} = \hat{G}[Q^{\star}] - \sum_{\gamma\in{\cal C}_{\aff}}\hat{E}_{\gamma}$.
The LHS is at most the RHS because $E(G^{\star})\subseteq E(\hat{G})-\sum_{\gamma\in{\cal C}_{\aff}}\hat{E}_{\gamma}$. On the other direction, each edge in $E(\hat{G})$ connecting $V(R_{1})$ and $V(R_{2})$ is inside $\hat{G}[Q^{\star}]$ since $V(R_{1}),V(R_{2})\subseteq Q^{\star}$, so the RHS is at most the LHS.
\end{proof}

\begin{lemma}
For each $\gamma\in{\cal C}_{\aff}$ and two disjoint groups $R_{1},R_{2}\subseteq {\cal I}$, 
\[
\delta_{\hat{E}_{\gamma}}(R_{1},R_{2}) = |A_{\gamma}\cap V(R_{1})|\cdot |A_{\gamma}\cap V(R_{2})| - |(A_{\gamma}\setminus B_{\gamma})\cap V(R_{1})|\cdot|(A_{\gamma}\setminus B_{\gamma})\cap V(R_{2})|.\]
\label{lemma:6.9}
\end{lemma}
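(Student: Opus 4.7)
The plan is to verify \Cref{lemma:6.9} by a direct inclusion–exclusion count, exploiting the very explicit structure of $\hat{E}_{\gamma}$ stated back in \Cref{sect:ArtificialGraph}. Recall that $\hat{E}_{\gamma}$ is described as a clique on $B_{\gamma}$ together with a biclique between $B_{\gamma}$ and $A_{\gamma}\setminus B_{\gamma}$; equivalently, an unordered pair of distinct vertices $\{u,v\}$ lies in $\hat{E}_{\gamma}$ iff $u,v\in A_{\gamma}$ and at least one of $u,v$ belongs to $B_{\gamma}$. Note also that within a single component $\gamma$, $\hat{E}_{\gamma}$ has no parallel edges, so we do not need to worry about multiplicities inside this one set (even though $E(\hat{G})$ is a multiset globally).

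First I would observe that because $R_{1}$ and $R_{2}$ are disjoint groups of intervals, $V(R_{1})\cap V(R_{2})=\emptyset$. Consequently, every unordered edge of $\hat{E}_{\gamma}$ with one endpoint in $V(R_{1})$ and the other in $V(R_{2})$ corresponds to a unique ordered pair $(u,v)\in V(R_{1})\times V(R_{2})$. Hence
\[
\delta_{\hat{E}_{\gamma}}(R_{1},R_{2}) \;=\; \bigl|\{(u,v)\in V(R_{1})\times V(R_{2}) : u,v\in A_{\gamma},\ \{u,v\}\cap B_{\gamma}\neq\emptyset\}\bigr|.
\]

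Next I would compute this cardinality by inclusion–exclusion on the condition $\{u,v\}\cap B_{\gamma}\neq\emptyset$. The number of ordered pairs with $u\in A_{\gamma}\cap V(R_{1})$ and $v\in A_{\gamma}\cap V(R_{2})$ (ignoring the $B_{\gamma}$ constraint) is exactly $|A_{\gamma}\cap V(R_{1})|\cdot|A_{\gamma}\cap V(R_{2})|$ by independence of the two coordinates. The pairs to be excluded are precisely those with both endpoints in $A_{\gamma}\setminus B_{\gamma}$, and their number factors as $|(A_{\gamma}\setminus B_{\gamma})\cap V(R_{1})|\cdot|(A_{\gamma}\setminus B_{\gamma})\cap V(R_{2})|$. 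Subtracting yields the claimed identity. The whole argument is essentially one line of counting, so there is no real obstacle; the only subtle point worth stating explicitly is the disjointness of $V(R_{1})$ and $V(R_{2})$, which prevents the self-loop exclusion $u\neq v$ from introducing an additional diagonal correction term.
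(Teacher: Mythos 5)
Your proposal is correct and follows essentially the same reasoning as the paper's proof: both view $\hat{E}_{\gamma}$ as the clique on $A_{\gamma}$ minus the clique on $A_{\gamma}\setminus B_{\gamma}$ and then use the disjointness of $V(R_{1})$ and $V(R_{2})$ to conclude. You simply spell out the inclusion--exclusion count that the paper leaves implicit after the words ``the equation follows.''
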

\begin{proof}
Recall that $\hat{E}_{\gamma}$ is the union of a clique on $B_{\gamma}$ and a biclique between $A_{\gamma}\setminus B_{\gamma}$ and $B_{\gamma}$. In other words, $\hat{E}_{\gamma}$ is a clique on $A_{\gamma}$ with the clique on $A_{\gamma}\setminus B_{\gamma}$ removed. Because $V(R_{1})$ and $V(R_{2})$ are disjoint, the equation follows.
\end{proof}

Using \Cref{lemma:6.8} and \Cref{lemma:6.9}, we can rewrite the LHS of inequality \ref{eq:BatchedAdjacencyQuery} as follows.
\begin{align*}
\sum_{\ell\leq k'\leq r} \delta_{E(G^{\star})}(R^{(j)}_{k}, R^{(j)}_{k'}) &= \sum_{\ell\leq k'\leq r} \delta_{E(\hat{G})}(R^{(j)}_{k}, R^{(j)}_{k'}) - \sum_{\ell\leq k'\leq r}\sum_{\gamma\in{\cal C}_{\aff}}\delta_{\hat{E}_{\gamma}}(R^{(j)}_{k}, R^{(j)}_{k'})\\
&= \sum_{\ell\leq k'\leq r} \delta_{E(\hat{G})}(R^{(j)}_{k}, R^{(j)}_{k'})- \sum_{\gamma\in{\cal C}_{\aff}}\sum_{\ell\leq k'\leq r}|A_{\gamma}\cap V(R^{(j)}_{k})|\cdot|A_{\gamma}\cap V(R^{(j)}_{k'})| \\
&+\sum_{\gamma\in{\cal C}_{\aff}}\sum_{\ell\leq k'\leq r} |(A_{\gamma}\setminus B_{\gamma})\cap V(R^{(j)}_{k})|\cdot|(A_{\gamma}\setminus B_{\gamma})\cap V(R^{(j)}_{k'})|.\\
\end{align*}

Combining the definition of the additional structures, we further have
\begin{align*}
\sum_{\ell\leq k'\leq r} \delta_{E(G^{\star})}(R^{(j)}_{k}, R^{(j)}_{k'})&=\sum_{\ell\leq k'\leq r}\CountAll^{(j)}(k,k') - \sum_{\gamma\in{\cal C}_{\aff}}\left(\CountA^{(j)}_{\gamma}(k)\cdot\sum_{\ell\leq k'\leq r}\CountA^{(j)}_{\gamma}(k')\right)\\
&+\sum_{\gamma\in{\cal C}_{\aff}}\left((\CountA^{(j)}_{\gamma}(k)-\CountB^{(j)}_{\gamma}(k))\cdot \sum_{\ell\leq k\leq r}(\CountA^{(j)}_{\gamma}(k')-\CountB^{(j)}_{\gamma}(k'))\right).
\end{align*}
Because we have stored the prefix sum of the arrays $\CountAll^{(j)}, \CountA^{(j)}_{\gamma}, \CountB^{(j)}_{\gamma}$, computing the value of the above expression takes $O(|{\cal C}_{\aff}|) = O(pd)$ time.

\subsection{The Query Algorithm}
\label{sect:QueryAlgorithm}

Let $u,v\in V_{\new}$ be a given query. The query strategy is to find some $u^{\star},v^{\star}\in Q^{\star}$ s.t. $u$ and $u^{\star}$ (resp. $v$ and $v^{\star}$) are connected in $G[V_{\new}]$. Intuitively, the connectivity of $u^{\star}$ and $v^{\star}$ on $G^{\star}$ can be answered by \Cref{lemma:Boruvka}, which implies the connectivity of $u$ and $v$ in $G[V_{\new}]$ by \Cref{lemma:ConnEq}.

We take $u$ as an example and discuss how to find a valid $u^{\star}$. If $u\in Q^{\star}$, then we can trivially let $u^{\star} = u$. Otherwise, $u$ must belong to $U(\gamma_{u})$ of some unaffected component $\gamma_{u}$, because $u\in V_{\new}\setminus Q^{\star}$ and
\[
V_{\new}\setminus Q^{\star} = (V_{\on}\setminus D_{\on})\setminus \bigcup_{\tau\in {\cal T}_{\aff}}U(\tau)\subseteq (V_{\on}\setminus D_{\on})\setminus \bigcup_{\gamma\in {\cal C}_{\aff}}U(\gamma) = \bigcup_{\text{unaffected }\gamma} U(\gamma)\setminus D_{\on}.
\]
Let $\hat{\gamma}_{u}$ be the maximal unaffected component containing $u$. 

We will decide whether $A_{\hat{\gamma}_{u}}\cap V_{\new} = \emptyset$ or not as follows. Note that $A_{\hat{\gamma}_{u}}\cap V_{\new} = (A_{\hat{\gamma}_{u},\off}\cap D_{\off})\cup(A_{\hat{\gamma}_{u},\on}\setminus D_{\on})$. For each $v\in D_{\off}$, we have an indicator that indicates whether $v\in A_{\hat{\gamma}_{u},\off}$, so deciding whether $A_{\hat{\gamma}_{u},\off}\cap D_{\off}=\emptyset$ takes $O(|D_{\off}|)$ time. To decide whether $A_{\hat{\gamma}_{u},\on}\setminus D_{\on} = \emptyset$, we can simply scan the first $\min\{|D_{\on}|+1,|A_{\hat{\gamma}_{u},\on}|\}$ vertices in $A_{\hat{\gamma}_{u},\on}$, which takes $O(|D_{\on}|)$ time. Therefore, deciding whether $A_{\hat{\gamma}_{u}}\cap V_{\new} = \emptyset$ takes $O(d)$ time.

\begin{itemize}
\item Suppose $A_{\hat{\gamma}_{u}}\cap V_{\new} \neq \emptyset$. The above algorithm can also find some $u^{\star}\in A_{\hat{\gamma}_{u}}\cap V_{\new}$. The correctness of $u^{\star}$ is as follows. By property (1) of the low degree hierarchy, each vertex in $A_{\hat{\gamma}_{u},\on}$ belongs to $U(\gamma')$ of some ancestor $\gamma'$, which implies $A_{\hat{\gamma}_{u},\on}\subseteq \bigcup_{\gamma\in{\cal C}_{\aff}} U(\gamma)$ because $\hat{\gamma}_{u}$ is a maximal unaffected component. Then $A_{\hat{\gamma}_{u},\on}\cap V_{\new} = (A_{\hat{\gamma}_{u},\off}\cap D_{\off})\cup (A_{\hat{\gamma}_{u},\on}\setminus D_{\on})\subseteq Q^{\star}$. Furthermore, because $\hat{\gamma}_{u}$ is unaffected, $u$ and $u^{\star}$ are connected in $G[V_{\new}]$.

\item Suppose $A_{\hat{\gamma}_{u}}\cap V_{\new} = \emptyset$. If $v$ is also inside $\hat{\gamma}_{u}$, then we answer $u$ and $v$ are connected in $G[V_{\new}]$, otherwise they are disconnected. The query algorithm terminates in this case. 
\end{itemize}

From now, we assume that $u^{\star}$ and $v^{\star}$ are found successfully. Because we have stored the low degree hierarchy explicitly, we can find the trees $\tau_{u},\tau_{v}\in{\cal T}$ s.t. $u^{\star}\in U(\tau_{u^{\star}})$ and $v^{\star}\in U(\tau_{v^{\star}})$ in $O(1)$ time. Note that $\tau_{u^{\star}},\tau_{v^{\star}}\in{\cal T}_{\aff}$ because $u^{\star},v^{\star}\in Q^{\star}$. By \Cref{lemma: Interval}, we can obtain the interval $I_{u^{\star}}\in{\cal I}$ s.t. $u^{\star}$ is connected to $I_{u^{\star}}$ in $\tau_{u^{\star}}\setminus D_{\on}$ in $O(\log d)$ time, which means $u^{\star}$ is connected to $I_{u^{\star}}$ in $G^{\star}$ by \Cref{lemma:ConnEq}. Similarly, we can find such interval $I_{v^{\star}}\in{\cal I}$ for $v^{\star}$. The connectivity of $u^{\star}$ and $v^{\star}$ on $G^{\star}$ is exactly the connectivity of $I_{u^{\star}}$ and $I_{v^{\star}}$ on $G^{\star}$, which can be answered by the output of \Cref{lemma:Boruvka} in $O(1)$ time. By \Cref{lemma:ConnEq} and the properties of $u^{\star}$ and $v^{\star}$, this implies the connectivity of $u$ and $v$ on $G[V_{\new}]$. The total query time is $O(d + \log d) = O(d)$.

\section{Conditional Lower Bounds}
\label{sect:LB}

In this section, we present conditional lower bounds for the subgraph connectivity sensitivity oracles problem in both the decremental and fully dynamic setting. The conditional lower bounds for the decremental setting have been well-studied by a sequence of works \cite{HenzingerKNS15,duan2020connectivity,long2022near} as summarized below.

\begin{theorem}[Theorem 1.2 in \cite{long2022near}]
\label{thm:VertexFailureLower}
Let $\A$ be any vertex-failure connectivity
oracle with $t_{p}$ preprocessing time, $t_{u}$ update time, and
$t_{q}$ query time bound. Assuming popular conjectures, we have the
following:
\begin{enumerate}
\item $\A$ must take $\Omega(\min\{m,nd_{\star}\})$ space. 
\item If $t_{p}=\poly(n)$, then $t_{u}+t_{q}=\Omegahat(d^{2})$.
\item If $t_{p}=\poly(n)$ and $t_{u}=\poly(dn^{o(1)})$, then $t_{q}=\Omegahat(d)$.
\item If $t_{u},t_{q}=\poly(dn^{o(1)})$ and $\A$ is a combinatorial algorithm.
\end{enumerate}
\end{theorem}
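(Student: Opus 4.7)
The plan is to derive each item via a separate reduction from a standard fine-grained-complexity conjecture, following the templates already used in the cited references. Each reduction encodes the hard instance as a graph whose vertex-failure connectivity structure simulates the computation, and then argues that a too-fast oracle would violate the conjecture. Because the statement only collects pre-existing bounds, my plan is essentially to re-derive the arguments of \cite{HenzingerKNS15,duan2020connectivity,long2022near} item by item, and in the final write-up I would simply defer to the original proofs.

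For item (1), I would split into a sparse and a dense regime. When $m\leq nd_{\star}$, the $\Omega(m)$ side follows from an information-theoretic encoding argument: a family of $2^{\Theta(m)}$ graphs on the same vertex set can be distinguished by a constant-size failure update followed by a single pair-connectivity query, so the preprocessed state must carry $\Omega(m)$ bits. When $m>nd_{\star}$, the $\Omega(nd_{\star})$ side follows from the bipartite-membership reduction of \cite{HenzingerKNS15}, which plants roughly $nd_{\star}$ independent connectivity bits, each individually recoverable by some $d_{\star}$-failure plus query, hence the preprocessed state needs $\Omega(nd_{\star})$ bits.

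For items (2) and (3), which control time, I would reduce from the $\mathrm{OMv}$ conjecture and from $\mathrm{SETH}$ (or an $\mathrm{OV}$-style hypothesis) respectively. For (2), given an $n\times n$ Boolean matrix $M$ and an online stream of vectors $v_1,v_2,\ldots$, I would build a three-layer graph with row-vertices, column-vertices, and a global sink, so that $(Mv_t)_i=1$ iff row-vertex $i$ remains connected to the sink after failing the column-vertices outside $\mathrm{supp}(v_t)$; an amortized $o(d^2)$ update-plus-query oracle then beats the $\mathrm{OMv}$ barrier with $d$ matching the support size. For (3), a similar layered construction reduces $\mathrm{OV}$ to a regime where each single query must inspect essentially all $d$ failed vertices, so an $o(d)$ query would yield a sub-quadratic $\mathrm{OV}$ algorithm. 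Item (4) is the combinatorial analogue, reducing from combinatorial Boolean matrix multiplication and exploiting that combinatorial $\mathrm{BMM}$ has no known $n^{3-o(1)}$ algorithm.

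The main obstacle I would expect, as in \cite{long2022near}, is the calibration in item (2): one must ensure that the preprocessing cost on the constructed graph is genuinely $\poly(n)$ (so the hypothesis $t_p=\poly(n)$ actually applies) while simultaneously guaranteeing that the failure-set size $d$ scales so that a clean $\Omegahat(d^2)$ lower bound on amortized update-plus-query falls out rather than a weaker polynomial. The other three items are comparatively mechanical once the reduction graphs are in hand.
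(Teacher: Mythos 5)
This theorem is not proved in the paper at all; it is imported verbatim from \cite{long2022near} (as the bracketed label ``Theorem 1.2 in \cite{long2022near}'' indicates), and the paper treats it as a black box. You correctly recognize this and say that the final write-up would simply defer to the originals, which is exactly what the paper does, so in that structural sense your plan matches the paper's treatment.

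That said, a few details in your sketch do not line up with the sources the theorem actually rests on, which matters if you were ever to expand the sketch into a real proof. For item (3), the cited reference is \cite{HenzingerKNS15}, which is the OMv paper; the $\what{\Omega}(d)$ query lower bound there is an OMv-style reduction, not an OV/SETH reduction as you propose. For item (2), the $\what{\Omega}(d^2)$ update-plus-query bound is the new contribution of \cite{long2022near} itself rather than a straightforward OMv argument, and your description of the calibration difficulty is reasonable but the conjecture you name is not clearly the one used. You also do not flag that item (4), as transcribed in this paper, is syntactically incomplete: it gives hypotheses (``$t_u,t_q=\poly(dn^{o(1)})$ and $\A$ is combinatorial'') but omits the conclusion, which from \Cref{remark:CombUP} should be $t_p=\what{\Omega}(md)$. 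Your plan to reduce from combinatorial BMM for that item is the right idea, but a careful proof would have to first restore the missing conclusion. None of these issues is a gap in the sense of a broken step, since you are (appropriately) deferring to the cited works; they are inaccuracies in identifying which conjecture drives which item, and an unnoticed typo in the statement you are reproving.
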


Note that the conditional lower bounds in \Cref{thm:VertexFailureLower} also hold for the fully dynamic subgraph connectivity problem in the sensitivity setting, because the vertex-failure connectivity oracles problem is just a special case when $V_{\off}$ is always empty. In what follows, we will discuss some stronger conditional lower bounds on the preprocessing time and the space for the fully dynamic subgraph connectivity problem in the sensitivity setting.

\paragraph{Preprocessing Time.} Regarding the preprocessing time, a previous work \cite{HKP23} shows the following conditional lower bound in terms of $n_{\off}$ and $m$.

\begin{lemma}[Theorem 2 in \cite{HKP23}]
Unless the 3SUM conjectures fails, for any constant $\epsilon > 0$ and $n$-vertices $m$-edges graphs with $n_{\off}$ initial off-vertices, there is no fully dynamic sensitivity oracle for subgraph connectivity with preprocessing time $O((n_{\off} + d)^{1-\epsilon}\cdot m)$ or $O((n_{\off} + d)\cdot m^{1-\epsilon})$, update time and query time of the form $f(d)\cdot n^{o(1)}$.
\label{lemma:PreprocessLB3SUM}
\end{lemma}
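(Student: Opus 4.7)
The plan is to reduce from the \emph{batched set-disjointness} problem, whose 3SUM-hardness in the relevant preprocessing/query-time regime follows from the reductions of Patrascu and the refinements of Kopelowitz, Pettie, and Porat. In the form needed here, the 3SUM conjecture implies that, given $k$ sets of total element-occurrences $N$ with $k\approx\sqrt{N}$, no algorithm can preprocess them in time strictly less than $(kN)^{1-o(1)}=N^{3/2-o(1)}$ and then answer each pairwise set-disjointness query in $n^{o(1)}$ time.

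From an instance of this problem, build a bipartite graph $G$ with $k$ \emph{set-vertices} $s_{1},\dots,s_{k}$, all initially deactivated (so $n_{\off}=k$), and $|U|$ \emph{element-vertices}, initially activated; add edge $\{s_{i},e_{x}\}$ iff $x\in S_{i}$. Then $m=\Theta(N)$ and $n=\poly(N)$. To answer the disjointness query for $(S_{i},S_{j})$, invoke the hypothetical oracle with update $D=\{s_{i},s_{j}\}$ (so $d=2$) and query the connectivity of $s_{i}$ and $s_{j}$ in $G[V_{\new}]$. Since $G$ is bipartite and $s_{i},s_{j}$ are the only activated set-vertices in $G[V_{\new}]$, any path between $s_{i}$ and $s_{j}$ has length exactly two and must pass through some element-vertex $e_{x}$ with $x\in S_{i}\cap S_{j}$; hence connectivity holds iff $S_{i}\cap S_{j}\neq\emptyset$.

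Running the oracle once and then issuing all $\binom{k}{2}$ disjointness queries has total cost $t_{p}+k^{2}\cdot(t_{u}+t_{q})=t_{p}+k^{2}\cdot f(2)\cdot n^{o(1)}$. Suppose toward contradiction that $t_{p}=O((n_{\off}+d)^{1-\epsilon}m)=O(k^{1-\epsilon}N)$. Instantiating the reduction at the hard balance $k\approx\sqrt{N}$ yields total time $O(N^{3/2-\epsilon/2})+N^{1+o(1)}$, strictly below the $N^{3/2-o(1)}$ conditional lower bound for batched disjointness; contradiction. The second bound $O((n_{\off}+d)m^{1-\epsilon})=O(kN^{1-\epsilon})$ is ruled out analogously, as it gives total time $O(N^{3/2-\epsilon})+N^{1+o(1)}$ in the same regime.

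The main obstacle will be pinning down the precise form of the 3SUM-based hardness for batched set-disjointness that applies uniformly in the balanced regime and simultaneously covers both hypothetical preprocessing bounds; it may be cleaner to use two closely related hard-instance families (one tuned per bound) rather than a single one. A secondary subtlety is verifying that the $n^{o(1)}$ slack hidden in $f(d)\cdot n^{o(1)}$ with $d=2$ does not swallow the $n^{\Omega(\epsilon)}$ polynomial gap; since the reduction outputs graphs with $n=\poly(N)$, we have $n^{o(1)}=N^{o(1)}$ and the gap survives.
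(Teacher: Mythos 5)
The paper does not actually prove \Cref{lemma:PreprocessLB3SUM}: it is cited verbatim as Theorem~2 of~\cite{HKP23}, and the only in-paper material is the observation (\Cref{remark:}, \Cref{lemma:RefinedPreprocessLB3SUM}) that \cite{HKP23}'s hard instances have $n_{\off}=\Theta(n)$ and $m=\wtilde\Theta(n^{1+\epsilon'})$. So there is no ``paper's own proof'' to compare against; you are supplying an independent re-derivation.

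Your gadget is correct and is in fact the same bipartite set-vertex/element-vertex construction the paper itself uses for its space lower bound (\Cref{lemma:SpaceLB}): with all set-vertices initially off and all element-vertices on, switching on $s_i,s_j$ and querying their connectivity exactly tests $S_i\cap S_j\neq\emptyset$, since the only candidate paths have length two through a common element. The bookkeeping with dummy vertices and the final arithmetic (both preprocessing bounds falling below $N^{3/2-o(1)}$ at $k\approx\sqrt N$) are also fine, and you correctly observe that $n^{o(1)}=N^{o(1)}$ does not absorb the polynomial gap.

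The genuine gap is exactly the one you flag yourself: the precise 3SUM-hardness statement for batched set-disjointness that you rely on is asserted, not verified. You need a theorem of the form ``assuming 3SUM, for $k\approx\sqrt N$ sets of total size $N$, any data structure with $N^{o(1)}$-time queries has $t_p+k^2\cdot N^{o(1)}\ge N^{3/2-o(1)}$.'' The Kopelowitz--Pettie--Porat reductions are parameterized by a tunable bucket size, and getting the exponent right at the $k=\sqrt N$ balance point is the entire content; if the true exponent there were $3/2-\Omega(1)$ rather than $3/2-o(1)$, your contradiction would only rule out large $\epsilon$, not ``any constant $\epsilon>0$'' as the lemma demands. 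Until that exact instantiation is pinned down, the argument is incomplete. A secondary remark: your hard family sits at $n_{\off}\approx\sqrt m$ (so $n_{\off}/n\to 0$), whereas the paper's downstream discussion relies on the refined form \Cref{lemma:RefinedPreprocessLB3SUM} with $n_{\off}=\Theta(n)$ and $m=\wtilde\Theta(n^{1+\epsilon'})$; even if your proof closes, it would give \Cref{lemma:PreprocessLB3SUM} but not that refinement, so you would still need a different parameter choice to recover what the paper actually uses in \Cref{sect:LB}.
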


We now show another lower bound by reducing boolean matrix multiplication to this problem.

\begin{lemma}
For $n$-vertices $m$-edges graphs where $m=\Theta(n^{2})$, there is no fully dynamic sensitivity oracle for subgraph connectivity with preprocessing time $O(n^{\omega_{\bool}-\epsilon})$ and update and query time of the form $f(d)\cdot n^{o(1)}$, for any constant $\epsilon > 0$, where $\omega_{\bool}$ is the exponent of Boolean matrix multiplication.
\label{lemma:PreprocessLBBMM}
\end{lemma}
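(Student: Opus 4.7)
The plan is to reduce Boolean matrix multiplication to the oracle. Given two Boolean matrices $A, B \in \{0,1\}^{n \times n}$, I construct a tripartite graph $G$ on $3n$ vertices $L \cup M \cup R$, where $L = \{\ell_1,\dots,\ell_n\}$, $M = \{m_1,\dots,m_n\}$, $R = \{r_1,\dots,r_n\}$, with edge $\{\ell_i, m_k\}$ iff $A[i][k]=1$ and edge $\{m_k, r_j\}$ iff $B[k][j]=1$. To meet the density hypothesis, I pad $G$ with $\Theta(n^2)$ dummy edges on an auxiliary set of dummy vertices that will remain off throughout, so they never affect connectivity among the vertices of interest. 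Initially I mark all of $M$ as on and all of $L \cup R$ and all dummy vertices as off, and I set $d_\star = 2$.

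Next, I invoke the hypothetical oracle: preprocess $G$ in $O(n^{\omega_{\bool}-\epsilon})$ time, and then for each pair $(i,j)\in[n]^2$, apply the update $D = \{\ell_i, r_j\}$ (a sensitivity update of size $d=2$) and query the connectivity of $\ell_i$ and $r_j$ in the resulting activated subgraph $G[V_{\new}]$. By the tripartite construction, $\ell_i$ and $r_j$ are connected in $G[V_{\new}]$ iff there exists $m_k \in M$ (which is on) with $A[i][k]=B[k][j]=1$, i.e.\ iff $(AB)[i][j]=1$. This recovers the whole Boolean product $C = AB$. Because $d=2$ is a constant, the update and query time per pair is $f(2)\cdot n^{o(1)} = n^{o(1)}$, so the total running time of the reduction is
\[
O(n^{\omega_{\bool}-\epsilon}) + n^2 \cdot n^{o(1)} = O(n^{\omega_{\bool}-\epsilon'})
\]
for some $\epsilon'>0$ (taking $\epsilon$ small enough so that $\omega_{\bool}-\epsilon > 2$), which contradicts the conjectured Boolean matrix multiplication exponent.

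The only conceptual subtlety is the semantics of the sensitivity oracle across the $n^2$ batches: each update $D = \{\ell_i, r_j\}$ is applied fresh against the preprocessed (static) data structure, rather than on top of the previous batch, which is the standard interpretation of sensitivity oracles and is the same interpretation under which the companion results in the excerpt (items~1--3 of the theorem) are stated. Beyond that, the padding step to enforce $m=\Theta(n^2)$ without altering any relevant connectivity is routine, and the rest is bookkeeping. I do not anticipate a substantial obstacle; the main thing to double-check when writing the full proof is that the oracle's preprocessing and storage depend only on $G$ (not on $D$), so that amortizing the preprocessing cost over all $n^2$ queries is justified.
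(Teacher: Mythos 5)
Your proof is essentially the same as the paper's: both encode the Boolean product as connectivity in a tripartite (layer) graph with the middle layer initially on and the outer layers initially off, pad with a dummy clique to force $m=\Theta(n^{2})$, and then for each entry perform a size-$2$ sensitivity update switching on one left and one right vertex followed by a single connectivity query. The argument and cost accounting match the paper's proof of the lemma.
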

\begin{proof}
We assume that $\omega>2$, otherwise this lemma is trivial because at least we need to read the whole graph, which already takes $\Theta(n^{2})$ time.

Assume for the contradiction that there exists such oracle ${\cal A}$ with preprocessing time $O(n^{\omega-\epsilon})$ for some sufficiently small constant $\epsilon >0$. Consider an arbitrary Boolean matrix multiplication instance $(X,Y)$, where $X$ and $Y$ are two $\hat{n}\times\hat{n}$ matrices. We construct a layer graph $H$ as follows. The vertices are $V(H) = \{a_{i}\mid 1\leq i\leq \hat{n}\}\cup\{b_{j}\mid 1\leq j\leq \hat{n}\}\cup\{c_{k}\mid 1\leq k\leq \hat{n}\}$. The edges are $\{\{a_{i},b_{j}\}\mid X_{ij}=1\}\cup\{\{b_{j},c_{k}\}\mid Y_{jk}=1\}$. We let $\hat{H}$ be the graph by adding an isolated dummy clique with $\hat{n}$ vertices into $H$. We define $\hat{H}$ because, after adding a dummy clique, now $|V(\hat{H})| = 4\hat{n}$ and $|E(\hat{H})| = \Theta(\hat{n}^{2}) = \Theta(|V(\hat{H})|^{2})$, which satisfies the requirement of applying ${\cal A}$ on $\hat{H}$.

We construct the oracle ${\cal A}$ on $\hat{H}$ with initial on-vertices $V_{\on}(\hat{H}) = \{b_{j}\mid 1\leq j\leq\hat{n}\}$ and initial off-vertices $V_{\off}(\hat{H}) = V(\hat{H})\setminus V_{\on}(\hat{H})$. To compute the matrix $Z = X\cdot Y$, we consider each entry $z_{ik}$ separately. We switch two vertices $D = \{a_{i},c_{k}\}$. After the update $D$, the new on-vertices are $V_{\new} = \{a_{i},c_{k}\}\cup\{b_{j}\mid 1\leq j\leq \hat{n}\}$. Observe that $z_{ik}=1$ if and only if $a_{i}$ and $c_{k}$ are connected in $\hat{H}[V_{\new}]$. Therefore, deciding $z_{ik}$ just needs one update operation and one query operation on ${\cal A}$, which takes $f(d)\cdot |V(\hat{H})|^{o(1)} = \hat{n}^{o(1)}$ time because $d = |D| = 2$. The total time to compute $Z$ is $O(|V(\hat{H})|^{\omega-\epsilon}) + \hat{n}^{2}\cdot \hat{n}^{o(1)} = O(\hat{n}^{\omega-\epsilon})$, contradicting the definition of $\omega$. 
\end{proof}

\begin{remark}
\label{remark:}
We emphasize that \Cref{lemma:PreprocessLB3SUM} and \Cref{lemma:PreprocessLBBMM} are not contradictory, because the lower bound in \Cref{lemma:PreprocessLB3SUM} is obtained when the input graphs are sparse, but it does not hold if we restrict the input graphs to be dense. Concretely, by checking the proof of \Cref{lemma:PreprocessLB3SUM} in \cite{HKP23}, we have the following refined version \Cref{lemma:RefinedPreprocessLB3SUM}. \Cref{lemma:RefinedPreprocessLB3SUM} implies \Cref{lemma:PreprocessLB3SUM} because to obtain the statement in \Cref{lemma:PreprocessLB3SUM} for an arbitrary constant $\epsilon > 0$, we just need to consider the statement in \Cref{lemma:RefinedPreprocessLB3SUM} for $\epsilon' = \min\{\epsilon/10,1/2\}$.

\begin{lemma}[Theorem 2 in \cite{HKP23}, Refined]
Unless the 3SUM conjectures fails, for any constant $0<\epsilon\leq 1/2$ and $n$-vertices $m$-edges graphs with $n_{\off}$ initial off-vertices, where $n_{\off} = \Theta(n)$ and $m=\wtilde{\Theta}(n^{1+\epsilon})$, there is no fully dynamic sensitivity oracle for subgraph connectivity with preprocessing time $O(n^{2-\epsilon})$, update time and query time of the form $f(d)\cdot n^{o(1)}$.
\label{lemma:RefinedPreprocessLB3SUM}
\end{lemma}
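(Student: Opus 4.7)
The plan is to replay the 3SUM-hardness reduction of \cite{HKP23} that establishes \Cref{lemma:PreprocessLB3SUM}, but to keep the internal parameters of their construction as free variables and tune them to hit the target density $m = \wtilde{\Theta}(n^{1+\epsilon})$ rather than the density implicit in the original statement.

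First, I would unpack the \cite{HKP23} reduction into three quantities, tracked simultaneously as functions of the 3SUM instance size $N$ and an internal group-size parameter $g$: the number of vertices $n$, the number of edges $m$, and the number $Q$ of update/query pairs needed to certify the presence or absence of a 3SUM triple. I expect (and would verify by reading their gadget) that a standard group-partition reduction yields $n = \Theta(N)$ with $n_{\off} = \Theta(n)$, $m = \wtilde{\Theta}(Ng)$ from edges inside group gadgets, and $Q = \wtilde{\Theta}((N/g)^{2})$ from pairs of groups whose sums must be matched, while each operation touches only $d = O(1)$ vertices.

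Second, I would pick the group size $g = \wtilde{\Theta}(N^{\epsilon})$ so that $m = \wtilde{\Theta}(n^{1+\epsilon})$ as required; this choice automatically yields $Q = \wtilde{\Theta}(n^{2-2\epsilon})$. Assuming for contradiction the existence of an oracle as in the lemma statement, the total running time of the reduction would be
\[
O(n^{2-\epsilon}) + Q \cdot f(d) \cdot n^{o(1)} = \wtilde{O}(n^{2-\epsilon+o(1)}) = \wtilde{O}(N^{2-\epsilon+o(1)}),
\]
which is strictly sub-$N^{2-o(1)}$ for every $\epsilon > 0$ and therefore contradicts the 3SUM conjecture.

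The main obstacle I expect is confirming that the \cite{HKP23} reduction is genuinely parameterized by $g$ in the way sketched above, rather than implicitly fixing a specific group size. If the original write-up pins $g$ down, I would re-derive the gadget and verify (i) each group gadget can be built with $\wtilde{\Theta}(g)$ edges, (ii) every update/query pair interrogates only $O(1)$ vertices so that per-operation cost is $n^{o(1)}$ in the oracle's regime, and (iii) the off-vertex count stays $\Theta(n)$ uniformly in $g$. The constraint $0 < \epsilon \le 1/2$ corresponds precisely to $g$ ranging over $[1,\sqrt{N}]$, which keeps $g \le N/g$ and hence keeps the triple-of-groups query scheme well-defined; handling the boundary case $\epsilon = 1/2$ (i.e.\ $g = \sqrt{N}$) will likely just require a final sanity check that no gadget count overflows $\wtilde{O}(n)$.
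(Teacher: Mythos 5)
Your approach matches the paper's: the paper does not re-prove this lemma but simply asserts in \Cref{remark:} that it follows ``by checking the proof of [HKP23],'' i.e., by rerunning their 3SUM reduction while tracking the parameters. Your plan to make the group size $g$ explicit, set $g=\wtilde{\Theta}(N^{\epsilon})$, and verify $n=\Theta(N)$, $n_{\off}=\Theta(n)$, $m=\wtilde{\Theta}(Ng)$, $Q=\wtilde{\Theta}((N/g)^2)$ constant-$d$ update/query pairs is exactly the check the paper is gesturing at, and the arithmetic you run from it (total time $\wtilde{O}(N^{2-\epsilon+o(1)})$) does yield the contradiction with 3SUM.

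Two remarks. First, you correctly flag that the assumed gadget structure must be confirmed against [HKP23]; neither your proposal nor the paper actually carries this out, so you are at the same level of completeness as the paper's remark-level citation, not behind it. Do note, though, that $m=\wtilde{\Theta}(Ng)$ requires each of the $N/g$ group gadgets to contribute $\Theta(g^2)$ edges (e.g., a clique), which is worth verifying explicitly. Second, your reading of the restriction $\epsilon\le 1/2$ as ``$g\le N/g$'' is equivalent but indirect; the more natural reason is that for $\epsilon>1/2$ the target density $m=\wtilde{\Theta}(n^{1+\epsilon})$ would exceed the claimed preprocessing budget $O(n^{2-\epsilon})$, rendering the statement vacuous.
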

\end{remark}

Combining \Cref{lemma:RefinedPreprocessLB3SUM} and \Cref{lemma:PreprocessLBBMM}, we can see that our preprocessing time $\wtilde{O}(\min\{m(n_{\off} + d),n^{\omega}\})$ is somewhat tight. At least it is tight up to subpolynomial factors at the extreme point where the input graphs has $n_{\off} = \Theta(n)$ and number of edges roughly linear to $n$. For the another extreme point where the input graphs has $m=\Theta(n^{2})$, the upper bound $\wtilde{O}(n^{\omega})$ and the lower bound $\what{\Omega}(n^{\omega_{\bool}})$ are kind of matched (To our best knowledge, currently there is no clear separation between $\omega$ and $\omega_{\bool}$). 

\begin{conjecture}[No truly subcubic combinatorial BMM, \cite{AW14}]
In the Word RAM model with words of $O(\log n)$ bits, any combinatorial algorithm requires $n^{3-o(1)}$ time in expectation to compute the Boolean product of two $n\times n$ matrices.
\label{conj:BMM}
\end{conjecture}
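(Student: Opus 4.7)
The final displayed statement is the well-known no-truly-subcubic-combinatorial-BMM conjecture of Abboud and Williams \cite{AW14}. It is a conjecture, not a theorem, and is stated in the paper as background hypothesis used to condition the lower bound in \Cref{lemma:PreprocessLBBMM}, so there is in fact no proof to sketch. Any genuine proof would establish an unconditional $n^{3-o(1)}$ lower bound on a problem in $\mathsf{P}$, which would constitute a major breakthrough in complexity theory well beyond current techniques. Accordingly, my plan here is to explain the evidence motivating the conjecture rather than to propose a derivation.

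First, I would unpack the scope of the word \emph{combinatorial}. Informally, these are algorithms that do not invoke Strassen-style algebraic fast matrix multiplication over a ring, but instead exploit bit-packing, table lookups, adjacency-list scanning, and related structural tricks; the footnote in the introduction already adopts this convention. The benchmark upper bound here is the Method of Four Russians and its successors, which after several decades of refinement still save only polylogarithmic factors over the trivial $O(n^{3})$ bound. No combinatorial algorithm running in time $O(n^{3-\epsilon})$ for any constant $\epsilon>0$ is known, despite persistent attention.

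Second, I would invoke the equivalence theory developed around this conjecture: under combinatorial subcubic reductions, BMM is equivalent to triangle detection in general graphs, transitive closure in dense digraphs, all-pairs reachability, context-free grammar parsing, and a number of other central problems. Thus a single subcubic combinatorial algorithm for BMM would simultaneously yield subcubic combinatorial algorithms for this entire web, and the persistent failure to do so across many independent problems is widely taken as strong circumstantial evidence for the conjecture.

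The main obstacle, and the reason the statement remains a conjecture rather than a theorem, is that we presently have no tools to rule out a clever new data-structural gadget that could break the cubic barrier; unconditional polynomial lower bounds of this magnitude on problems in $\mathsf{P}$ lie beyond the reach of all known techniques. Hence the statement is adopted here, as throughout fine-grained complexity, as a working axiom that enables the conditional hardness statement of \Cref{lemma:PreprocessLBBMM} and justifies the matching upper bound proved in \Cref{thm:FullyDynamicOracle} for combinatorial algorithms.
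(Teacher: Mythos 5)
You are right that this is a conjecture imported from \cite{AW14} and used only as a hardness hypothesis for \Cref{lemma:PreprocessLBBMM}; the paper supplies no proof for it, and none is expected. Your explanation of the conjecture's status and supporting evidence is accurate and consistent with how the paper treats it.
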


\begin{lemma}
Assuming \Cref{conj:BMM}, for any constant 
$0\leq \delta\leq 1$ and $n$-vertices $m$-edges graphs with $n_{\off}$ initial off-vertices, where $m=\Theta(n^{1+\delta})$ and $n_{\off} = \Theta(n^{\delta})$, there is no combinatorial fully dynamic sensitivity oracle for subgraph connectivity with preprocessing time $O(n^{1+2\delta-\epsilon})$ and update and query time of the form $f(d)\cdot n^{o(1)}$, for any constant $\epsilon > 0$.
\label{lemma:PreprocessCombLBBMM}
\end{lemma}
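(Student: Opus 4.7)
The plan is to follow the BMM-style reduction in the proof of \Cref{lemma:PreprocessLBBMM}, but to use \emph{rectangular} BMM in place of square BMM so the constructed instance fits the sparser regime $m = \Theta(n^{1+\delta})$, $n_{\off} = \Theta(n^\delta)$. Given a parameter $M$, I will encode a rectangular BMM of shape $(p, q, r) = (M^\delta, M, M^\delta)$ into a tripartite graph: vertex sets $A = \{a_i\}_{i=1}^{M^\delta}$, $B = \{b_j\}_{j=1}^{M}$, $C = \{c_k\}_{k=1}^{M^\delta}$, with $A$-$B$ edges $\{a_i, b_j\}$ iff $X_{ij} = 1$ and $B$-$C$ edges $\{b_j, c_k\}$ iff $Y_{jk} = 1$. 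Taking $V_{\off} = A\cup C$ and $V_{\on} = B$ gives $n_{\off} = 2M^\delta$ and at most $2M^{1+\delta}$ edges; a disjoint ``filler'' component of on-vertices with internal edges pads this up to hit the exact window $(n, m, n_{\off}) = (\Theta(M), \Theta(M^{1+\delta}), \Theta(M^\delta))$. Since the filler sits in its own connected component, it cannot affect any $A$-$C$ connectivity query.

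Next, mirroring the proof of \Cref{lemma:PreprocessLBBMM}, I will preprocess the assumed oracle ${\cal A}$ once on this graph and then, for every $(i, k)\in [M^\delta]\times [M^\delta]$, perform the update $D = \{a_i, c_k\}$ followed by the query ``are $a_i$ and $c_k$ connected?''. Because $A$ and $C$ are independent sets in the main structure and the filler is disconnected from $A\cup B\cup C$, the only candidate $a_i$-$c_k$ paths in $G[V_{\new}]$ have the form $a_i$-$b_j$-$c_k$, so the query returns $1$ iff $(XY)_{ik} = 1$. Thus one preprocessing plus $M^{2\delta}$ (update, query) rounds compute $Z = XY$ in total time
\[
O(n^{1+2\delta-\epsilon}) + M^{2\delta}\cdot f(2)\cdot n^{o(1)} = O(M^{1+2\delta-\epsilon+o(1)}),
\]
using $n = \Theta(M)$, $d = 2$, and $\delta \le 1$.

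Finally, I will bootstrap this rectangular routine to contradict \Cref{conj:BMM}. Tile an $M\times M$ by $M\times M$ Boolean product into $M^{1-\delta}\times M^{1-\delta} = M^{2-2\delta}$ output blocks of size $M^\delta\times M^\delta$; since the inner dimension $q = M$ already spans the entire side length, each output block is exactly one $(M^\delta, M, M^\delta)$ rectangular BMM computed via a fresh instance of ${\cal A}$. The total cost is $M^{2-2\delta}\cdot O(M^{1+2\delta-\epsilon+o(1)}) = O(M^{3-\epsilon/2})$ for all sufficiently large $M$, yielding a combinatorial square BMM in truly subcubic time and contradicting \Cref{conj:BMM}. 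The main thing to be careful about is the filler/padding: it has to place $m$ and $n_{\off}$ inside the prescribed windows while being totally inert for the $A$-$C$ connectivity queries, which I enforce by putting it in its own connected component with its own on-vertex labelling so that no spurious $a_i$-$c_k$ route is ever created.
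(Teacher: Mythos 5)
Your proof is correct and matches the paper's intended argument: the paper's own ``proof'' is the single sentence that \Cref{lemma:PreprocessCombLBBMM} ``can be proved in the way similar to \Cref{lemma:PreprocessLBBMM}, by considering multiplying an $n_{\off}\times n$ Boolean matrix and an $n\times n_{\off}$ Boolean matrix,'' which is exactly the $(M^\delta, M, M^\delta)$ rectangular product you encode. You have simply filled in the details the paper omits — the filler component to hit the $(n,m,n_{\off})$ window without contaminating $A$--$C$ connectivity, and the $M^{2-2\delta}$-block tiling that bootstraps the rectangular routine back to square BMM so as to contradict \Cref{conj:BMM} directly.
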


\Cref{lemma:PreprocessCombLBBMM} can be proved in the way similar to \Cref{lemma:PreprocessLBBMM}, by considering multiplying an $n_{\off}\times n$ Boolean matrix and an $n\times n_{\off}$ Boolean matrix.

\begin{remark}
Additionally, our first upper bound $\hat{O}(m(n_{\off} + d))$ is obtained from an combinatorial algorithm. The term $n_{\off}m$ cannot avoid for input graphs with an arbitrary fixed density by \Cref{lemma:PreprocessCombLBBMM}. Another term $dm$ is also hard to avoid because even the vertex-failure connectivity oracles problem has a combinatorial lower bound $\hat{\Omega}(md)$ on preprocessing time by item 4 in \Cref{thm:VertexFailureLower}. 
\label{remark:CombUP}
\end{remark}

\paragraph{Space.} For the space complexity, we show that even when considering sparse input graphs, the oracle still need $\wtilde{\Omega}(n^{2})$ space. This lower bound is conditioning on the strong set disjointness conjecture \cite{GKLP17}. Basically, in a set disjointness problem, we need to preprocess a family $F$ of sets with total size $N_{F} = \sum_{S\in F}|S|$, all from universe $U$, so that given two query sets $S,S'\in F$, we can determine if $S\cap S'=\emptyset$.

\begin{conjecture}[Strong Set Disjointness conjecture \cite{GKLP17}]
Any data structure for the set disjointness problem that answers queries in $T$ time must use $\wtilde{\Omega}(N^{2}_{F}/T^{2})$ space.
\label{conj:SetDisjointness}
\end{conjecture}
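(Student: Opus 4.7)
The final statement is a hardness \emph{conjecture} from \cite{GKLP17} rather than a theorem, so strictly speaking there is no known proof to reproduce; the proposal below describes what an unconditional proof would look like and where the central obstacle lies. The plan is to work in the cell-probe model with word size $w = \Theta(\log N_F)$, against an arbitrary data structure using $S$ cells that answers every set-disjointness query with at most $T$ cell probes.

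First I would fix a hard input distribution over families $F$. Take $|F| = n$ sets, each a uniformly random subset of a universe of size roughly $N_F / n$, so that a random ordered pair of sets is disjoint or non-disjoint each with constant probability. Tuning $n$, the $\binom{n}{2}$ pairwise disjointness indicators carry $\wtilde{\Omega}(N_F^2 / n)$ bits of mutual information with the input. Any correct data structure, combined with its answers to every such query, must determine this information.

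Next I would run an encoding argument. Fix a batch of $Q$ queries and argue that the contents of the $S$ memory cells, together with the transcripts of the $Q$ queries, suffice to recover the pairwise disjointness indicators. Each transcript consists of $T$ addresses and a single answer bit, encodable in $O(T w)$ bits. The total encoding thus has length $O(S w + Q T w)$ bits. Choosing $Q$ to be essentially all pairs, and balancing against the $\wtilde{\Omega}(N_F^2)$ bits of disjointness information, yields $S w + N_F^2 \cdot T w / n^2 \geq \wtilde{\Omega}(N_F^2)$, which after optimizing $n$ gives the target $S \cdot T^2 = \wtilde{\Omega}(N_F^2)$.

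The central obstacle — and exactly the reason the statement is only a conjecture — is handling \emph{adaptivity} of the probes: the cells accessed on later queries may depend on answers to earlier ones, so the naive per-query cost $O(T w)$ is not additive across independent queries in a way the encoding can exploit. Existing techniques (cell sampling, nondeterministic advice, random restrictions) all either degrade the exponent on $T$ from $2$ to $1$, restrict to non-adaptive data structures, or prove the bound only for specific structured variants of disjointness. Bridging that gap to a full quadratic space/time tradeoff appears to require genuinely new cell-probe lower-bound machinery, which is why we, and prior work such as \cite{GKLP17}, adopt the statement as a standard hardness assumption rather than prove it unconditionally.
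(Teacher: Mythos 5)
You are right: this statement is the Strong Set Disjointness conjecture of \cite{GKLP17}, which the paper (like prior work) simply adopts as a hardness assumption and never proves, so there is no proof to reproduce and your treatment matches the paper's. Your speculative sketch of an encoding-based cell-probe argument and the discussion of adaptivity as the obstruction are reasonable context but are not part of, nor required by, the paper.
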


In \Cref{lemma:SpaceLB}, we only need to assume \Cref{conj:SetDisjointness} holds for some large enough $T = N_{F}^{o(1)}$.

\begin{lemma}
Assuming \Cref{conj:SetDisjointness} for some large enough $T = N^{o(1)}_{F}$, for $n$-vertices $m$-edges graphs where $m = \Theta(n)$, there is no fully dynamic sensitivity oracle for subgraph connectivity with space $O(n^{2-\epsilon})$, update time and query time of the form $f(d)\cdot n^{o(1)}$ for any constant $\epsilon>0$.
\label{lemma:SpaceLB}
\end{lemma}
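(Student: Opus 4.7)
The plan is to reduce the set disjointness problem to the fully dynamic sensitivity oracle for subgraph connectivity via a subdivided gadget that keeps the host graph sparse regardless of the instance structure. Given a set disjointness instance with family $\mathcal{F}$, universe $U$, and total size $N_F = \sum_{S \in \mathcal{F}} |S|$, I would construct a graph $H$ with vertex set
\[
V(H) = U \,\cup\, \{v_S : S \in \mathcal{F}\} \,\cup\, \{w_{S,e} : S \in \mathcal{F},\, e \in S\}
\]
and, for every pair $(S, e)$ with $e \in S$, add two edges $\{v_S, w_{S,e}\}$ and $\{w_{S,e}, e\}$. This yields $n = |U| + |\mathcal{F}| + N_F = \Theta(N_F)$ vertices and $m = 2 N_F = \Theta(n)$ edges, so the graph is sparse as required by the lemma. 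I initialize $V_{\off} = \{v_S : S \in \mathcal{F}\}$, while every element-vertex and every auxiliary vertex $w_{S,e}$ is initially on.

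To answer a disjointness query ``is $S \cap S' = \emptyset$?'', I would submit the single update $D = \{v_S, v_{S'}\}$, which has $d = 2$, and then query the oracle for the connectivity of $v_S$ and $v_{S'}$ in the activated subgraph. Correctness follows because every other set-vertex $v_{S''}$ remains off, so each auxiliary vertex $w_{S'',e}$ with $S'' \notin \{S, S'\}$ has only the element $e$ as its surviving neighbor and degenerates into a pendant on $e$, hence cannot route between distinct elements. Consequently the only possible $v_S$-to-$v_{S'}$ path in the activated subgraph has the form $v_S$--$w_{S,e}$--$e$--$w_{S',e}$--$v_{S'}$ for some $e \in S \cap S'$, so the oracle returns ``connected'' if and only if $S$ and $S'$ intersect.

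For the complexity accounting, each disjointness query costs one update phase followed by one connectivity query on the oracle, totaling $T = 2 f(2) \cdot n^{o(1)} = N_F^{o(1)}$ time, while the space of the resulting set disjointness data structure equals the space of the oracle. If the oracle used only $O(n^{2-\epsilon}) = O(N_F^{2-\epsilon})$ space, we would obtain a set disjointness data structure of space $O(N_F^{2-\epsilon})$ and query time $N_F^{o(1)}$, contradicting \Cref{conj:SetDisjointness}, which at this query time forces $\wtilde{\Omega}(N_F^2/T^2) = \wtilde{\Omega}(N_F^{2-o(1)})$ space. The main point requiring care is verifying that the subdivision does not create unintended $v_S$-to-$v_{S'}$ paths through the on auxiliary vertices $w_{S'',e}$ with $S'' \notin \{S, S'\}$; the pendant argument above is precisely what makes the subdivision harmless while simultaneously enforcing $n = \Theta(N_F)$ and $m = \Theta(n)$ for every possible family $\mathcal{F}$, so that the hardness guaranteed by the conjecture applies uniformly.
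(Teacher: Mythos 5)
Your reduction is correct and matches the paper's approach: encode sets and elements as vertices of a sparse graph, make the set-vertices initially off, and answer a disjointness query on $(S,S')$ by activating $\{v_S, v_{S'}\}$ (so $d=2$) and testing connectivity, then invoke \Cref{conj:SetDisjointness}. The only difference is cosmetic---you enforce $m=\Theta(n)$ by subdividing each set-element incidence with an auxiliary vertex $w_{S,e}$ (and you correctly verify the subdivision vertices degenerate into harmless pendants), whereas the paper keeps the bipartite incidence graph and pads with isolated dummy vertices; both yield $n=\Theta(N_F)$, $m=\Theta(n)$, and the same contradiction.
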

\begin{proof}
Assume for the contradiction that, for $n$-vertices and $m$-edges graphs where $m=\Theta(n)$, there is an oracle ${\cal A}$  with space $O(n^{2-\epsilon})$ for some constant $\epsilon>0$. Consider an arbitrary set disjointness instance $(F,U)$. Note that trivially we have $|F|,|U|\leq N_{F}$. We construct a graph $H$ as follows. The vertex set of $H$ is $V(H) = \{v_{S}\mid S\in F\}\cup\{v_{u}\mid u\in U\}\cup V'$, where $V'$ contains $2N_{F} - |F|-|U|$ many dummy vertices. The edges are $E(H) = \{\{v_{S},v_{u}\}\mid S\in F,u\in U\text{ s.t. }u\in S\}$. In other words, each set in $F$ and element in $U$ corresponds to a distinct vertex in $H$, but there may be some dummy vertices corresponding to nothing. Edges represent those pairs of $u,S$ s.t. $u\in S$. We point out that adding the dummy vertices $V'$ is just for ensuring the sparsity of $H$, i.e. $|E(H)| = \Theta(|V(H)|)$.

We now show that the oracle ${\cal A}$ can be used to answer the set disjointness queries. We let the initial off-vertices be $V_{\off}(H) = \{v_{S}\mid S\in F\}$ and the initial on-vertices be $V_{\on}(H) = \{v_{u}\mid u\in U\}\cup V'$. When a query $S,S'$ comes, we let $D = \{v_{S},v_{S'}\}$ be the vertices we will switch. The new set of on-vertices after the update is then $V_{\new} = \{v_{S},v_{S'}\}\cup\{v_{u}\mid u\in U\}\cup V'$. Observe that $S\cap S'\neq\emptyset$ if and only if $v_{S}$ and $v_{S'}$ are connected in $H[V_{\new}]$. 

Answering this set disjointness query needs one update operation and one query operation to ${\cal A}$, which takes $f(d)\cdot |V(H)|^{o(1)} = N_{F}^{o(1)}$ time because $d = |D| = 2$ and $|V(H)| = 2N_{F}$. The data structure ${\cal A}$ takes space $O(|V(H)|^{2-\epsilon}) = O(N_{F}^{2-\epsilon})$, contradicting \Cref{conj:SetDisjointness}.

\end{proof}

We can easily extend \Cref{lemma:SpaceLB} to make it work for input graph families with arbitrary fixed densities, as shown in \Cref{coro:SpaceLBAllDensity}.

\begin{corollary}
Assuming \Cref{conj:SetDisjointness} for some large enough $T = N^{o(1)}_{F}$, for any constants $1\leq \delta\leq 2,\ \epsilon>0$ and $n$-vertices $m$-edges graphs where $m = \Theta(n^{\delta})$, there is no fully dynamic sensitivity oracle for subgraph connectivity with space $O(n^{2-\epsilon})$, update time and query time of the form $f(d)\cdot n^{o(1)}$ for any constant $\epsilon>0$.
\label{coro:SpaceLBAllDensity}
\end{corollary}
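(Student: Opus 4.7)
The plan is to reduce from set disjointness exactly as in Lemma \ref{lemma:SpaceLB}, but to pad the constructed graph with edges that are irrelevant to the embedded queries so as to hit the target edge density $m = \Theta(n^\delta)$. The hope is that if the oracle only uses $O(n^{2-\epsilon})$ space and has $f(d)\cdot n^{o(1)}$ update and query time, then applying it to the padded graph would yield a set disjointness data structure violating Conjecture \ref{conj:SetDisjointness}, exactly as in the sparse case.

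Concretely, given an instance $(F,U)$ with total size $N_F$, I would first invoke the construction from the proof of Lemma \ref{lemma:SpaceLB} to produce a graph $H_0$ with $\Theta(N_F)$ vertices and $\Theta(N_F)$ edges, with initial off-vertices $\{v_S\mid S\in F\}$, initial on-vertices $\{v_u\mid u\in U\}\cup V'$, and edges $\{\{v_S,v_u\}\mid u\in S\}$. Next, I would inflate the vertex set by adding extra dummy on-vertices (say bringing $|V(H)|$ to $n := c N_F$ for a constant $c$ large enough that the enlarged dummy set $V'$ has size at least $n/2$), and then add a clique of size $k = \Theta(n^{\delta/2})$ among vertices of $V'$. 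This contributes $\Theta(n^\delta)$ edges and yields a final graph $H$ with $|V(H)| = n$ and $|E(H)| = \Theta(n^\delta)$, as required.

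The critical observation is that the padding does not interfere with queries: every endpoint of a clique edge lies in $V'$, has no edge to any $v_S$ or $v_u$, and remains on throughout any update $D = \{v_S, v_{S'}\}$. Thus the clique forms its own connected components disjoint from $\{v_S,v_{S'}\}$, and the connectivity of $v_S$ and $v_{S'}$ in $H[V_{\new}]$ agrees with their connectivity in $H_0[V_{\new}\cap V(H_0)]$, which by the original reduction decides whether $S\cap S'\neq\emptyset$. A hypothetical oracle ${\cal A}$ with the forbidden parameters would then answer each set disjointness query in $f(2)\cdot n^{o(1)} = N_F^{o(1)}$ time using $O(n^{2-\epsilon}) = O(N_F^{2-\epsilon})$ space, contradicting the strong set disjointness conjecture for some $T = N_F^{o(1)}$.

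The only mildly delicate point is the bookkeeping ensuring enough dummy on-vertices exist to host the padding clique, but since $\delta\leq 2$ we need only $k\leq n$ dummy vertices, and choosing $c$ large enough makes $|V'|\geq n/2\geq k$ trivially. Since the reduction preserves $n$ up to a constant factor, the quantitative guarantees in Lemma \ref{lemma:SpaceLB} transfer directly, so no real obstacle arises beyond the uniform choice of padding.
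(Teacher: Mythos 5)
Your proposal is correct and follows essentially the same approach as the paper: pad the sparse instance from Lemma~\ref{lemma:SpaceLB} with edges that are disconnected from the embedded set-disjointness gadget so as to raise the edge count to $\Theta(n^\delta)$ without affecting any of the reduction's queries. The only cosmetic difference is that the paper treats Lemma~\ref{lemma:SpaceLB} as a black box and glues on a fresh isolated dummy graph $H'$ of density $\Theta(n^\delta)$, whereas you re-open the construction and host a clique inside an enlarged $V'$; both are valid and yield the same conclusion.
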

\begin{proof}
Fix a density parameter $\delta$. Assume for contradiction that such oracles ${\cal A}$ exists for some constant $\epsilon > 0$. Consider an arbitrary input instance $H$ of \Cref{lemma:SpaceLB}, where $H$ is a graph with density $|E(H)|=\Theta(|V(H)|)$. We extend $H$ to an input instance $\hat{H}$ of \Cref{coro:SpaceLBAllDensity} by adding an isolated dummy graph $H'$ to $H$ with $|V(H')| = |V(H)|$ and $|E(H')| = \Theta(|V(H')|^{\delta})$. Adding this isolated dummy graph $H'$ is just to ensure that $\hat{H}$ satisfies the sparsify requirement of \Cref{coro:SpaceLBAllDensity}, i.e. $|E(\hat{H})| = \Theta(|V(\hat{H})|^{\delta})$. Obviously, the oracle ${\cal A}$ for $\hat{H}$ is also a valid oracle for $H$ with the same space, update time and query time. Hence, we obtain an oracle for $H$ with space $O(|V(\hat{H})|^{2-\epsilon}) = O(|V(H)|^{2-\epsilon})$ because $|V(H)| = \Theta(|V(\hat{H})|)$, contradicting \Cref{lemma:SpaceLB}.
\end{proof}

\Cref{coro:SpaceLBAllDensity} shows that the space complexity of our oracle, i.e. $\wtilde{O}(\min\{m(n_{\off}+d),n^{2}\})$ is tight up to subpolynomial factors, for input graph families with general densities.

\section*{Acknowledgements}

We thank Thatchaphol Saranurak for helpful discussions.

\clearpage

\appendix

\section{Omitted Proofs}
\label{sect:OmittedProofs}

\subsection{Proof of \Cref{lemma:MatchingPlayer}}
\label{sect:MatchingPlayerProof}
\begin{lemma}[Deterministic Approximate Vertex-Capacitated Max Flow, Lemma B.2 in \cite{long2022near}] Let $G=(V(G),E(G),c)$ be an $n$-vertex $m$-edge undirected graph with vertex capacity function $c:V(G)\to [1,C]\cup\{\infty\}$. Let $s,t\in V(G)$ be source and sink vertices such that $s$ and $t$ are not directly connected by an edge and $c(s),c(t)=\infty$, and they are the only vertices with infinite capacity. Then, given a parameter $\epsilon$ s.t. $1/\polylog(n)< \epsilon < 1$, there is a deterministic algorithm that computes,
\begin{itemize}
    \item a feasible flow $f$ that sends a $(1-\epsilon)$-fraction of the max flow value from $s$ to $t$, and
    \item a cut $S\subseteq V(G)\setminus\{s,t\}$ whose deletion will disconnect $s$ from $t$, with total capacities no more than $(1+\epsilon)$ times the min cut value.
\end{itemize}
The algorithm runs in $m^{1+o(1)}\cdot\polylog C+n^{o(1)}\cdot C_{\ssum}$ time, where $C$ is the maximum capacity and $C_{\ssum}=\sum_{v\in V(G)\setminus\{s,t\}}c(v)$ denotes the total capacities of finite capacitated vertices.
\label{lemma:DetVertexFlow}
\end{lemma}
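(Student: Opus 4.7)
My plan is to prove the lemma by reducing vertex-capacitated max flow/min cut to the \emph{edge}-capacitated variant on an auxiliary directed graph, and then invoking an off-the-shelf deterministic almost-linear-time approximate edge-capacitated max-flow algorithm (for instance that of Bernstein--Gutenberg--Saranurak [BGS21]). The only real work is to check that the reduction preserves both approximation factors and that the resulting running time fits the additive form stated.

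First I would construct an auxiliary directed graph $G'$ as follows. For each vertex $v \in V(G) \setminus \{s, t\}$, introduce two copies $v^{\text{in}}, v^{\text{out}}$ joined by a directed \emph{split-arc} $(v^{\text{in}}, v^{\text{out}})$ of capacity $c(v)$. Keep $s$ and $t$ unsplit. Each undirected edge $\{u, v\} \in E(G)$ becomes a pair of antiparallel arcs $(u^{\text{out}}, v^{\text{in}})$ and $(v^{\text{out}}, u^{\text{in}})$ of capacity $M$, where I would set $M := C_{\ssum} + 1$; since any feasible $s$--$t$ flow has value at most $C_{\ssum}$, this choice of $M$ forces every (approximate) min edge-cut in $G'$ to consist exclusively of split-arcs. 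The resulting $G'$ has $O(n)$ vertices, $O(m + n)$ arcs, and largest finite capacity $M = O(nC)$. Then I would invoke the deterministic $(1 \pm \epsilon)$-approximate edge-capacitated max-flow / min-cut algorithm of [BGS21] on $G'$ with the same source/sink, obtaining a feasible edge-flow $f'$ of value at least $(1 - \epsilon)\cdot \mathrm{maxflow}(G')$ together with an edge-cut $F' \subseteq E(G')$ of total capacity at most $(1 + \epsilon)\cdot \mathrm{mincut}(G')$.

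Translating back is routine: define $f$ on $G$ by letting the flow on $\{u, v\}$ equal the net directed flow across the two arcs $(u^{\text{out}}, v^{\text{in}})$ and $(v^{\text{out}}, u^{\text{in}})$, and let $S := \{ v : (v^{\text{in}}, v^{\text{out}}) \in F' \}$. By the choice $M = C_{\ssum} + 1$, $F'$ uses no original-edge arcs, so $S$ is a valid vertex cut separating $s$ from $t$ in $G$. The standard equivalences $\mathrm{maxflow}(G) = \mathrm{maxflow}(G')$ and $\mathrm{mincut}(G) = \sum_{v \in S^{*}} c(v) = \mathrm{mincut}(G')$ for the vertex-splitting gadget then transfer the approximation guarantees from $G'$ to $G$ without loss.

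For running time, [BGS21] on an $(O(n), O(m+n))$-sized instance with maximum capacity $M$ produces the cut in time $|E(G')|^{1+o(1)} \cdot \polylog M = m^{1+o(1)} \cdot \polylog C$ (absorbing $\log n$ into the $m^{1+o(1)}$ factor and using $\log M = O(\log n + \log C)$). The additional $n^{o(1)} \cdot C_{\ssum}$ term is the extra cost to output the flow $f'$ explicitly, which in the [BGS21] framework requires routing/decomposing up to $C_{\ssum}$ units of flow through an $n^{o(1)}$-depth routing hierarchy. I expect the main technical obstacle to be verifying that the deterministic subroutine one cites truly achieves this \emph{additive} split $m^{1+o(1)} \cdot \polylog C + n^{o(1)} \cdot C_{\ssum}$ rather than a multiplicative $(m + C_{\ssum})^{1+o(1)}$ bound; if the cited algorithm only states the weaker multiplicative form, a clean workaround is to (i) replace each finite-capacity split-arc of capacity $c(v)$ with $c(v)$ parallel unit-capacity arcs only when recovering the flow, so that the edge-count contribution $C_{\ssum}$ enters only in the output/write-out phase, and (ii) use a deterministic unit-capacity almost-linear-time max-flow routine for the explicit flow recovery, which yields exactly the additive $n^{o(1)} \cdot C_{\ssum}$ term.
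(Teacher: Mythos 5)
You should first note that this paper does not actually prove \Cref{lemma:DetVertexFlow}: it is imported verbatim as Lemma B.2 of \cite{long2022near}, so the only meaningful comparison is with that source's proof, whose substance is precisely the part your proposal leaves to a citation. Your scaffolding (split each finite-capacity vertex $v$ into $v^{\mathrm{in}},v^{\mathrm{out}}$ joined by a capacity-$c(v)$ split arc, give the original edges huge capacity, and transfer flow/cut back) is the textbook reduction, but the key step --- ``invoke the deterministic $(1\pm\epsilon)$-approximate edge-capacitated max-flow/min-cut algorithm of \cite{BGS21} on $G'$'' --- has a genuine gap: your $G'$ is a \emph{directed} graph, while the deterministic almost-linear-time flow results of \cite{BGS21} (built on decremental SSSP plus multiplicative weights) are for \emph{undirected} capacitated graphs; that reference does not supply a deterministic $m^{1+o(1)}$-time approximate max flow for directed instances. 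Nor can you simply drop the orientations: if the arcs $\{u^{\mathrm{out}},v^{\mathrm{in}}\}$ and $\{v^{\mathrm{out}},u^{\mathrm{in}}\}$ are made undirected, an $s$--$t$ path may enter $v^{\mathrm{in}}$ from $u^{\mathrm{out}}$ and exit directly to $w^{\mathrm{out}}$ along the undirected edge arising from $\{v,w\}$, bypassing the split edges of both $v$ and $w$, so the undirected split graph no longer enforces the vertex capacities and its max flow can strictly exceed the vertex-capacitated value. Working around exactly this directed-versus-undirected obstruction is the actual content of Lemma B.2 in \cite{long2022near}, and it is also the origin of the unusual additive $n^{o(1)}\cdot C_{\ssum}$ term in the stated running time; your proposal instead treats that term as a routine flow-output cost, which is speculation rather than proof.

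Two secondary points. First, setting $M=C_{\ssum}+1$ does not force the returned cut to avoid original-edge arcs: the algorithm only guarantees a cut of capacity at most $(1+\epsilon)$ times the min cut, and when the min cut is close to $C_{\ssum}$ this bound exceeds $C_{\ssum}+1$, so a cut containing one $M$-arc could be returned and your $S$ would then fail to separate $s$ from $t$; you need $M>(1+\epsilon)C_{\ssum}$ (an easy fix, but the argument as written is wrong). Second, your fallback for the running time --- replacing split arcs by $c(v)$ parallel unit-capacity arcs and running ``a deterministic unit-capacity almost-linear-time max-flow routine'' for flow recovery --- again presupposes a deterministic directed flow primitive that the cited literature does not provide. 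So while the reduction skeleton is the right starting point, the proposal does not yet establish the lemma with the claimed deterministic guarantees.
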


\begin{proof}[Proof of \Cref{lemma:MatchingPlayer}]
Without loss of generality, we assume $|A|\leq |B|$. We first construct an auxiliary vertex-capacitated undirected graph $G' = (V(G'),E(G'),c)$ as follows. The vertex set is $V(G') = \{s,t\}\cup\{u_{A}\mid u\in A\}\cup \{v_{B}\mid v\in B\}\cup V(G)$, where (i) $s$ and $t$ are the source and sink vertex with capacity $c(s) = c(t) = \infty$; (2) $\{u_{A}\mid u\in A\}$ and $\{v_{B}\mid v\in B\}$ are artificial vertices with capacity $1$, which one-one corresponds to vertices in $A$ and $B$; (3) $V(G)$ are original graph vertices and each of them has capacity set to be $\lceil 1/\phi\rceil$. The edge set $E(G') = \{\{s,u_{A}\},\{u_{A},u\}\mid u\in A\}\cup E(G)\cup \{\{v,v_{B}\},\{v_{B},t\}\mid v\in B\}$. Note that there is no capacity constraint on edges.

We run the approximate max flow algorithm from \Cref{lemma:DetVertexFlow} on $G'$ with parameter $\epsilon = 1/10$. Let the output be a feasible $s$-$t$ flow $f$ on $G'$ and a vertex cut $(L',S',R')$ of $G'$ with $s\in L'$ and $t\in R'$. Precisely, \Cref{lemma:DetVertexFlow} only gives $S'$, and we can let $L'\subseteq V(G')$ be the vertices connected to $s$ in $G'\setminus S'$ and $R' = V(G')\setminus(L'\cup S')$. The algorithm will guarantee that $\val(f)\geq \frac{1-\epsilon}{1+\epsilon}\cdot c(S')\geq c(S')/2$. We then consider the following two cases.

\

\noindent\textbf{Case (1): $\val(f)< |A|/3$.} In this case, we have $c(S')\leq 2\cdot\val(f)\leq 2|A|/3$. We define the vertex cut $(L,S,R)$ on $G$ by letting $L = L'\cap V(G)$, $S = S'\cap V(G)$ and $R = R'\cap V(G)$. Note that $(L,S,R)$ is a vertex cut of $G$ follows straightforwardly that $(L',S',R')$ is a vertex cut on $G'$. 

Now we show $|L\cap U|\geq |A|/3$. First, there are at most $c(S')$ many terminals $u\in A$ s.t. $u_{A}\in S'$ or $u\in S'$ because both $c(u_{A}),c(u)\geq 1$. In other words, there are at least $|A|-c(S')$ many terminals $u\in S'$ s.t. $u_{A}\notin S'$ and $u\notin S'$, which means $u\in L'$. This concludes $|L\cap U|\geq |A|-c(S')\geq |A|/3$. A similar argument will lead to $|R\cap U|\geq |B|-c(S')\geq |A|/3$, because $|B|\geq |A|$. Hence, the cut $(L,S,R)$ is exactly what we desire (if $|L\cup U|>|R\cup U|$, we can just swap $L$ and $R$).

\

\noindent\textbf{Case (2): $\val(f)\geq |A|/3$.} We first round $f$ into an integral feasible flow $f_{\it}$ with value $\val(f_{\it})\geq \lceil |A|/3\rceil$ using the flow rounding algorithm in \cite{kang2015flow}, which takes $\tilde{O}(m)$ time. Then we decompose the integral flow $f_{\it}$ into flow paths implicitly (that is, for each flow path, we only computes its endpoints $u_{A}$ and $v_{B}$, ignoring the trivial endpoints $s$ and $t$), which can be done in $\tilde{O}(m)$ time using dynamic trees \cite{sleator1981data}. This leads to a matching $M$ between $A$ and $B$ s.t. $|M|=\val(f_{\it})\geq |A|/3$ and there is an embedding $\Pi_{M\to G}$ of $M$ into $G$ with vertex congestion $\lceil 1/\phi\rceil$ (because in $G'$, each original graph vertex has capacity $\lceil 1/\phi\rceil$). This matching $M$ is exactly what we want. The edges $E(\Pi_{M\to G})$ used by the embedding $\Pi_{M\to G}$ are exactly the original graph edges $e$ with $f_{\it}(e)>0$.

\

The running time is dominated by the time taken by the approximate max flow algorithm, which is $m^{1+o(1)}$ by \Cref{lemma:DetVertexFlow}.
\end{proof}

\subsection{Proof of \Cref{thm:LowDegreeHierarchy}}
\label{sect:CorrectnessLowDegreeHierarchy}

\begin{proof}
The components ${\cal C}$ forms a laminar set (i.e. it satisfies properties (1) and (2)) because $V(G) = U'_{1}\supseteq U'_{2}\supseteq ... \supseteq U'_{p}$ and the level-$i$ components are connected components of $G\setminus U'_{i+1}$. In particular, $V(G)$ is the unique component at the top level $p$ because $U'_{p+1} = \emptyset$. Furthermore, for each level $i$, the algorithm constructs $U_{i}$ as the terminal set at level $i$. For each $\gamma\in {\cal C}_{i}$, if we let $U(\gamma) = U_{i}\cap V(\gamma)$, then it will be consistent with the definition in (3). 

Next, we show the properties of Stenier trees. Property (4) is because the Steiner trees outputted by \Cref{lemma:SFDecomp} are vertex-disjoint. Property (5) is just a definition. It remains to show property (6). Consider a component $\gamma\in {\cal C}_{i}$ with $U(\gamma)\neq \emptyset$. By the algorithm, $\gamma$ is a connected component of $G\setminus U'_{i+1}$. It must be inside some connected component $Y$ of $G\setminus X_{i+1}$ because $X_{i+1}\subseteq U'_{i+1}$. From $U(\gamma) = U_{i}\cap V(\gamma) = (X_{i}\setminus (X_{i+1}\cup ...\cup X_{p}))\cap \gamma\neq\emptyset$, we have $X_{i}\cap V(Y)\neq\emptyset$. Therefore, by \Cref{lemma:SFDecomp}, there is a Steiner tree $\tau_{Y}\in{\cal T}_{i}$ spanning $(X_{i}\setminus X_{i+1})\cap V(Y)$ on $Y$ with maximum degree $O(\log^{2}|U|/\epsilon)$. In (5), we define $U(\tau_{Y}) = U_{i}\cap V(\tau_{Y})$. Because $U_{i} = X_{i}\setminus(X_{i+1}\cup ...\cup X_{p})$, $(X_{i}\setminus X_{i+1})\cap V(Y)\subseteq V(\tau_{Y})\subseteq V(Y)$, we have $U(\tau_{Y}) = U_{i}\cap V(Y)$, which means $U(\gamma) = U_{i}\cap V(\gamma) \subseteq U(\tau_{Y})$.
\end{proof}

\newcommand{\etalchar}[1]{$^{#1}$}

\end{document}